\documentclass[preprint,12pt,3p,authoryear]{elsarticle}
\usepackage[colorlinks, linkcolor=skyblue, citecolor=skyblue]{hyperref}
\usepackage{booktabs} 
\usepackage{array} 
\usepackage{subcaption}

\usepackage{multirow}
\usepackage{tabularx}
\usepackage{lineno}
\usepackage{algorithm}
\usepackage{algorithmic}    % 用于 \STATE, \FOR, \IF, \REQUIRE, \ENSURE 等

\newcommand*\patchAmsMathEnvironmentForLineno[1]{%
  \expandafter\let\csname old#1\expandafter\endcsname\csname #1\endcsname
  \expandafter\let\csname oldend#1\expandafter\endcsname\csname end#1\endcsname
  \renewenvironment{#1}%
     {\linenomath\csname old#1\endcsname}%
     {\csname oldend#1\endcsname\endlinenomath}}%
\newcommand*\patchBothAmsMathEnvironmentsForLineno[1]{%
  \patchAmsMathEnvironmentForLineno{#1}%
  \patchAmsMathEnvironmentForLineno{#1*}}%
\AtBeginDocument{%
\patchBothAmsMathEnvironmentsForLineno{equation}%
\patchBothAmsMathEnvironmentsForLineno{align}%
\patchBothAmsMathEnvironmentsForLineno{flalign}%
\patchBothAmsMathEnvironmentsForLineno{alignat}%
\patchBothAmsMathEnvironmentsForLineno{gather}%
\patchBothAmsMathEnvironmentsForLineno{multline}%
}
\usepackage{amsmath}
\usepackage{amssymb}%% The amsthm package provides extended theorem environments
\usepackage{amsthm}

\newtheorem{theorem}{Theorem}[section]
\newtheorem{lemma}[theorem]{Lemma}
\newtheorem{proposition}[theorem]{Proposition}
\newtheorem{corollary}[theorem]{Corollary}

\theoremstyle{definition}
\newtheorem{definition}[theorem]{Definition}

\theoremstyle{remark}
\newtheorem{remark}[theorem]{Remark}
\journal{Sustainable Cities and Society}

\begin{document}

\begin{frontmatter}

\title{Urban Congestion Patterns under High Electric Vehicle Penetration: A Case Study of 10 U.S. Cities}
%\tnotetext[label0]{This is only an example}

\author[label1,label2,label3]{Xiaohan Xu}

\author[label4]{Wei Ma\corref{cor1}}

\author[label5]{Zhiheng Shi}

\author[label4]{Xiaotong Xu}

\author[label2,label1,label3]{Bin He\corref{cor1}}

\author[label2,label5]{Kairui Feng\corref{cor1}}

\address[label1]{College of Electronics and Information Engineering, Tongji University, Shanghai 201804, China}
\address[label2]{State Key Laboratory of Autonomous Intelligent Unmanned Systems, Shanghai 201210, China}
\address[label3]{Frontiers Science Center for Intelligent Autonomous Systems, Shanghai 201210, China}
\address[label4]{Department of Civil and Environmental Engineering, The Hong Kong Polytechnic University, Hong Kong 999077, China}
\address[label5]{Shanghai Innovation Institute, Shanghai 200030, China}

\cortext[cor1]{Corresponding author: wei.w.ma@polyu.edu.hk; hebin@tongji.edu.cn; kelvinfkr@tongji.edu.cn}

%%
%% Start line numbering here if you want
%%
 % \linenumbers
 % \renewcommand\linenumberfont{\normalfont\small}
%\linenumbers

\begin{abstract}

With the global energy transition and the rapid penetration of electric vehicles (EVs), the widening travel cost gap between EVs and gasoline vehicles (GVs) increasingly affects commuters' route choices and may reshape urban congestion patterns. Existing research remains in its preliminary exploratory phase. On the one hand, multi-class models do not account for fixed user class scenarios, which may not align with actual commuters; on the other hand, there is a significant lack of systematic quantitative analysis based on real-world complex road networks across multiple cities. As a result, the congestion effects induced by heterogeneous GV–EV cost structures may be mischaracterized or substantially underestimated. To address these limitations, this paper proposes a multi-user equilibrium (MUE) assignment model for mixed GV-EV traffic, constructs a dual algorithm with convergence guarantees, and designs multi-dimensional evaluation metrics for congestion patterns. Using 10 representative U.S. cities as a case study, this research explores the evolution trends of traffic congestion under different EV penetration scenarios based on real city-level road networks and block-level commuter origin-destination (OD) demand. The results show that full EV penetration reduces average system travel time by 2.27\%--10.78\% across the 10 cities, with New Orleans achieving the largest reduction (10.78\%) and San Francisco the smallest (2.27\%), but the effectiveness of alleviating congestion exhibits significant urban heterogeneity. Moreover, for cities with sufficient network redundancy, benefits are primarily concentrated during the low to medium EV penetration stage (0--0.5), though cities with topological constraints (e.g., San Francisco, Boston) show more limited improvements throughout all penetration levels. This paper can provide a foundation for formulating differentiated urban planning, sustainable transportation development, and congestion management policies.

\end{abstract}

\begin{keyword}
%% keywords here, in the form: keyword \sep keyword
Electric vehicle \sep  Gasoline vehicle \sep Urban congestion pattern \sep Multi-user equilibrium \sep  Dual algorithm
%% MSC codes here, in the form: \MSC code \sep code
%% or \MSC[2008] code \sep code (2000 is the default)
\end{keyword}

\end{frontmatter}

%%
%% Start line numbering here if you want
%%
% \linenumbers
% \renewcommand\linenumberfont{\normalfont\small}
%\linenumbers
%% main text

\section{Introduction}

The rapid advancement of electric vehicle (EV) technology and the accelerated global energy transition are profoundly reshaping urban transportation systems. With ongoing reductions in battery costs, the expansion of charging infrastructure, and progressively stricter energy conservation and emission mitigation policies, the share of EVs in the global passenger car stock is projected to increase steadily in the coming years \citep{orsi2021sustainability, zeng2024optimal}, making the coexistence of EVs and gasoline vehicles (GVs) a salient characteristic of urban transportation systems \citep{zeng2025optimizing}. In this mixed fleet structure, the growing gap in unit travel costs between EVs and GVs is bound to influence travelers' route choices and travel decisions \citep{liu2024eco}, thereby exerting a systemic impact on the spatiotemporal distribution of urban traffic congestion. On the one hand, improvements in electric drivetrain efficiency and the decarbonization of the power grid have made the unit cost per kilometer of EVs more advantageous compared to traditional GVs \citep{taiebat2022widespread}. On the other hand, governments worldwide have reinforced this transition through a combination of policies, including purchase subsidies, exemptions from license plate quotas, and preferential road access \citep{lu2022analysis, hu2025gas}. Therefore, a comprehensive exploration of how the widespread penetration of EVs reshapes urban traffic distribution and congestion patterns is crucial for gaining a deeper understanding of their impact on transportation systems.

In this context, differences in travel utilities between EV and GV users constitute the fundamental driver of their heterogeneous route choices. Existing studies primarily employ the user equilibrium (UE) \citep{sheffi1985urban}  model to characterize the path selection behavior of different vehicle types, thereby providing a foundation for understanding the evolution of urban traffic congestion patterns. Early studies primarily focused on mixed networks of cars and public transit, incorporating both modes into a single equilibrium framework to analyze their mutual influence \citep{florian1977traffic}. Subsequently, \citet{abdulaal1979methods} proposed an approach that jointly solves mode split and traffic assignment within a single equilibrium framework, in which travelers' decisions on both mode choice and route choice are modeled simultaneously, allowing them to trade off generalized travel costs among private cars, public transit, and other modes while ensuring that mode and route choices are jointly in equilibrium. For example, \citet{zhang2020practical} proposed a multi-class equilibrium model that jointly considers walking, public transit, private vehicles, and transfer combinations, providing a systematic tool for analyzing the traffic diversion and congestion mechanisms within multi-modal transit systems. \citet{liu2025integrating} focused on a multi-class travel system integrating ride-hailing and public transit, proposing a stochastic UE (SUE) model comprising private cars, public transit, and ride-hailing.

Traditional multi-user equilibrium (MUE) models typically assume travelers can freely choose different travel modes before departure, differing from the fixed user classes in GV-EV mixed traffic scenarios. Moreover, MUE assignment for GV-EV mixed traffic flows remains in its preliminary exploratory phase. \citet{liang2024vehicle} proposed an EV-UE-constrained vehicle-to-grid framework, integrating EV travel and charging behavior into distribution network planning. \citet{sun2025low} characterized a modified mixed UE model incorporating EV flows and a comprehensive tri-level robust planning of coupled transportation and power distribution networks. These studies indicate that explicitly embedding the traffic assignment process of EVs within power-transportation coupled planning has become a significant development direction for integrated analysis of low-carbon energy and transportation systems. However, existing research typically treats EVs as a single vehicle category among numerous options for travelers. Consequently, an MUE assignment model for GV-EV mixed traffic flows that systematically addresses the GV-EV cost differential and its impact on urban congestion patterns remains to be developed. Furthermore, available results predominantly focus on idealized or small-to-medium-scale networks, lacking systematic quantitative analysis of real urban road networks and actual GV-EV mixed demand. In particular, there is a dearth of empirical evidence based on real data regarding the reshaping of congestion patterns due to differences in operating costs and path redistribution.

In recent years, EVs have achieved high penetration rates in multiple regions worldwide. For example, the International Energy Agency \citep{IEA2025EV} indicates that in 2024, EV sales in China accounted for nearly 50\% of the country's total new vehicle sales. However, existing attention to the impact of widespread EV penetration on transportation systems remains insufficient. Therefore, to address these gaps, this paper investigates urban congestion patterns under high EV penetration. We develop an MUE assignment model capable of characterizing the competitive relationship between GVs and EVs, and design a convergent iterative algorithm based on a dual solution framework for convex optimization. Subsequently, a set of congestion pattern evaluation metrics is devised, comprising average travel time, potential savings, volume over capacity (VOC), road utilization rate, link congested time, and difference in delay factor. Finally, taking 10 representative U.S. cities as examples, the study systematically simulates and compares the evolution trends of transportation systems under different EV penetration scenarios based on real road networks and commuting origin-destination (OD) demand data. The research results can inform differentiated policy recommendations for sustainable urban transportation planning, road design, and congestion management. Therefore, the main contributions of this paper include the following:

{
\begin{enumerate}
\item \textbf{Fixed-class MUE model with convergent dual algorithm.} Unlike classical multi-class models where users freely switch between modes, we formulate a fixed-class MUE model specifically tailored for GV-EV scenarios, where the key distinction lies in treating vehicle class as exogenously determined by vehicle ownership rather than endogenous mode choice. We prove the existence and uniqueness of equilibrium, and develop a bi-conjugate Frank-Wolfe algorithm with guaranteed convergence.

\item \textbf{Multi-dimensional congestion pattern metrics.} We design six complementary metrics---average travel time, potential savings, VOC, road utilization rate, link congested time, and difference in delay factor---to characterize congestion evolution from system-level efficiency, link-level saturation, and spatial distribution perspectives.

\item \textbf{Empirical analysis across 10 heterogeneous U.S. cities.} Using real OSM road networks and Census block-level commuting demand, we reveal that: (i) full EV penetration reduces system travel time by 2.27\%--10.78\%; (ii) benefits concentrate in the 0--0.5 penetration range with diminishing returns thereafter; (iii) congestion reshaping patterns vary substantially with urban topology.
\end{enumerate}
}

The remainder of this paper is organized as follows: Section 2 reviews existing literature. Section 3 introduces the methodology, including the problem description, the MUE assignment model, and evaluation metrics. Section 4 presents an analysis using 10 U.S. cities as case studies. Section 5 concludes the paper and outlines future work.

\section{Literature Review}

\subsection{The Impact of EV Penetration on Urban Transportation Systems}
In recent years, against the backdrop of global carbon peaking and carbon neutrality goals and energy transition, the promotion of EVs has been regarded as a key driver for reducing emissions in the transportation sector and transforming the energy structure. A large body of research shows that, compared with conventional GVs, EVs offer substantial advantages in terms of life-cycle greenhouse gas emissions, primary energy consumption, and local air pollution, and therefore constitute one of the key technological pathways for deep decarbonization of the transport sector \citep{li2025study, kobashi2025enabling}. The rapid penetration of EVs is primarily driven by two groups of factors. On the one hand, improvements in electric drivetrain efficiency, the gradual decarbonization of power systems, and the continuous decline in battery costs have jointly reduced the generalized cost per kilometer of travel, enabling EVs to approach—and in some cases surpass—GVs in terms of life-cycle economic competitiveness \citep{kersey2022rapid}. On the other hand, governments around the world have deployed a mix of policies—such as purchase subsidies, vehicle registration or purchase tax exemptions, preferential road access, and large-scale investment in charging infrastructure—that substantially lower the adoption threshold for consumers and have driven EV ownership onto a rapidly increasing trajectory \citep{pardo2021sustainable, tilly2024sustainable}. Therefore, economic incentives, infrastructure accessibility, and the level of urbanization jointly shape the spatiotemporal diffusion of EVs \citep{maybury2022mathematical, hu2025gas}.

As EV penetration has risen in recent years, an increasing number of studies have begun to examine the feedback effects of EV adoption on urban transport systems—particularly how EVs may alter people's travel behavior (trip frequency, route/mode choice) and how those changes can affect traffic flow distribution, congestion patterns, and transportation system efficiency. \citet{morton2022effect} suggested that the reduction in operating and ownership costs of EVs will induce users to increase travel frequency or extend travel distances, thereby increasing the overall traffic demand density. Meanwhile, research has revealed that the introduction of EVs may alter the structure of travel modes.  For example, \citet{kosmidis2023electric} stated that when private EVs, slow mobility (walking/cycling), and public transit coexist in a multi-modal system, some trips that might otherwise be made by slow mobility or public transit will be replaced by EVs. In turn, the change in travel behavior will influence or reshape traffic congestion patterns, which are not always positive or consistent. On the one hand, if urban planning integrates charging infrastructure deployment, public transit corridor development, and travel demand management, the introduction of EVs could alleviate some of the congestion and emissions pressures traditionally dominated by GVs. For example, \citet{taamneh2025prospects} discovered that integrating EVs—particularly autonomous EVs—into urban transportation systems has the potential to improve traffic efficiency, reduce congestion, and enhance travel safety. On the other hand, if the widespread penetration of EVs lacks appropriate institutional design—such as traffic demand management, equitable infrastructure deployment, and charging incentive mechanisms—it may lead to a dual increase in traffic volume and energy system load due to higher travel frequency and mode substitution. \citet{grigorev2021will} indicated that under scenarios of high EV penetration rates coupled with inadequate or unsynchronized expansion of charging infrastructure, new congestion and energy consumption bottlenecks are likely to emerge at fast-charging stations and within transportation networks.

Although an increasing number of studies have begun to focus on the impact of EV penetration on urban transportation systems, several key limitations remain that require further exploration. Most studies remain at the level of qualitative judgments or macro-level trend descriptions, lacking detailed and quantitative insights into the micro-mechanisms of congestion pattern reshaping. In particular, how the traffic flow redistribution caused by EV penetration specifically alters the operational conditions of each road segment or key node within the network, as well as the spatial distribution and evolution of these changes, has not been fully explored. More importantly, the potential differential and diverse impacts brought about by the inherent heterogeneity of cities (such as terrain conditions, network topology, road hierarchy configuration, and initial congestion patterns) are generally overlooked. A systematic study integrating real, complex road network systems from multiple cities is still absent, limiting the generalizability of the conclusions and their policy implications.

\subsection{Multi-Class Traffic Assignment Model}

The cornerstone of traffic assignment research is the UE theory, which is widely used to allocate traffic demand under given traffic flow conditions \citep{van1980most}. The UE model is based on Wardrop's First Principle \citep{wardrop1952correspondence}, which states that traffic flows eventually stabilize when each traveler seeks to minimize their individual travel cost. \citet{daganzo1977stochastic} proposed the SUE traffic assignment model within a stochastic utility framework, incorporating travelers' random perceptions of path travel times into path selection probabilities. \citet{lou2010robust} applied the satisfaction mechanism to route choice problems (where travelers select acceptable but not necessarily shortest paths) and developed the boundedly rational UE (BRUE) flow model. \citet{zhang2015modeling} developed the inertial UE (IUE) model, which emphasizes the stability of travel habits, i.e., travelers may repeatedly use a particular route due to familiarity or convenience rather than performing a global optimization each time. \citet{ding2023status} presented a status quo-dependent UE (SDUE) model that simultaneously accounts for travelers' bounded rationality, behavioral inertia, and adaptive value of time.

To address increasingly complex urban transportation systems, the research paradigm for traffic assignment models has expanded from single-class to multi-class integrated analysis. This multi-class assignment model aims to systematically characterize the competitive, cooperative, and infrastructure-sharing relationships among various classes such as private cars, public transit, and non-motorized transportation. By constructing a generalized cost function incorporating attributes such as time, monetary cost, and comfort, these models can reveal the path choice behavior of heterogeneous travelers within multi-modal networks. The extension to multiple user classes was pioneered by \citet{dafermos11969traffic}, who established the theoretical foundations for heterogeneous user equilibrium. \citet{florian1977traffic} proposed a joint car-public transit traffic equilibrium model, which separately models car traffic flow and public transit ridership while characterizing their interactions through shared road segment impedance, making it one of the seminal early contributions to multi-class traffic assignment. \citet{florian2002multi} further developed a network equilibrium model for multi-class and variable demand scenarios, uniformly representing various modes such as private cars and public transit as a variational inequality problem. Subsequently, the multi-class equilibrium model has been continuously expanded upon the traditional “car-public transit” framework, gradually enabling the joint consideration of multiple travel modes—including walking, public transit, private cars, and ride-hailing services—along with their transfer combinations \citep{sun2021multi, najmi2023multimodal, liu2025integrating}.  This advancement provides analytical methods for studying traffic diversion and congestion mechanisms within multi-class transportation systems. Most of the aforementioned multi-class traffic assignment models focus on the coupled analysis of public transit with other modes (such as private cars, non-motorized travel, etc.). However, coordinating the fixed schedule operation of public transit with the continuous flow of private cars remains a significant challenge.

With the widespread adoption of EVs, an increasing number of studies in recent years have begun to examine their impacts on urban transportation systems, and incorporating EVs into traffic assignment models has emerged as a major research focus. Early work by \citet{dong2014electric} and \citet{he2014network} established frameworks for incorporating range anxiety into traffic assignment, showing that limited driving range fundamentally alters EV route choice behavior.\ \citet{davazdah2023nonlinear} proposed a mixed complementary traffic assignment model applicable to networks involving EVs and GVs, where the demand for each vehicle category depends on path characteristics and the availability of charging stations along the route. \citet{zeng2025optimizing} developed a joint model for GV-EV mixed flow and charging station siting based on SUE, analyzing the impact of different siting schemes on balanced flow distribution, travel costs, and environmental benefits. \citet{zhang2025impact} constructed a traffic assignment model targeting GV-EV mixed networks to quantitatively assess the impact of EV penetration on balanced traffic distribution and road network carbon emissions. Their findings indicate that under certain conditions, EVs may alter spatial patterns of congestion and emission hotspots through path redistribution. Our model abstracts away range constraints, which is reasonable for urban commute trips (typically $<$30 km) well within modern EV ranges (300+ km). A second major line of research adopts the perspective of coupled transportation–power systems \citep{sheng2023emission}, embedding EV traffic assignment into distribution network planning or operational optimization. Within this framework, UE or SUE models are employed to characterize EVs’ route and charging choices on the road network, and the resulting redistribution of traffic flows is then used to assess impacts on distribution network loading, system resilience, and renewable energy integration \citep{qiao2022distributed, liang2024vehicle}.

In summary, current research on the integration of EVs into traffic flow assignment remains largely in a relatively early exploratory phase. On the one hand, existing research predominantly focuses on the potential impact of EV penetration on travel demand, whereas our study concentrates more on the relatively stable OD structure of daily trips within typical commuting scenarios dominated by household charging. Previous frameworks do not account for fixed user class scenarios, and they also offer relatively limited insights into the intricate reshaping of congestion patterns resulting from the coexistence and games between EVs and conventional GVs within the same road network. On the other hand, current simulation validations remain confined to small to medium scale idealized/designed test networks, failing to adequately capture the complexity of congestion evolution at actual urban scales. Against this backdrop, how high EV penetration alters traffic flow spatial distribution and congestion patterns within real urban road networks characterized by significant heterogeneity remains missing from systematic quantitative assessments. Consequently, the applicability and guidance value of related conclusions in urban transportation management and planning practices are somewhat constrained.

\section{Methodology}

Before proceeding, we formally define three key concepts used throughout this paper:

\begin{definition}[Congestion Pattern]\label{def:congestion_pattern} A \textbf{congestion pattern} refers to a multi-dimensional characterization of traffic congestion across a road network, comprising: (i) system-level efficiency measured by average travel time $T_{MUE}$ and potential savings $P_s$; (ii) link-level saturation measured by VOC distribution; and (iii) spatial distribution measured by road utilization rate, link congested time, and difference in delay factor across road hierarchies. We present all three dimensions (travel time, VOC, spatial distribution) separately rather than as a composite index, allowing readers to assess trade-offs. When travel time decreases but VOC increases, we interpret this as “congestion redistribution”---traffic concentrates on fewer high-capacity links that still provide faster travel than dispersed local roads. System efficiency improves even as backbone utilization rises.
\end{definition}

\begin{definition}[Plateau Phase]\label{def:plateau_phase}
A \textbf{plateau phase} is an interval $[R_e^{(i)}, R_e^{(i+1)}]$ of EV penetration rates where the marginal improvement in system travel time is negligible, formally defined as:
\begin{equation}
    \left| \frac{\partial T_{MUE}}{\partial R_e} \right| < \epsilon, \quad \forall R_e \in [R_e^{(i)}, R_e^{(i+1)}],
\end{equation}
where $\epsilon > 0$ is a threshold (empirically $\epsilon = 0.07$ min per 10\% penetration is chosen based on the empirical distribution of sensitivity values across our 10-city sample, representing approximately one standard deviation below the mean gradient magnitude). Plateau phases typically span 5--15\% of the penetration range before transition to the next active set configuration, as observed empirically in our 10-city sample. This is due to the sequential route displacement mechanism: EVs progressively displace GVs route-by-route, and even minimal EV introduction can fully capture certain cost-advantaged routes. Consequently, pure plateaus (with strictly zero marginal benefit) are rare; instead, most penetration intervals exhibit at least some incremental displacement.
\end{definition}

\begin{definition}[Transition Zone]\label{def:transition_zone}
The \textbf{transition zone} (also termed ``cascading benefit zone'') is the penetration range where the system exhibits pronounced, nonlinear efficiency gains with $|\partial T_{MUE}/\partial R_e| > 3\epsilon$. The underlying mechanism is as follows: at $R_e = 0$, no displacement occurs and marginal benefit is zero; as $R_e$ increases past a critical threshold, a cascading effect emerges---each additional EV simultaneously displaces GV flows across multiple congested links, triggering network-wide flow redistribution and rapidly amplifying system benefits. Beyond this zone, the network approaches EV-dominated equilibrium, and marginal returns diminish as fewer GV routes remain available for displacement.
\end{definition}

We note that these categories represent idealized types on a continuum. Cities may exhibit hybrid characteristics, and the classification is sensitive to threshold choice. The trichotomy serves as a useful heuristic for policy discussion rather than a strict partition.

\subsection{Problem Description}

The traditional UE model typically does not explicitly account for heterogeneity across vehicle types, thus struggling to reflect how cost differences between GVs and EVs affect urban congestion patterns. However, the operational cost difference between GVs and EVs has become a key factor influencing route choices for users. To elucidate the core mechanism of this phenomenon, we first introduce a typical dual-route network problem (Fig. \ref{fig:1}). As shown in Fig. \ref{fig:1}a), the two routes, $a$ and $b$, from the common origin (home) to the destination (work) possess distinct characteristic attributes: length, free-flow speed, and capacity. Route $a$ is 6.0 miles in length, with a design capacity of 120 vehicles per hour (vph) and a free-flow speed of 30 miles per hour (mph). Route $b$ is 7.5 miles in length, with a design capacity of 200 vph and a free-flow speed of 40 mph. We assume that travel time costs for routes $a$ and $b$ are determined by a simple linear Bureau of Public Roads (BPR) function \citep{manual1964bureau}, given by:
\begin{equation}
    t\left ( x_{a}  \right ) =\frac{6}{30}\left ( 1+\left ( \frac{x_{a}}{120}  \right )  \right ),  
\end{equation}
\begin{equation}
t\left ( x_{b}  \right ) =\frac{7.5}{40}\left ( 1+\left ( \frac{x_{b}}{200}  \right )  \right ), 
\end{equation}
where $t\left ( x_{a}  \right )$ and $t\left ( x_{b}  \right )$ denote the travel time costs for routes $a$ and $b$, respectively; $x_{a}$ and $x_{b}$  denote the traffic flow on routes $a$ and $b$, respectively.

Consider a total demand of $q = 100$ vehicles for this OD pair. Under the traditional UE model (minimizing travel time only), the equilibrium allocation assigns 31.2 vehicles to route $a$ and 68.8 vehicles to route $b$, with both routes achieving equal travel times of 15.12 minutes. We now extend the model to incorporate vehicle operating costs. Let $C_m$ denote the per-mile operating cost for vehicle class $m \in \{gv, ev\}$, and let $\gamma$ represent the value of time (\$/min). The generalized travel cost on route $k$ becomes:
\begin{equation}
    C_k^m = \gamma \cdot t_k(x_k) + C_m \cdot l_k,
\end{equation}
where $l_k$ is the route length. For GVs with $C_{gv} = \$0.6$/mile and $\gamma = \$0.3$/min, the equilibrium shifts: route $a$ now carries 50.4 vehicles (travel time: 17.04 min) while route $b$ carries 49.6 vehicles (travel time: 14.04 min), yielding a system average travel time of 15.55 minutes.

To quantify the effect of gasoline price fluctuations (such as those triggered by geographic location, economic policies, or supply-side shocks) on traffic flow distribution, Fig. \ref{fig:1}b) and \ref{fig:1}c) present the UE assignment results for the dual-route system under varying GV costs. For every \$0.1 per mile increase in GV cost, the number of vehicles on route $a$ increases by 3.2, with a corresponding travel time increase of 0.32 minutes, while route $b$ experiences a reduction of 3.2 vehicles and a decrease in travel time of 0.18 minutes. When the initial GV cost is 0, route $b$ carries 68.8\% of the traffic flow due to its shorter baseline travel time. As GV costs rise, drivers increasingly prioritize the cost-effectiveness of travel when selecting routes. Consequently, the advantage of the shorter route $a$ becomes more pronounced, leading to a potential shift of more vehicles to route $a$, while route $b$ is burdened with higher travel costs. This shift in route selection triggers changes in congestion patterns on route $a$ and route $b$, subsequently affecting the overall travel time: on route $a$, the travel time increases due to intensified congestion caused by the growing number of vehicles, while route $b$ experiences a decline in travel time as reduced traffic volume facilitates smoother flow conditions. The system's average travel time demonstrates a progressive increase with rising GV costs. Under conditions of low GV costs, travel time exhibits a relatively gradual increase, in contrast to the highly pronounced temporal expansion observed under elevated GV cost scenarios. The variation in potential savings similarly reflects the diminishing returns of GV cost increases on the travel efficiency of the transportation system. This metric quantifies the potential travel time efficiency gain as a percentage, calculated by comparing the system's performance at a specific GV cost against the benchmark of maximum and minimum average travel times. At lower GV costs (less than \$0.2/mile), potential savings reach high levels approaching 100\%. As GV costs continue to rise to higher levels (particularly above \$0.8/mile), the percent savings drop sharply.
\begin{figure}[ht]
    \centering
    \includegraphics[width=\textwidth]{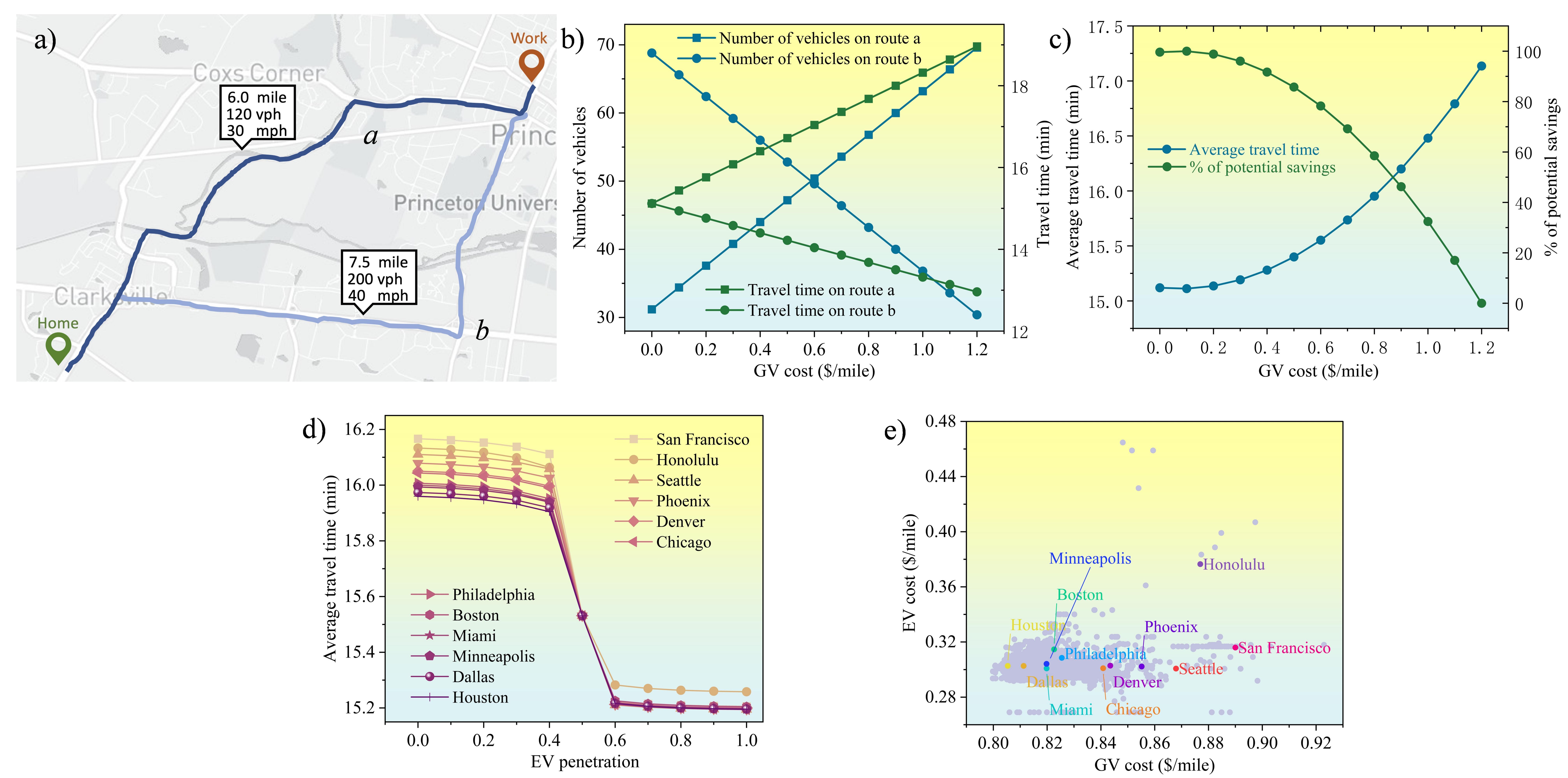}
    \caption{Illustrative dual-route network and sensitivity analysis. (a) Network topology with heterogeneous route attributes; b) Changes in the number of vehicles and travel time on the two routes as the GV cost varies. c) Changes in the average travel time and percentage of potential savings for the system as the GV cost varies. d) The impact of EV penetration on average travel time for the system in a set of representative U.S. cities. e) The distribution of GV and EV costs across U.S. cities.}
    \label{fig:1}
\end{figure}

Furthermore, what would happen if the vehicles were switched to EVs? We systematically estimate the per-mile operating cost for GVs and EVs across most U.S. cities, as shown in Fig. \ref{fig:1}e). Quantitative analysis demonstrates a clear economic advantage for EVs, exhibiting per-mile costs between \$0.269 and \$0.465, while conventional GVs maintain significantly higher operating costs ranging from \$0.799 to \$0.923 per mile. The data analysis also reveals significant differences among cities in terms of the operating costs of GVs and EVs. For example, in Houston, both vehicle types exhibit relatively low costs; in San Francisco, GV costs are notably higher; while in Honolulu, both GV and EV costs are simultaneously high. Fig. \ref{fig:2} tabulates the spatial distribution characteristics of GV and EV costs in the United States. Fig. \ref{fig:2}a) illustrates the distribution patterns of GV and EV costs, where variations in color intensity represent the difference in per-mile driving costs for the two types of vehicles. Certain regions in the west (particularly California) and parts of the northeast have higher costs for both GVs and EVs. High gasoline prices, high electricity prices, and high living costs in these regions result in high operating costs for transportation vehicles. In parts of the southern and midwestern regions, such as Texas, Arkansas, and Tennessee, the costs of GVs and EVs are relatively low. These states are located in energy-rich regions, particularly Texas, one of the largest oil-producing states in the United States with abundant supplies of natural gas and oil at comparatively low prices, as well as a fairly low living cost. We also observe that the two enclaves—Hawaii and Alaska—exhibit significantly higher costs for both types of vehicles, primarily due to their reliance on imported energy supplies. Fig. \ref{fig:2} b) presents the regional aggregated differences in GV and EV costs through additive and subtractive methods. Notably, several western regions (including California, Oregon, and Washington) exhibit higher total costs for both GV and EV operations, while maintaining a pronounced cost differential between the two propulsion systems, as visually represented by reddish hues in the mapping. In the southeastern and south-central regions, EV operational costs are markedly lower than those of GVs, while maintaining comparatively low total costs—a configuration cartographically represented by distinctive light-blue hues in the corresponding visualization. Fig. \ref{fig:2}c) shows the distribution of the ratio ($R$) between GV cost and EV cost. Regional analysis reveals elevated GV-to-EV cost ratios across select western and northern territories, whereas certain cities in the southern/eastern regions and Alaska exhibit lower cost ratios. Fig. \ref{fig:2}d) visualizes the percentage distribution of potential cost savings after converting GV costs to EV costs, revealing the electrification potential of cities. Regions with high potential savings are typically concentrated in parts of the northeast and west, such as Pennsylvania, Washington, and Oregon. As EVs become more widespread, these areas stand to achieve substantial transportation cost savings. In certain east-central and southern states such as Michigan and Texas, the potential savings are limited, with the economic benefits of electrification being less pronounced than in other regions.

\begin{figure}[ht]
    \centering
    \includegraphics[width=\textwidth]{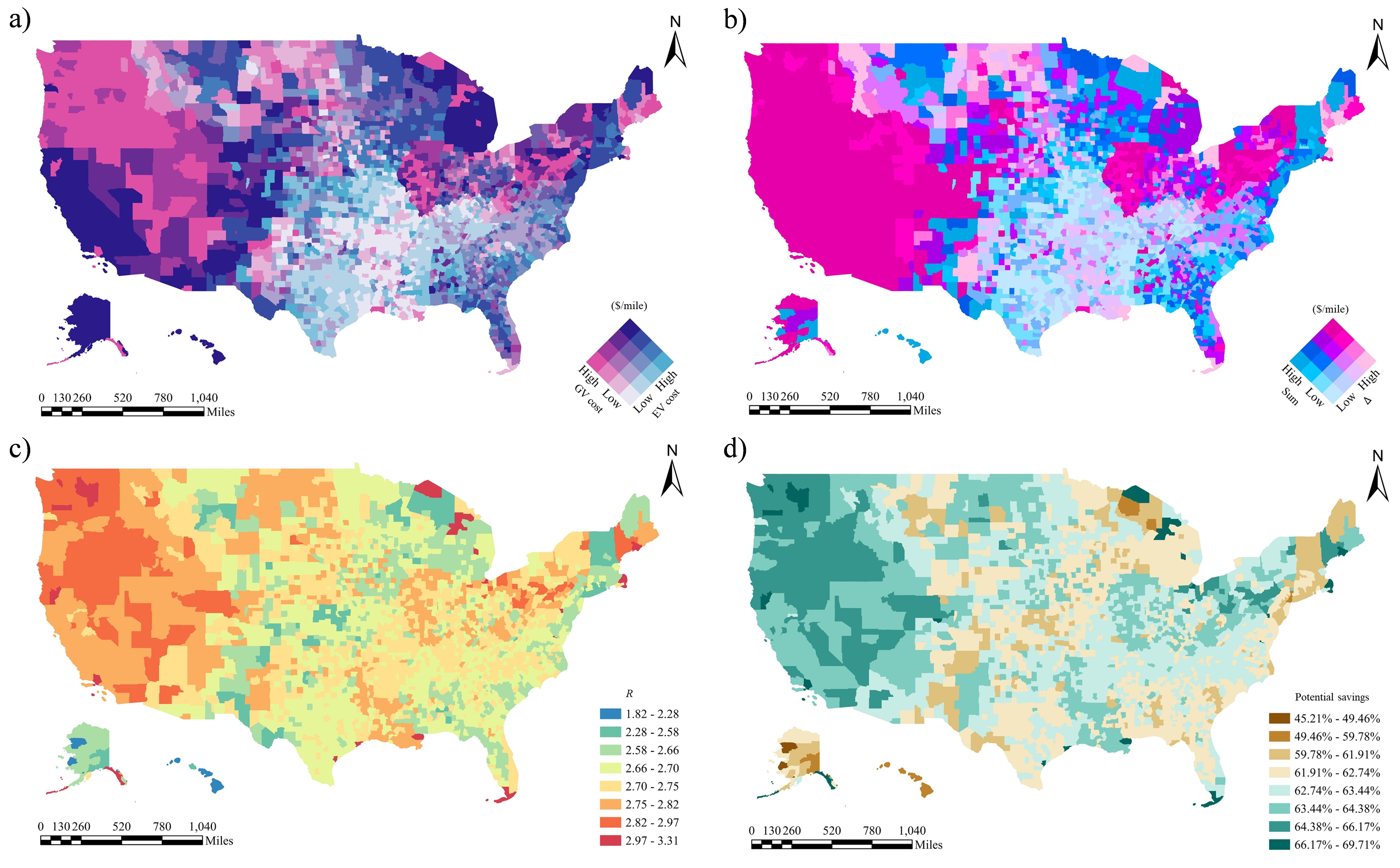}
    \caption{The spatial distribution of GV and EV costs in the United States. a) Spatial distribution of GV and EV costs. b) Spatial distribution of the sum and subtraction of GV and EV costs. c) Spatial distribution of the ratio of GV cost to EV cost. d) Spatial distribution of potential savings from converting GV cost to EV cost.}
    \label{fig:2}
\end{figure}

For a set of representative U.S. cities, we incorporate EVs into the dual-route system to evaluate the impact of EV integration on traffic congestion under different cost structures. The evolution of the system's average travel time under different EV penetration scenarios is shown in Fig. \ref{fig:1}d). For these cities, the average travel time shows a significant downward trend as EV penetration increases, thus alleviating traffic congestion. San Francisco, Honolulu, and Seattle show higher initial average travel times, while Honolulu has the highest final average travel time, primarily due to differences in GV and EV costs across these cities. Consistent with Definition \ref{def:plateau_phase}, two plateau phases are observed: at low penetration ($R_e < 0.3$) where GVs dominate the equilibrium, and at high penetration ($R_e > 0.7$) where EVs dominate. In these regimes, marginal EV additions yield negligible travel time improvements. The intermediate range $R_e \in [0.3, 0.7]$ corresponds to the transition zone (Definition \ref{def:transition_zone}), where the system exhibits pronounced efficiency gains as path saturation events trigger network-wide flow redistribution. 

The preceding analysis demonstrates that operating cost heterogeneity between GVs and EVs fundamentally alters equilibrium flow patterns even in simple networks. While this illustrative example uses $\beta=1$ for analytical tractability, the qualitative insights---that cost differentials induce flow redistribution and that network topology mediates congestion response---generalize to realistic BPR parameters ($\beta \in [1.2, 2]$) as demonstrated in our empirical analysis (Section \ref{Case_Study}). This motivates several research questions for real-world urban networks: (i) How does EV penetration affect system-level travel efficiency across cities with heterogeneous network topologies? (ii) Do congestion benefits exhibit nonlinear patterns (e.g., plateau phases) as penetration increases? (iii) How does congestion redistribute spatially---does it concentrate on specific corridor types? To address these questions, we develop a fixed-class MUE model that explicitly captures the GV-EV cost differential within real urban road networks. 

\subsection{UE Assignment Model}
The UE principle \citep{sheffi1985urban} in traffic assignment models serves as a classic theoretical framework for analyzing traffic distribution on road networks. This model is typically used to forecast traffic flow on each road based on given traffic demand \citep{feng2021reconstructing}. When the equilibrium state is reached, the travel costs on all utilized paths become equal and minimized. That is, for any given OD pair, the travel costs on unused paths are greater than or equal to those on utilized paths. Therefore, the mathematical expression of the UE assignment model can be represented as:
\begin{equation}\label{3}
  \min_{x} ~ Z\left ( x \right ) =\sum_{a}\int_{0}^{x_{a} } t_{a} \left ( w \right ) dw,  
\end{equation}
subject to
\begin{align}
\sum_{k \in \psi_{rs}} f_{k}^{rs} &= q^{rs},   \quad \forall r \in R, \forall s \in S, 
\label{4}\\
f_{k}^{rs} &\ge 0,  \quad \forall k \in \psi_{rs},\ r \in R,\ \forall s \in S, \label{5}\\
x_a        &= \sum_{r} \sum_{s} \sum_{k} f_{k}^{rs}\,\delta_{ka}^{rs},\quad \forall k \in \psi_{rs},\ r \in R,\ \forall s \in S, \label{6}
\end{align}
where $x$ is the set of the traffic flow volume on all roads (links) in the network; $x_{a}$ represents the traffic flow volume on road $a$; $t_{a}$ denotes the road traffic impedance function, i.e., the travel time on road $a$; $q^{rs}$ represents the travel demand from the origin $r$ to destination $s$; $f_{k}^{rs}$ denotes the traffic flow volume of the $k^{th}$ route (path) between $r$ and $s$; $\psi_{rs}$ denotes the set of all routes between $r$ and $s$; $R$ and $S$ denote the sets of all origins and destinations in the traffic network, respectively; and  $\delta_{ka}^{rs}$ is a Boolean function indicating whether the $k^{th}$ route between $r$ and $s$ passes through road $a$. The objective function defined in Eq. \eqref{3} minimizes the sum of the integral of travel time costs for all roads.  The flow conservation constraint is defined by Eq. \eqref{4}. The relationship between road flow and route flow is represented by Eq. \eqref{6}. The constraint in Eq. \eqref{5} ensures that the traffic flow on each route is non-negative. The above equation can determine the traffic flow on each road based on the traffic demand. Therefore, the total travel time for the $k^{th}$ route can be calculated by summing the travel times along this route for each road. We denote it as $c_{k}^{rs}$, which can be specifically expressed as:
\begin{equation}\label{7}
 c_{k}^{rs} = \sum_{a} t_a \delta_{ka}^{rs}, \quad \forall k \in \psi_{rs}, \ r \in R, \ \forall s \in S.
\end{equation}

The above UE model has been proven to be a strictly convex problem, yielding a unique solution for traffic flow assignment on roads \citep{sheffi1985urban}. However, it is extremely challenging to find the optimal solution for the assignment problem in large-scale complex urban traffic networks, as the computational cost increases substantially. This equilibrium problem solution method can be traced back to the convex combination algorithm proposed by \citet{frank1956algorithm}. Based on this foundation, \citet{florian1976application} made improvements and refinements to the algorithmic framework. To reduce computational burden, a bi-conjugate Frank-Wolfe algorithm \citep{mitradjieva2013stiff,zill2019toll} is adopted to find the optimal solution for the assignment model. 

\subsection{MUE Assignment Model for Mixed GVs and EVs}

We adopt a comparative static framework, computing equilibrium for each exogenously specified penetration rate $R_e$. This approach answers the question: `Given a particular GV-EV fleet composition, what traffic pattern emerges?' rather than modeling the endogenous adoption dynamics. Following standard practice in transportation modeling, we assume class-specific but intra-class homogeneous value of time. Future work may incorporate heterogeneous user preferences.

\subsubsection{Motivation and Distinction from Classical Multi-class Models}

The integration of electric vehicles into urban transportation networks necessitates a fundamental rethinking of traffic assignment models. Classical multi-class user equilibrium (MUE) models \citep{daganzo1983stochastic,yang2005mathematical} typically consider multiple user classes distinguished by their value of time or route choice preferences, where users may freely switch between classes based on their individual utility maximization. Such models implicitly assume modal flexibility—a commuter choosing between driving alone versus carpooling, or selecting different departure times.

In contrast, the GV-EV mixed traffic scenario presents a fundamentally different structure: vehicle class membership is exogenously fixed. A GV owner cannot instantaneously become an EV user in response to changing traffic conditions, and vice versa. This distinction has profound implications for the mathematical structure of the equilibrium problem:

\begin{itemize}
    \item \textbf{Classical Multi-class UE:} Users optimize over both route choice and class membership. The equilibrium condition requires that no user can reduce their cost by switching either their route or their class.
    
    \item \textbf{Fixed-Class MUE (This Work):} Users optimize over route choice only, with class membership predetermined. Each class has its own demand constraint that must be satisfied independently.
\end{itemize}

This fixed-class structure introduces additional complexity: the two user populations interact through shared road capacity but maintain separate feasibility constraints. The resulting equilibrium is characterized by coupled variational inequalities rather than a single unified optimization problem, and standard convexity arguments from single-class UE do not directly apply. We emphasize that “fixed-class” refers to the equilibrium computation for a given $R_e$, not the absence of long-term adoption dynamics. Our analysis is comparative static: we compare equilibria across different $R_e$ values, representing different stages of fleet transition. The dynamics of how $R_e$ evolves over time (influenced by policy, cost, etc.) is beyond this paper's scope.

\subsubsection{Cost Structure for GVs and EVs}

The operational cost heterogeneity between GVs and EVs constitutes the primary driver of differentiated route choice behavior. We model the per-kilometer cost for each vehicle class comprehensively.

The GV cost $C_{gv}$ (\$/km) comprises fuel cost $C_{gv}^{fuel}$, maintenance $C_{gv}^{maint}$, fixed costs (registration, taxes, loans) $C_{gv}^{fix}$, depreciation $C_{gv}^{dep}$, insurance $C_{gv}^{ins}$, additional taxes $C_{gv}^{add}$, and environmental externalities $C_{gv}^{env}$, as well as the conversion factor ${r_{dis}}$ (miles to kilometers). Among them, the fuel cost component is:
\begin{equation}\label{eq:fuel_cost}
    C_{gv}^{fuel} = \frac{P_{gas}}{MPG_{gv}},
\end{equation}
with $P_{gas}$ denoting gasoline price (\$/gallon), and $MPG_{gv}$ denoting fuel efficiency (typically 25 miles/gallon).

Similarly, the EV cost $C_{ev}$ (\$/km) includes electricity cost $C_{ev}^{ele}$, maintenance $C_{ev}^{maint}$, fixed costs $C_{ev}^{fix}$, depreciation $C_{ev}^{dep}$, insurance $C_{ev}^{ins}$, additional taxes $C_{ev}^{add}$, environmental costs $C_{ev}^{env}$, and subsidy allocations $C_{ev}^{sub}$, as well as the conversion factor ${r_{dis}}$. Among them, the electricity cost component is:
\begin{equation}\label{eq:ele_cost}
    C_{ev}^{ele} = \frac{P_{ele} \cdot \kappa_{gal}}{MPGe_{ev}},
\end{equation}
with $P_{ele}$ denoting electricity price (\$/kWh), $\kappa_{gal} = 33.7$ kWh/gallon (energy equivalence), and $MPGe_{ev}$ denoting EV efficiency (typically 110 miles/gallon-equivalent). The assumed fuel efficiency of 25 MPG for GVs and 110 MPGe for EVs reflects the U.S. fleet average per EPA fuel economy data\footnote{\url{https://www.fueleconomy.gov}}.

\subsubsection{Mathematical Formulation of the Fixed-Class MUE}

We now develop a rigorous mathematical framework for the fixed-class MUE problem. Consider a transportation network $\mathcal{G} = (\mathcal{N}, \mathcal{A})$ with node set $\mathcal{N}$ and link set $\mathcal{A}$. Let $\mathcal{W}$ denote the set of OD pairs, and for each $(r,s) \in \mathcal{W}$, let $\psi_{rs}$ denote the set of available paths.

\textbf{Decision Variables.} For vehicle class $m \in \mathcal{M} := \{gv, ev\}$, let $f_m = (f_{k,m}^{rs})_{k \in \psi_{rs}, (r,s) \in \mathcal{W}}$ denote the path flow vector. We concatenate these into a single vector:
\begin{equation}
    f = \begin{pmatrix} f_{gv} \\ f_{ev} \end{pmatrix} \in \mathbb{R}^{n_{gv} + n_{ev}} =: \mathbb{R}^n,
\end{equation}
where $n_m = \sum_{(r,s) \in \mathcal{W}} |\psi_{rs}|$ is the total number of path variables for class $m$.

\textbf{Feasible Region.} Unlike classical multi-class models where total OD demand can be freely distributed across classes, our model requires \textit{separate demand conservation} for each class:
\begin{equation}\label{eq:demand_constraint}
    \sum_{k \in \psi_{rs}} f_{k,m}^{rs} = d_m^{rs}, \quad \forall (r,s) \in \mathcal{W}, \quad \forall m \in \mathcal{M},
\end{equation}
where $d_m^{rs}$ is the fixed travel demand for class $m$ between OD pair $(r,s)$. This can be written compactly as:
\begin{equation}
    B_m f_m = d_m, \quad m \in \{gv, ev\},
\end{equation}
where $B_m$ is the OD-path incidence matrix for class $m$. The feasible region decomposes as:
\begin{equation}
    \Omega = \Omega_{gv} \times \Omega_{ev}, \quad \text{where} \quad \Omega_m := \left\{ f_m \in \mathbb{R}^{n_m}_+ : B_m f_m = d_m \right\}.
\end{equation}

\textbf{Link Flow Aggregation.} The total flow on link $a \in \mathcal{A}$ aggregates contributions from both classes:
\begin{equation}\label{eq:link_flow}
    x_a = \sum_{m \in \mathcal{M}} \sum_{(r,s) \in \mathcal{W}} \sum_{k \in \psi_{rs}} f_{k,m}^{rs} \cdot \delta_{ka}^{rs} = \Delta_{gv} f_{gv} + \Delta_{ev} f_{ev},
\end{equation}
where $\Delta_m$ is the link-path incidence matrix for class $m$, and $\delta_{ka}^{rs} = 1$ if path $k$ uses link $a$.

\textbf{Generalized Path Cost.} The travel cost on link $a$ follows the BPR function:
\begin{equation}\label{eq:BPR}
    t_a(x_a) = t_a^0 \left[ 1 + \alpha \left( \frac{x_a}{c_a} \right)^\beta \right],
\end{equation}
where $t_a^0$ is the free-flow travel time, $c_a$ is the capacity, and $\alpha, \beta > 0$ are calibration parameters. The generalized path cost for class $m$ on path $k$ between $(r,s)$ is:
\begin{equation}\label{eq:path_cost}
    C_{k,m}^{rs}(f) = \sum_{a \in \mathcal{A}} \left( \gamma_m \cdot t_a(x_a(f)) + C_m \cdot l_a \right) \delta_{ka}^{rs},
\end{equation}
where $\gamma_m$ is the value of time for class $m$, $C_m$ is the per-kilometer operating cost, and $l_a$ is the length of link $a$. We assume $\gamma_{gv} = \gamma_{ev} = \gamma$ (common value of time across vehicle classes), consistent with the assumption that GV vs. EV choice reflects vehicle ownership rather than income differentiation.

Define the path cost operator:
\begin{equation}
    F(f) = \begin{pmatrix} F_{gv}(f) \\ F_{ev}(f) \end{pmatrix}, \quad \text{where} \quad (F_m(f))_k^{rs} = C_{k,m}^{rs}(f).
\end{equation}

\textbf{Wardrop Equilibrium Conditions.} The fixed-class MUE is characterized by the following conditions: for each class $m$ and OD pair $(r,s)$, there exists an equilibrium cost $\pi_m^{rs,*}$ such that:
\begin{equation}\label{eq:wardrop}
    \begin{cases}
        C_{k,m}^{rs}(f^*) = \pi_m^{rs,*}, & \text{if } f_{k,m}^{rs,*} > 0, \\
        C_{k,m}^{rs}(f^*) \geq \pi_m^{rs,*}, & \text{if } f_{k,m}^{rs,*} = 0.
    \end{cases}
\end{equation}

\begin{proposition}[Variational Inequality Characterization]\label{prop:VI}
The flow vector $f^* \in \Omega$ is a fixed-class MUE if and only if it solves the variational inequality $\text{VI}(F, \Omega)$:
\begin{equation}\label{eq:VI}
    \langle F(f^*), f - f^* \rangle \geq 0, \quad \forall f \in \Omega.
\end{equation}
\end{proposition}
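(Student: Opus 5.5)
The plan is to exploit the product structure $\Omega = \Omega_{gv}\times\Omega_{ev}$. Each factor further decomposes over OD pairs into scaled simplices $\{f_{\cdot,m}^{rs}\ge 0:\ \sum_k f_{k,m}^{rs}=d_m^{rs}\}$. This lets us reduce the global inequality \eqref{eq:VI} to a family of independent scalar conditions, one per class and OD pair. We expand
\begin{equation*}
\langle F(f^*), f-f^*\rangle=\sum_{m\in\mathcal{M}}\sum_{(r,s)\in\mathcal{W}}\sum_{k\in\psi_{rs}} C_{k,m}^{rs}(f^*)\,\bigl(f_{k,m}^{rs}-f_{k,m}^{rs,*}\bigr),
\end{equation*}
and treat the costs $C_{k,m}^{rs}(f^*)$ as fixed numbers. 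Note that the coupling between classes through $x_a(f^*)$ plays no role here, since $F$ is evaluated only at $f^*$. The argument is therefore purely about a linear functional over a polytope.

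($\Rightarrow$) Suppose $f^*$ satisfies the Wardrop conditions \eqref{eq:wardrop} with multipliers $\pi_m^{rs,*}$. For every $k$ we have $C_{k,m}^{rs}(f^*)\ge \pi_m^{rs,*}$, with equality whenever $f_{k,m}^{rs,*}>0$. Hence, for any feasible $f$,
\begin{equation*}
\sum_k C_{k,m}^{rs}(f^*) f_{k,m}^{rs}\ \ge\ \pi_m^{rs,*}\sum_k f_{k,m}^{rs}=\pi_m^{rs,*} d_m^{rs}=\sum_k C_{k,m}^{rs}(f^*) f_{k,m}^{rs,*},
\end{equation*}
where the first inequality uses $f\ge0$, the middle equality uses the demand constraint \eqref{eq:demand_constraint}, and the last equality uses complementarity. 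Summing over $m$ and $(r,s)$ gives \eqref{eq:VI}.

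($\Leftarrow$) Suppose $f^*\in\Omega$ solves $\text{VI}(F,\Omega)$. Fix a class $m$ and an OD pair $(r,s)$ with $d_m^{rs}>0$ (if $d_m^{rs}=0$, all flows vanish and we set $\pi_m^{rs,*}=\min_k C_{k,m}^{rs}(f^*)$). Define $\pi_m^{rs,*}:=\min_{k}C_{k,m}^{rs}(f^*)$, so the inequality branch of \eqref{eq:wardrop} holds trivially. For the equality branch, suppose some path $j$ has $f_{j,m}^{rs,*}>0$ but $C_{j,m}^{rs}(f^*)>\pi_m^{rs,*}=C_{i,m}^{rs}(f^*)$. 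Construct $f$ by moving the mass $\varepsilon=f_{j,m}^{rs,*}$ from path $j$ to path $i$ within this class and OD pair, leaving all other components unchanged. Then $f\in\Omega$, because the factors are independent and the demand constraint is preserved. Moreover,
\begin{equation*}
\langle F(f^*),f-f^*\rangle=\varepsilon\bigl(C_{i,m}^{rs}(f^*)-C_{j,m}^{rs}(f^*)\bigr)<0,
\end{equation*}
which contradicts \eqref{eq:VI}. So every used path attains the minimum, and \eqref{eq:wardrop} holds.

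The main point needing care is the converse direction. There we must check that the perturbation stays in $\Omega$ and affects only one class and OD pair. This is exactly where the fixed-class structure matters: because class demands are conserved separately, we may shift flow only between paths of the same class. That is why the multipliers are class-specific $\pi_m^{rs,*}$, rather than a common OD cost as in the classical multi-class model. Everything else is the standard Smith/Dafermos argument, and no monotonicity of $F$ is needed for this equivalence.
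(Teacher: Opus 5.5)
Your proof is correct, and your ($\Rightarrow$) direction is the paper's argument essentially verbatim. You lower-bound $\sum_k C_{k,m}^{rs}(f^*)f_{k,m}^{rs}$ by $\pi_m^{rs,*}d_m^{rs}$ using $C_{k,m}^{rs}(f^*)\ge\pi_m^{rs,*}$ and $f\ge 0$. You then identify $\pi_m^{rs,*}d_m^{rs}$ with $\sum_k C_{k,m}^{rs}(f^*)f_{k,m}^{rs,*}$ by complementarity and sum over $m$ and $(r,s)$.

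The difference is that the paper stops there. It asserts that \eqref{eq:wardrop} is \emph{equivalent} to \eqref{eq:VI}, but its chain of inequalities only shows Wardrop $\Rightarrow$ VI; it never recovers the Wardrop conditions from a VI solution. Your ($\Leftarrow$) argument supplies that missing half with the standard Smith/Dafermos perturbation. You set $\pi_m^{rs,*}:=\min_k C_{k,m}^{rs}(f^*)$. If a used path $j$ were strictly costlier than a minimizer $i$, moving all of $f_{j,m}^{rs,*}$ onto $i$ within the same class and OD pair gives $\langle F(f^*),f-f^*\rangle=f_{j,m}^{rs,*}\bigl(C_{i,m}^{rs}(f^*)-C_{j,m}^{rs}(f^*)\bigr)<0$, a contradiction.

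Your version also makes explicit what the paper leaves implicit. The perturbation is feasible only because demand is conserved class by class, which is exactly why the potentials $\pi_m^{rs,*}$ are class-specific. Neither direction uses any monotonicity of $F$.
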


\begin{proof}
The Wardrop conditions \eqref{eq:wardrop} are equivalent to:
\begin{equation}
    \sum_{m \in \mathcal{M}} \sum_{(r,s) \in \mathcal{W}} \sum_{k \in \psi_{rs}} C_{k,m}^{rs}(f^*) \cdot (f_{k,m}^{rs} - f_{k,m}^{rs,*}) \geq 0, \quad \forall f \in \Omega.
\end{equation}
This follows because for any feasible $f$, we have $\sum_k f_{k,m}^{rs} = \sum_k f_{k,m}^{rs,*} = d_m^{rs}$, so:
\begin{eqnarray}
    \sum_k C_{k,m}^{rs}(f^*) \cdot f_{k,m}^{rs} - \sum_k C_{k,m}^{rs}(f^*) \cdot f_{k,m}^{rs,*} 
    &\geq& \pi_m^{rs,*} \sum_k f_{k,m}^{rs} - \sum_k C_{k,m}^{rs}(f^*) \cdot f_{k,m}^{rs,*} \nonumber \\
    &=& \pi_m^{rs,*} \cdot d_m^{rs} - \pi_m^{rs,*} \cdot d_m^{rs} \\
    &=& 0,
\end{eqnarray}
where the inequality uses $C_{k,m}^{rs}(f^*) \geq \pi_m^{rs,*}$ for all $k$, with equality when $f_{k,m}^{rs,*} > 0$.
\end{proof}

\subsubsection{Existence, Uniqueness, and Monotonicity}

We establish the well-posedness of the fixed-class MUE problem through careful analysis of the cost operator's properties.

\begin{lemma}[Monotonicity of the Cost Operator]\label{lem:monotone}
Under the BPR cost function \eqref{eq:BPR} with $\beta \geq 1$, the path cost operator $F: \Omega \to \mathbb{R}^n$ is monotone:
\begin{equation}
    \langle F(f) - F(f'), f - f' \rangle \geq 0, \quad \forall f, f' \in \Omega.
\end{equation}
Moreover, if $\beta > 1$ and the network is connected, $F$ is strictly monotone on the subspace orthogonal to the kernel of demand constraints.
\end{lemma}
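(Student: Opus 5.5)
The plan is to reduce the monotonicity of the path-cost operator $F$ to the monotonicity of the separable link-cost map $t(x)=(t_a(x_a))_{a\in\mathcal{A}}$, by exploiting the additive path structure \eqref{eq:path_cost} and the common value of time $\gamma_{gv}=\gamma_{ev}=\gamma$. First I will write $F$ in matrix form. Let $\Delta:=[\Delta_{gv}\ \Delta_{ev}]$, so that $x(f)=\Delta f$ by \eqref{eq:link_flow}, and let $l=(l_a)_{a\in\mathcal{A}}$. Then \eqref{eq:path_cost} gives
\begin{equation*}
F(f)=\gamma\,\Delta^{\top} t(\Delta f)+\begin{pmatrix} C_{gv}\,\Delta_{gv}^{\top} l\\ C_{ev}\,\Delta_{ev}^{\top} l\end{pmatrix}.
\end{equation*}
The second term is a constant vector $b$ that does not depend on $f$. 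This is where the class-specific operating costs $C_m$ enter, and it is why their heterogeneity does not harm monotonicity.

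The key step is the cancellation. For $f,f'\in\Omega$, the constant $b$ drops out, and moving $\Delta^{\top}$ across the inner product gives
\begin{equation*}
\langle F(f)-F(f'),f-f'\rangle=\gamma\,\langle t(x)-t(x'),x-x'\rangle=\gamma\sum_{a\in\mathcal{A}}\bigl(t_a(x_a)-t_a(x'_a)\bigr)(x_a-x'_a),
\end{equation*}
where $x=\Delta f$ and $x'=\Delta f'$ are nonnegative. Each BPR function \eqref{eq:BPR} with $t_a^0,c_a,\alpha>0$ and $\beta\ge 1$ is nondecreasing on $\mathbb{R}_+$. In fact it is strictly increasing there for any $\beta>0$. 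Hence every summand is nonnegative and $\gamma>0$, which proves the monotonicity claim. I will remark that this step uses $\gamma_{gv}=\gamma_{ev}$ essentially. With class-specific values of time, the Jacobian $\Delta_m^{\top}\gamma_m\nabla t\,\Delta_{m'}$ is no longer symmetric and the cross terms need not cancel, so monotonicity could fail.

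For the strict part, strict increase of each $t_a$ makes the sum above zero only if $x_a=x'_a$ for every $a$, that is, only if $f-f'\in\ker\Delta$. So $F$ is strictly monotone in the link-flow sense: $\langle F(f)-F(f'),f-f'\rangle>0$ whenever $f,f'\in\Omega$ and $\Delta f\ne\Delta f'$. Equivalently, $F$ is strictly monotone on feasible directions $f-f'\in\ker B$ (with $B=\mathrm{diag}(B_{gv},B_{ev})$) that are not in $\ker\Delta$. One can restrict further to the orthogonal complement of $\ker\Delta$ within $\ker B$.

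The main obstacle is interpreting the statement's phrase ``orthogonal to the kernel of demand constraints''. Path flows are generically non-unique: distinct path flows, including different GV/EV splits of the same total, can produce identical link flows, so $F$ cannot be strictly monotone on all of $\Omega$. I would therefore prove and state the claim precisely in the form above, namely strict monotonicity modulo $\ker\Delta$. Connectivity is needed only so that every OD pair has a path and $\Omega\neq\emptyset$, and the condition $\beta>1$ is stronger than needed. Finally, I will note that this weaker form is exactly what the subsequent existence and uniqueness argument requires: uniqueness of the aggregate link flow $x^*$ and, by the VI characterization in Proposition~\ref{prop:VI}, of the equilibrium costs $\pi_m^{rs,*}$.
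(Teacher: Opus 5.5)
Your argument is correct. It rests on the same structural fact the paper uses, namely $F(f)=\gamma\,\Delta^{\top}t(\Delta f)+b$ with $b$ constant, but you use it in finite-difference form, whereas the paper differentiates: it computes $\nabla_f F=\Gamma\Delta^{\top}\mathrm{diag}(t'(x))\Delta$ and argues positive semidefiniteness from $t_a'\ge 0$. Your route is more elementary and slightly more general. It needs no differentiability; in the Jacobian route, $\beta\ge 1$ is what keeps $t_a'(0)$ finite, while you correctly get monotonicity for every $\beta>0$. It also avoids the unstated step ``PSD Jacobian on a convex set $\Rightarrow$ monotone.'' Finally, it makes the role of the common value of time explicit: the paper's identity $v^{\top}\nabla_f F(f)\,v=\sum_a t_a'(x_a)\bigl(\sum_m\gamma_m(\Delta_m v_m)_a\bigr)^2$ holds (up to a factor $\gamma$) only when $\gamma_{gv}=\gamma_{ev}$, which is exactly the point you flag.

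The difference matters more for the strict part. The paper's justification is that $\beta>1$ gives $t_a'>0$ for $x_a>0$ and connectivity gives $\Delta$ ``sufficient rank.'' That cannot deliver strictness in path flows. Because $\Delta_{gv}=\Delta_{ev}$, shifting $\epsilon$ of GV flow from path $k$ to $k'$ and $\epsilon$ of EV flow from $k'$ to $k$ for the same OD pair gives a nonzero direction in $\ker B\cap\ker\Delta$ along which $F$ is constant, regardless of connectivity. The literal reading ``orthogonal to $\ker B$'' is vacuous, since feasible differences lie in $\ker B$.

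Your version is strict monotonicity modulo $\ker\Delta$, read off directly from the strict increase of each $t_a$. It is valid for all $\beta>0$, including $\beta=1$, where in fact $t_a'>0$ everywhere, unlike $\beta>1$ where $t_a'(0)=0$. This is the statement that actually holds, and it is exactly what the uniqueness argument in Theorem~\ref{thm:existence} uses when it concludes $\Delta(f^*-f^{**})=0$.
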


\begin{proof}
The Jacobian of $F$ with respect to $f$ can be decomposed as:
\begin{equation}
    \nabla_f F(f) = \begin{pmatrix} \nabla_{f_{gv}} F_{gv} & \nabla_{f_{ev}} F_{gv} \\ \nabla_{f_{gv}} F_{ev} & \nabla_{f_{ev}} F_{ev} \end{pmatrix}.
\end{equation}
For the $(k,m)$-th component corresponding to path $k$ of class $m$:
\begin{equation}
    \frac{\partial C_{k,m}^{rs}}{\partial f_{k',m'}^{r's'}} = \gamma_m \sum_{a \in \mathcal{A}} t_a'(x_a) \cdot \delta_{ka}^{rs} \cdot \delta_{k'a}^{r's'},
\end{equation}
where $t_a'(x_a) = \frac{\alpha \beta t_a^0}{c_a} \left( \frac{x_a}{c_a} \right)^{\beta-1} \geq 0$.

The Jacobian can be written as:
\begin{equation}
    \nabla_f F(f) = \Gamma \cdot \Delta^\top \cdot \text{diag}(t'(x)) \cdot \Delta,
\end{equation}
where $\Gamma = \text{diag}(\gamma_{gv} \mathbf{1}_{n_{gv}}, \gamma_{ev} \mathbf{1}_{n_{ev}})$ and $\Delta = (\Delta_{gv}, \Delta_{ev})$. Since $t_a'(x_a) \geq 0$ for all $a$, we have $\text{diag}(t'(x)) \geq 0$, implying:
\begin{equation}
    v^\top \nabla_f F(f) v = \sum_{a \in \mathcal{A}} t_a'(x_a) \left( \sum_m \gamma_m (\Delta_m v_m)_a \right)^2 \geq 0,
\end{equation}
establishing monotonicity. Strict monotonicity follows when $\beta > 1$ (ensuring $t_a'(x_a) > 0$ for $x_a > 0$) and when the network is connected (ensuring $\Delta$ has sufficient rank).
\end{proof}

\begin{theorem}[Existence and Uniqueness]\label{thm:existence}
The fixed-class MUE problem admits at least one solution. If the cost operator $F$ is strictly monotone, the equilibrium link flow $x^*$ is unique. If $F$ is strongly monotone, the path flow $f^*$ is also unique.
\end{theorem}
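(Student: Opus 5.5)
Existence will come from the VI characterization in Proposition~\ref{prop:VI}, combined with the standard Hartman--Stampacchia argument. The feasible set $\Omega=\Omega_{gv}\times\Omega_{ev}$ is a product of polytopes $\{f_m\ge 0: B_m f_m=d_m\}$. Each polytope is nonempty as long as every OD pair with positive demand has at least one path. Each is bounded, since every path flow satisfies $0\le f^{rs}_{k,m}\le d^{rs}_m$. So $\Omega$ is nonempty, convex and compact.

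Next, $F$ is continuous on $\Omega$. Each component $C^{rs}_{k,m}$ is a finite sum of the BPR functions \eqref{eq:BPR}, which are continuous in $x_a\ge0$, composed with the linear map $f\mapsto x(f)=\Delta f$, plus constants. Hartman--Stampacchia (a continuous map on a nonempty compact convex set) then gives $f^*\in\Omega$ solving $\text{VI}(F,\Omega)$. By Proposition~\ref{prop:VI}, $f^*$ is a fixed-class MUE. If I wanted to be self-contained, I would prove this step via Brouwer: the map $f\mapsto \mathrm{Proj}_\Omega(f-F(f))$ is continuous from $\Omega$ to itself, and its fixed points are exactly the VI solutions.

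For uniqueness of link flows, I would not try to use strict monotonicity of $F$ in $f$. Because $F$ depends on $f$ only through $x=\Delta f$, it can never be strictly monotone in $f$ when distinct path flows share the same link flows. I would therefore interpret the hypothesis as the strict version of the inequality in Lemma~\ref{lem:monotone}: $\langle F(f)-F(f'),f-f'\rangle>0$ whenever $x(f)\ne x(f')$. Take two solutions $f^*, f^{**}$. Test the VI for $f^*$ with $f=f^{**}$, test the VI for $f^{**}$ with $f=f^*$, and add the two inequalities. This gives $\langle F(f^*)-F(f^{**}),f^*-f^{**}\rangle\le0$, while monotonicity (Lemma~\ref{lem:monotone}) gives $\ge0$, so the expression equals $0$.

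To turn equality into equal link flows, I would use the structure with $\gamma_{gv}=\gamma_{ev}=\gamma$. The distance terms $C_m l_a$ are constant in $f$, so they cancel in $F(f)-F(f')$, and the expression equals
$$\gamma\sum_a \big(t_a(x^*_a)-t_a(x^{**}_a)\big)(x^*_a-x^{**}_a).$$
Each summand is nonnegative because $t_a$ is nondecreasing. The sum is zero, so every summand is zero. Each $t_a$ is strictly increasing on $[0,\infty)$ (since $\alpha,t_a^0>0$), so $x^*_a=x^{**}_a$ for every $a$.

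The path-flow claim is immediate: if $F$ is strongly monotone, $\langle F(f)-F(f'),f-f'\rangle\ge\mu\|f-f'\|^2$, and the same summation gives $\mu\|f^*-f^{**}\|^2\le0$, hence $f^*=f^{**}$. I would add a remark that strong monotonicity in $f$ typically fails when paths overlap, so this clause is a conditional statement. The step I expect to need most care is the link-flow uniqueness: making precise which notion of strict monotonicity is used, and correctly reducing the inner product to link space. That reduction relies on the common value of time. With $\gamma_{gv}\ne\gamma_{ev}$, the class-weighted form in Lemma~\ref{lem:monotone} would only give uniqueness of the $\gamma$-weighted aggregate flow.
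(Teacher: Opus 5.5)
Your proof is correct and has the same skeleton as the paper's. Existence comes from the standard VI existence theorem on the nonempty compact convex set $\Omega$ with continuous $F$; the paper cites Facchinei--Pang and you invoke Hartman--Stampacchia, which is the same result. Link-flow uniqueness comes from testing each VI against the other solution and adding, and path-flow uniqueness from strong monotonicity.

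Where you genuinely depart is the step from $\langle F(f^*)-F(f^{**}),f^*-f^{**}\rangle\le 0$ to $x^*=x^{**}$. The paper just asserts that strict monotonicity ``implies $\Delta(f^*-f^{**})=0$''. Read literally, strict monotonicity would give $f^*=f^{**}$ outright. As you note, it fails for this $F$ whenever two distinct feasible path flows share the same link flows, for instance when GV and EV flows of one OD pair trade paths. So the paper's step rests on a hypothesis that typically cannot be met. Your explicit reduction
\[
\langle F(f^*)-F(f^{**}),f^*-f^{**}\rangle=\gamma\sum_{a}\bigl(t_a(x_a^*)-t_a(x_a^{**})\bigr)\bigl(x_a^*-x_a^{**}\bigr),
\]
combined with strict increase of each BPR function, is what actually proves link-flow uniqueness. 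It also shows the strict-monotonicity assumption is superfluous: it suffices that $\gamma>0$ and $\alpha,t_a^0>0$. The paper's abstract phrasing buys generality, covering any operator that is strictly monotone ``modulo link flows''. Yours buys a hypothesis that is actually verified for this model.

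One correction to your closing remark: unequal values of time do not weaken the conclusion. Since $\Omega=\Omega_{gv}\times\Omega_{ev}$, $\mathrm{VI}(F,\Omega)$ splits into one VI per class (vary one block with the other held at $f^*$). Each of these is unchanged if $F_m$ is divided by $\gamma_m>0$. This yields an equivalent problem with common time value $1$ and distance costs $C_m l_a/\gamma_m$, to which your computation applies verbatim, so total link flows stay unique.
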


\begin{proof}
\textit{Existence:} The feasible region $\Omega = \Omega_{gv} \times \Omega_{ev}$ is nonempty (assuming feasible demands), compact (bounded by total demand), and convex. The cost operator $F$ is continuous. By the existence theorem for variational inequalities \citep{facchinei2003finite}, VI$(F, \Omega)$ has at least one solution.

\textit{Uniqueness of link flows:} Suppose $f^*$ and $f^{**}$ are two equilibria. By the VI characterization:
\begin{align}
    \langle F(f^*), f^{**} - f^* \rangle &\geq 0, \\
    \langle F(f^{**}), f^* - f^{**} \rangle &\geq 0.
\end{align}
Adding these inequalities:
\begin{equation}
    \langle F(f^*) - F(f^{**}), f^* - f^{**} \rangle \leq 0.
\end{equation}
By strict monotonicity, this implies $\Delta(f^* - f^{**}) = 0$, i.e., $x^* = x^{**}$.

\textit{Uniqueness of path flows:} Under strong monotonicity, $\langle F(f^*) - F(f^{**}), f^* - f^{**} \rangle \geq \mu \|f^* - f^{**}\|^2$ for some $\mu > 0$, which combined with the above yields $f^* = f^{**}$.
\end{proof}

\subsubsection{Lagrangian Formulation and Saddle Point Characterization}

To develop efficient algorithms and incorporate system-level constraints, we reformulate the fixed-class MUE within a Lagrangian framework. Consider additional constraints of the form:
\begin{equation}\label{eq:capacity_constraint}
    Af \leq c,
\end{equation}
representing link capacity limits, environmental regulations, or policy constraints. Here $A \in \mathbb{R}^{p \times n}$ and $c \in \mathbb{R}^p$.

\begin{definition}[Augmented Lagrangian]
The augmented Lagrangian for the constrained fixed-class MUE is:
\begin{equation}\label{eq:lagrangian}
    \mathcal{L}(f, \lambda, \mu) = \Phi(f) + \lambda^\top (Af - c) + \mu_{gv}^\top (B_{gv} f_{gv} - d_{gv}) + \mu_{ev}^\top (B_{ev} f_{ev} - d_{ev}),
\end{equation}
where $\Phi(f) = \sum_{a \in \mathcal{A}} \int_0^{x_a(f)} \bar{t}_a(\omega) d\omega$ with $\bar{t}_a(\omega) = \bar{\gamma} \cdot t_a(\omega) + \bar{C} \cdot l_a$ being an averaged cost function, $\lambda \geq 0$ are dual variables for capacity constraints, and $\mu_m$ are dual variables for demand constraints.
\end{definition}

\begin{theorem}[Saddle Point Equivalence]\label{thm:saddle}
The point $(f^*, \lambda^*, \mu^*)$ is a saddle point of the Lagrangian $\mathcal{L}$:
\begin{equation}\label{eq:saddle_point}
    \mathcal{L}(f^*, \lambda, \mu) \leq \mathcal{L}(f^*, \lambda^*, \mu^*) \leq \mathcal{L}(f, \lambda^*, \mu^*), \quad \forall f \geq 0, \lambda \geq 0, \mu,
\end{equation}
if and only if $f^*$ solves the constrained fixed-class MUE and $(\lambda^*, \mu^*)$ are the associated Lagrange multipliers.
\end{theorem}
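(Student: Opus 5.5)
The plan is to reduce the statement to the standard correspondence between KKT points and saddle points of a convex program with affine constraints, and to identify that convex program with the constrained fixed-class MUE through Proposition~\ref{prop:VI}.

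\textbf{Step 0: fix the potential.} The first step is to make sure that $\nabla_f \Phi = F$. With a genuinely \emph{averaged} link cost $\bar t_a$ this fails, because $\nabla\Phi$ would then carry $\bar C$ instead of $C_m$. I therefore read $\Phi$ as
\begin{equation*}
\Phi(f)=\gamma\sum_{a}\int_0^{x_a(f)} t_a(\omega)\,d\omega+\sum_{m\in\mathcal{M}} C_m\, l^\top \Delta_m f_m .
\end{equation*}
This is legitimate because $\gamma_{gv}=\gamma_{ev}=\gamma$, so the time part is a common potential. The class-specific operating-cost part is linear and class-separable, hence integrable. With this choice, $\partial\Phi/\partial f^{rs}_{k,m}=C^{rs}_{k,m}(f)$, i.e.\ $\nabla\Phi=F$. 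Since $t_a$ is nondecreasing, $\Phi$ is convex, which is the content of Lemma~\ref{lem:monotone}. I expect this reconciliation of the averaged-cost definition with the class-specific costs to be the main obstacle. If the averaged version must be kept literally, the statement only holds when $C_{gv}=C_{ev}$, and I would add that as a remark.

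\textbf{Step 1: saddle point implies constrained MUE.} I would unpack the two inequalities in \eqref{eq:saddle_point} separately.
\begin{itemize}
\item $\mathcal L$ is affine in $(\lambda,\mu)$. So the left inequality, holding for all $\lambda\ge0$ and all free $\mu$, forces $B_m f^*_m=d_m$ for each $m$, $Af^*\le c$, and $\lambda^{*\top}(Af^*-c)=0$. Otherwise $\mathcal L$ would be unbounded above in the dual variables, or could be increased by moving $\lambda$ toward zero.
\item For any $f\ge0$ satisfying all the constraints, chain the inequalities:
\begin{equation*}
\Phi(f)\;\ge\;\Phi(f)+\lambda^{*\top}(Af-c)\;=\;\mathcal L(f,\lambda^*,\mu^*)\;\ge\;\mathcal L(f^*,\lambda^*,\mu^*)\;=\;\Phi(f^*).
\end{equation*}
The first inequality uses $\lambda^*\ge0$ and $Af\le c$. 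The equality uses that the demand terms vanish for feasible $f$. The last equality uses feasibility of $f^*$ and complementary slackness.
\end{itemize}
Hence $f^*$ minimizes the convex function $\Phi$ over $\Omega\cap\{Af\le c\}$. By the first-order optimality condition and $\nabla\Phi=F$, $f^*$ solves the variational inequality on that set. As in Proposition~\ref{prop:VI}, this is equivalent to the generalized Wardrop conditions with path costs $C^{rs}_{k,m}+(A^\top\lambda^*)_{k,m}$ and equilibrium costs $\pi^{rs}_m=-\mu^{rs}_m$. That is exactly the constrained MUE with $(\lambda^*,\mu^*)$ as its multipliers.

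\textbf{Step 2: constrained MUE with multipliers implies saddle point.} Suppose $f^*$ solves the constrained problem, i.e.\ the VI on the polyhedron $\Omega\cap\{Af\le c\}$. All constraints are affine, so linearity of the constraints already gives a constraint qualification and no Slater condition is needed. Hence KKT multipliers $(\lambda^*\ge0,\mu^*)$ exist, satisfying stationarity over $f\ge0$, primal feasibility, and complementary slackness.
\begin{itemize}
\item Because $\mathcal L(\cdot,\lambda^*,\mu^*)$ is convex, KKT stationarity on the orthant shows that $f^*$ minimizes it over $f\ge0$. This is the right inequality.
\item Feasibility and complementary slackness give $\mathcal L(f^*,\lambda,\mu)=\Phi(f^*)+\lambda^\top(Af^*-c)\le\Phi(f^*)=\mathcal L(f^*,\lambda^*,\mu^*)$ for every $\lambda\ge0$ and every $\mu$. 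This is the left inequality.
\end{itemize}
The same KKT system shows that any multipliers associated with $f^*$ in this sense produce a saddle point, which closes the equivalence.
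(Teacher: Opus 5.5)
Your proof is correct and follows the same route as the paper. The paper's proof simply asserts that the saddle-point conditions amount to the KKT system (stationarity $F(f^*)+A^\top\lambda^*+\tilde B^\top\mu^*-\nu^*=0$, feasibility, complementarity) and reads stationarity as the generalized Wardrop condition; you carry out both directions explicitly, using convexity of $\mathcal L(\cdot,\lambda^*,\mu^*)$ for sufficiency and the affine constraints (no Slater condition) for existence of multipliers.

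Your Step~0 is a repair that the paper's argument silently needs. Writing $F(f^*)$ in the stationarity equation presupposes $\nabla\Phi=F$. With the averaged $\bar t_a$ as the paper defines it, however, $\nabla\Phi$ is identical for both classes, so the GV--EV cost difference disappears. Your class-separable potential, valid because $\gamma_{gv}=\gamma_{ev}$, is what makes the theorem true as stated.

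One small tightening in Step~1. Obtain the Wardrop conditions with these particular multipliers directly from the first-order conditions of the right-hand saddle inequality over $f\ge 0$. These give $C^{rs}_{k,m}(f^*)+(A^\top\lambda^*)_{k,m}\ge-(\mu^*_m)^{rs}$, with equality where $f^{rs,*}_{k,m}>0$. Do not derive them from the VI on $\Omega\cap\{Af\le c\}$ ``as in Proposition~\ref{prop:VI}'': that VI by itself does not single out $(\lambda^*,\mu^*)$, and it needs the converse direction of that proposition.
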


\begin{proof}
The saddle point conditions yield the KKT system:
\begin{equation}\label{eq:KKT_full}
    \boxed{
    \begin{aligned}
        & \text{(Stationarity)} && F(f^*) + A^\top \lambda^* + \tilde{B}^\top \mu^* - \nu^* = 0, \\
        & \text{(Primal feasibility)} && B_m f_m^* = d_m, \ f^* \geq 0, \ Af^* \leq c, \\
        & \text{(Dual feasibility)} && \lambda^* \geq 0, \ \nu^* \geq 0, \\
        & \text{(Complementarity)} && \lambda^* \perp (c - Af^*), \ \nu^* \perp f^*,
    \end{aligned}
    }
\end{equation}
where $\tilde{B} = \text{diag}(B_{gv}, B_{ev})$ and $\nu^*$ enforces non-negativity. The stationarity condition states that at equilibrium, the effective path cost (including shadow prices) equals the OD potential for active paths—precisely the generalized Wardrop condition.
\end{proof}

The saddle point formulation admits a min-max interpretation:
\begin{equation}\label{eq:minmax}
    \min_{f \in \Omega, f \geq 0} \max_{\lambda \geq 0} \mathcal{L}(f, \lambda) = \max_{\lambda \geq 0} \min_{f \in \Omega, f \geq 0} \mathcal{L}(f, \lambda),
\end{equation}
where strong duality holds under Slater's condition (existence of strictly feasible $f$).

\subsubsection{Primal-Dual Algorithm with Convergence Guarantees}

We now present our main algorithmic contribution: a primal-dual method for solving the fixed-class MUE with provable convergence guarantees.

\begin{algorithm}[ht]
\caption{Primal-Dual Gradient Method for Fixed-Class MUE}
\label{alg:primal_dual}
\begin{algorithmic}[1]
\REQUIRE Step sizes $\alpha, \beta > 0$; initial $(f^0 \in \Omega, \lambda^0 \geq 0)$; tolerance $\epsilon > 0$
\ENSURE Approximate equilibrium $(f^*, \lambda^*)$
\STATE Initialize $k \leftarrow 0$
\REPEAT
    \STATE \textbf{// Primal Update (Class-wise Projection)}
    \FOR{each class $m \in \{gv, ev\}$}
        \STATE Compute effective costs: $\tilde{C}_m^k = F_m(f^k) + A_m^\top \lambda^k$
        \STATE Update: $\hat{f}_m^{k+1} = f_m^k - \alpha \tilde{C}_m^k$
        \STATE Project: $f_m^{k+1} = \Pi_{\Omega_m}(\hat{f}_m^{k+1})$
    \ENDFOR
    \STATE \textbf{// Dual Update (Constraint Enforcement)}
    \STATE Compute residual: $r^{k+1} = Af^{k+1} - c$
    \STATE Update: $\lambda^{k+1} = \Pi_{\mathbb{R}^p_+}(\lambda^k + \beta r^{k+1})$
    \STATE \textbf{// Convergence Check}
    \STATE Compute gap: $G^{k+1} = \|f^{k+1} - f^k\|^2 + \|\lambda^{k+1} - \lambda^k\|^2$
    \STATE $k \leftarrow k + 1$
\UNTIL{$G^k < \epsilon$}
\RETURN $(f^k, \lambda^k)$
\end{algorithmic}
\end{algorithm}

The key feature of Algorithm \ref{alg:primal_dual} is the \textit{class-wise projection} in the primal update. Unlike standard projected gradient methods, each class's flow is projected onto its own feasible set $\Omega_m$ independently, respecting the fixed-class structure.

\begin{lemma}[Projection onto Simplex]\label{lem:projection}
For each class $m$ and OD pair $(r,s)$, the projection $\Pi_{\Omega_m}$ onto the probability simplex scaled by demand $d_m^{rs}$ can be computed in $O(|\psi_{rs}| \log |\psi_{rs}|)$ time via sorting.
\end{lemma}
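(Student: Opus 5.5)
The plan is to reduce the projection onto $\Omega_m$ to many independent projections onto scaled simplices, and then to solve each of these through its KKT system, with a single sort as the dominant cost.

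\emph{Step 1: decomposition.} The set $\Omega_m=\{f_m\ge 0: B_m f_m=d_m\}$ is a Cartesian product over OD pairs $(r,s)\in\mathcal{W}$ of the scaled simplices
\begin{equation*}
S^{rs}_m=\Bigl\{y\in\mathbb{R}^{|\psi_{rs}|}_+ : \textstyle\sum_k y_k=d_m^{rs}\Bigr\}.
\end{equation*}
This holds because each path variable enters exactly one demand row of $B_m$. The squared Euclidean distance is separable across these blocks, so $\Pi_{\Omega_m}(z)$ is obtained by projecting each block $z^{rs}$ onto $S^{rs}_m$ independently. It therefore suffices to show that one projection onto $S^{rs}_m$ costs $O(|\psi_{rs}|\log|\psi_{rs}|)$.

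\emph{Step 2: KKT characterization.} For a fixed block, write $p=|\psi_{rs}|$ and $d=d_m^{rs}$. The problem
\begin{equation*}
\min_y \tfrac12\|y-z\|^2 \quad\text{subject to}\quad \textstyle\sum_k y_k=d,\ y\ge 0
\end{equation*}
is strictly convex with a nonempty, closed, convex feasible set (assuming $d\ge0$), so it has a unique minimizer. The constraints are linear, so KKT conditions are necessary and sufficient. With a multiplier $\tau$ for the equality constraint and multipliers $\nu\ge0$ for nonnegativity, stationarity gives $y_k=z_k-\tau+\nu_k$. Complementarity then yields
\begin{equation*}
y_k=\max(z_k-\tau,0).
\end{equation*}
The problem thus reduces to finding the scalar $\tau$ that solves
\begin{equation*}
g(\tau):=\textstyle\sum_k \max(z_k-\tau,0)=d .
\end{equation*}

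\emph{Step 3: computing $\tau$.} The function $g$ is continuous, piecewise linear and nonincreasing. It is strictly decreasing wherever $g>0$, and it runs from $+\infty$ down to $0$, so for $d>0$ the root $\tau$ is unique. For $d=0$ the block is simply $y=0$.

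To locate the root, sort $z$ in decreasing order as $u_1\ge\dots\ge u_p$, at cost $O(p\log p)$. Form the prefix sums $s_j=\sum_{i\le j}u_i$ in $O(p)$. Let $\rho$ be the largest index $j$ with
\begin{equation*}
u_j-\frac{s_j-d}{j}>0,
\end{equation*}
and set $\tau=(s_\rho-d)/\rho$. Recovering $y$ from $\tau$ is then an $O(p)$ pass. The total cost is dominated by the sort, which gives the claimed $O(|\psi_{rs}|\log|\psi_{rs}|)$.

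\emph{Main obstacle: correctness of the threshold rule.} The step that needs care is showing that this $\rho$ equals the number of active (positive) components, so that $\tau$ is indeed the root of $g$. I would argue this in three parts.
\begin{itemize}
\item On the interval between consecutive sorted values, $g$ is affine with slope $-j$, where $j$ is the number of $u_i$ exceeding $\tau$. Hence the candidate $\tau_j=(s_j-d)/j$ is the root of $g$ restricted to the assumption that exactly the top $j$ entries are active.
\item Show that the predicate $u_j>\tau_j$ is monotone in $j$: it holds for all $j\le\rho$ and fails afterwards. This follows from the identity $j(u_j-\tau_j)-(j-1)(u_{j-1}-\tau_{j-1})=(j-1)(u_j-u_{j-1})\le 0$, together with the predicate holding at $j=1$.
\item Verify that $u_{\rho+1}\le\tau_\rho<u_\rho$, so that exactly $\rho$ entries of $z$ lie above $\tau_\rho$ and therefore $g(\tau_\rho)=d$. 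Uniqueness of the root from Step 3 then completes the argument.
\end{itemize}
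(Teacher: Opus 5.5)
The paper states this lemma without any proof (it is the standard sort-and-threshold simplex projection of Held--Wolfe--Crowder and Duchi et al., to which the phrase ``via sorting'' alludes). Your argument is a correct and complete proof of exactly that method: you split $\Omega_m$ into one scaled simplex per OD pair, reduce the KKT system to $y_k=\max(z_k-\tau,0)$, and locate $\tau$ through sorted prefix sums using the identity $j(u_j-\tau_j)-(j-1)(u_{j-1}-\tau_{j-1})=(j-1)(u_j-u_{j-1})$. Your separate treatment of $d_m^{rs}=0$ is genuinely needed, since the threshold predicate already fails at $j=1$ in that case, which occurs for the EV class at $R_e=0$. The final appeal to uniqueness of the root is optional, because $g(\tau_\rho)=d$ together with KKT sufficiency already certifies optimality.
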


\begin{theorem}[Global Convergence]\label{thm:convergence}
Let $(f^k, \lambda^k)$ be the sequence generated by Algorithm \ref{alg:primal_dual}. Suppose:
\begin{enumerate}
    \item[(A1)] The cost operator $F$ is monotone and $L_F$-Lipschitz continuous;
    \item[(A2)] The constraint matrix $A$ satisfies $\|A\| \leq L_A$;
    \item[(A3)] The step sizes satisfy $0 < \alpha < \frac{1}{L_F + L_A^2/\beta}$ and $\beta > 0$.
\end{enumerate}
Then the sequence $(f^k, \lambda^k)$ converges to a saddle point $(f^*, \lambda^*)$ of the Lagrangian $\mathcal{L}$.
\end{theorem}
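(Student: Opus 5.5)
The plan is to cast Algorithm \ref{alg:primal_dual} as a forward-backward (Arrow--Hurwicz-type) iteration for the monotone inclusion associated with the KKT system \eqref{eq:KKT_full}. We then prove Fejér monotonicity of the iterates with respect to the saddle-point set in a suitably weighted norm. Concretely, we introduce the operator $T(f,\lambda) = (F(f) + A^\top\lambda,\; c - Af)$ on $\Omega\times\mathbb{R}^p_+$. By Lemma \ref{lem:monotone} and the skew-symmetry of the bilinear coupling, $T$ is monotone. By Theorem \ref{thm:saddle}, its VI solutions on $\Omega\times\mathbb{R}^p_+$ are exactly the saddle points of $\mathcal{L}$. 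Existence of such a point follows from Theorem \ref{thm:existence} together with Slater's condition, as in \eqref{eq:minmax}.

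The key steps, in order, are as follows.

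\textbf{Primal inequality.} We use the variational characterization of the class-wise projection: $\Pi_{\Omega}=\Pi_{\Omega_{gv}}\times\Pi_{\Omega_{ev}}$, since $\Omega$ is a product set, so the class-wise step is a genuine projection onto $\Omega$ (Lemma \ref{lem:projection}). This gives, for a saddle point $(f^*,\lambda^*)$,
\[
\langle f^{k+1}-f^k+\alpha(F(f^k)+A^\top\lambda^k),\, f^*-f^{k+1}\rangle\ge 0 .
\]

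\textbf{Dual inequality.} Similarly,
\[
\langle \lambda^{k+1}-\lambda^k-\beta(Af^{k+1}-c),\,\lambda^*-\lambda^{k+1}\rangle\ge 0 .
\]

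\textbf{Combination.} We divide the primal inequality by $\alpha$ and the dual one by $\beta$, add them, and insert the saddle-point inequalities $\langle F(f^*)+A^\top\lambda^*, f^{k+1}-f^*\rangle\ge0$ and $\langle c-Af^*,\lambda^{k+1}-\lambda^*\rangle\ge 0$. Monotonicity of $F$ disposes of the $F(f^{k+1})-F(f^*)$ term. What remains are two error terms. The first is the lag $F(f^k)-F(f^{k+1})$, bounded by $L_F\|f^k-f^{k+1}\|$ via (A1). The second is the lag $A^\top(\lambda^k-\lambda^{k+1})$, bounded via (A2) and Young's inequality with weight chosen so that it costs $\frac{L_A^2}{\beta}\|f^{k+1}-f^k\|^2$ on the primal side and at most $\frac{1}{2\beta}\|\lambda^{k+1}-\lambda^k\|^2$ on the dual side.

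\textbf{Descent.} Using the three-point identity $2\langle u-v,w-u\rangle=\|v-w\|^2-\|u-w\|^2-\|u-v\|^2$ yields an inequality of the form
\[
V_{k+1}\le V_k-\Big(\tfrac{1}{\alpha}-L_F-\tfrac{L_A^2}{\beta}\Big)\|f^{k+1}-f^k\|^2-\tfrac{c_0}{\beta}\|\lambda^{k+1}-\lambda^k\|^2,
\]
where $V_k=\frac1\alpha\|f^k-f^*\|^2+\frac1\beta\|\lambda^k-\lambda^*\|^2$ (possibly plus a correction term $\|f^k-f^{k-1}\|^2$ needed to telescope the lag). Assumption (A3) makes the coefficient positive.

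\textbf{Convergence.} Summation gives $\sum_k\big(\|f^{k+1}-f^k\|^2+\|\lambda^{k+1}-\lambda^k\|^2\big)<\infty$ and boundedness of the iterates. For a cluster point, passing to the limit in the projection fixed-point equations (using continuity of $F$ and of the projections) shows it satisfies the KKT system \eqref{eq:KKT_full}, hence is a saddle point. Fejér monotonicity of $V_k$ with respect to every saddle point, combined with an Opial-type argument, then upgrades subsequential convergence to convergence of the whole sequence.

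The main obstacle is the error-term bookkeeping in the combination step. Because the primal step uses only a forward evaluation of $F$ with merely monotone (not cocoercive or strongly monotone) $F$, the lag term $\langle F(f^k)-F(f^{k+1}),f^{k+1}-f^*\rangle$ is not directly controlled by $\|f^{k+1}-f^k\|^2$ alone; plain forward-backward can fail for monotone Lipschitz maps. I would first try to absorb it using the gradient structure near the Beckmann-type potential $\Phi$, i.e. assume $F$ is cocoercive with constant $1/L_F$, which holds when $F=\nabla\Phi$ with $\Phi$ convex and $L_F$-smooth, by Baillon--Haddad. If that identification is not available for $\gamma_{gv}=\gamma_{ev}$ costs, I would fall back to interpreting the method as an extragradient/Tseng correction, or to adding the extra assumption, and state that the step-size bound (A3) is exactly what makes the cocoercive estimate close.
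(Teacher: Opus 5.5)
Your plan follows the paper's own route: the weighted distance $V^k$, the two projection inequalities, the saddle-point inequalities, monotonicity, a descent estimate, then summation. You are also right that the $F$-lag is uncontrolled when $F$ is merely monotone. But the step that actually fails is the coupling term, and you have misidentified it. Add the primal inequality divided by $\alpha$, the dual one divided by $\beta$, and the two saddle-point inequalities. All $A$-terms then collapse exactly to
\[
\langle A(f^{k+1}-f^*),\,\lambda^{k+1}-\lambda^k\rangle .
\]
The dual increment is paired with the \emph{distance} $f^{k+1}-f^*$, not with the increment $f^{k+1}-f^k$. Young's inequality therefore produces $\|f^{k+1}-f^*\|^2$, which the negative increment terms cannot absorb. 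No correction term fixes this: adding the natural cross term $-\langle A(f^k-f^*),\lambda^k-\lambda^*\rangle$ to $V^k$ only moves the residual to $-\langle A(f^{k+1}-f^k),\lambda^k-\lambda^*\rangle$. This is the standard defect of the non-extrapolated Arrow--Hurwicz scheme. Apart from the explicit $F(f^k)$ term, Algorithm~\ref{alg:primal_dual} is a preconditioned proximal-point step for the KKT operator with the \emph{non-symmetric} preconditioner $\begin{pmatrix}\alpha^{-1}I & -A^\top\\ 0 & \beta^{-1}I\end{pmatrix}$. So no Fej\'er-monotone quantity exists, and cocoercivity of $F$ (which does hold for the paper's $F$ when $\gamma_{gv}=\gamma_{ev}$) does not help. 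Your Young split is also inconsistent on its own terms. With weight $\frac{1}{2\beta}$ on $\|\lambda^{k+1}-\lambda^k\|^2$, the primal weight is forced to be $\frac{\beta L_A^2}{2}$. A correct step condition must therefore involve $\alpha\beta L_A^2$, not $L_A^2/\beta$; (A3) gets \emph{weaker} as $\beta$ grows, which is backwards.

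In fact the statement is false as written, so no bookkeeping can close it. Take one class (the other with zero demand), one OD pair with two paths, and $\Omega=\{f\in\mathbb{R}^2_+: f_1+f_2=1\}$. Let $F\equiv(-1,0)$, which is monotone, $1$-Lipschitz, and the gradient of a linear potential, hence cocoercive. Let $A=(1\;\,0)$ and $c=\tfrac12$, so Slater's condition holds. The unique saddle point is $f^*=(\tfrac12,\tfrac12)$, $\lambda^*=1$. Near it no projection is active. With $u=f_1-\tfrac12$ and $v=\lambda-1$, the algorithm reads $u^{k+1}=u^k-\tfrac{\alpha}{2}v^k$, $v^{k+1}=v^k+\beta u^{k+1}$. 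This is a linear map with determinant $1$ and trace $2-\alpha\beta/2$. Take $\alpha=\tfrac14$, $\beta=1$, $L_F=L_A=1$; then (A1)--(A3) hold and the eigenvalues are non-real of modulus one. Every orbit started near, but not at, the saddle point circulates on an ellipse forever, so $G^k\not\to 0$ and the sequence does not converge.

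The paper's proof passes over exactly this term (``algebraic manipulation'', ``boundary terms''). It also has two further problems:
\begin{itemize}
\item It uses $\langle F(f^*),f^k-f^*\rangle\ge 0$, which requires $Af^k\le c$.
\item It infers the Cauchy property from square-summable increments, which is invalid. Your Fej\'er/Opial step would be the correct replacement if Fej\'er monotonicity held.
\end{itemize}

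A provable version requires changing the algorithm, not reinterpreting it as extragradient or Tseng. One option is to extrapolate in the dual step, $\lambda^{k+1}=\Pi_{\mathbb{R}^p_+}(\lambda^k+\beta(A(2f^{k+1}-f^k)-c))$. This makes the iteration a proximal-point step in the symmetric metric $\begin{pmatrix}\alpha^{-1}I & -A^\top\\ -A & \beta^{-1}I\end{pmatrix}$. It yields convergence for cocoercive $F$ under $\frac1\alpha-\beta L_A^2>\frac{L_F}{2}$ (Condat--V\~{u}). The other option is the extragradient scheme of Algorithm~\ref{alg:extra_gradient}.
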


\begin{proof}
We establish convergence through a Lyapunov function argument.

\textit{Step 1: Define the Lyapunov function.} Let $(f^*, \lambda^*)$ be any saddle point. Define:
\begin{equation}
    V^k = \frac{1}{2\alpha}\|f^k - f^*\|^2 + \frac{1}{2\beta}\|\lambda^k - \lambda^*\|^2.
\end{equation}

\textit{Step 2: Analyze the primal update.} By the projection property:
\begin{equation}
    \langle f^{k+1} - f^*, \hat{f}^{k+1} - f^{k+1} \rangle \geq 0.
\end{equation}
Substituting $\hat{f}^{k+1} = f^k - \alpha(F(f^k) + A^\top \lambda^k)$:
\begin{align}
    \|f^{k+1} - f^*\|^2 &\leq \|f^k - f^* - \alpha(F(f^k) + A^\top \lambda^k)\|^2 \nonumber \\
    &= \|f^k - f^*\|^2 - 2\alpha \langle F(f^k) + A^\top \lambda^k, f^k - f^* \rangle \nonumber \\
    &\quad + \alpha^2 \|F(f^k) + A^\top \lambda^k\|^2. \label{eq:primal_bound}
\end{align}

\textit{Step 3: Analyze the dual update.} Similarly:
\begin{align}
    \|\lambda^{k+1} - \lambda^*\|^2 &\leq \|\lambda^k - \lambda^*\|^2 + 2\beta \langle Af^{k+1} - c, \lambda^k - \lambda^* \rangle \nonumber \\
    &\quad + \beta^2 \|Af^{k+1} - c\|^2. \label{eq:dual_bound}
\end{align}

\textit{Step 4: Combine using saddle point properties.} From the saddle point characterization:
\begin{align}
    \langle F(f^*), f^k - f^* \rangle &\geq 0, \label{eq:sp1} \\
    \langle Af^* - c, \lambda^k - \lambda^* \rangle &\leq 0. \label{eq:sp2}
\end{align}
By monotonicity of $F$:
\begin{equation}
    \langle F(f^k) - F(f^*), f^k - f^* \rangle \geq 0. \label{eq:mono}
\end{equation}

\textit{Step 5: Establish descent.} Combining \eqref{eq:primal_bound}–\eqref{eq:mono} and using Young's inequality:
\begin{align}
    V^{k+1} - V^k &\leq -\langle F(f^k), f^k - f^* \rangle - \langle A^\top \lambda^k, f^k - f^* \rangle \nonumber \\
    &\quad + \langle Af^{k+1} - c, \lambda^k - \lambda^* \rangle + \frac{\alpha}{2}\|F(f^k) + A^\top \lambda^k\|^2 \nonumber \\
    &\quad + \frac{\beta}{2}\|Af^{k+1} - c\|^2.
\end{align}

Using the saddle point inequalities and the identity:
\begin{equation}
    \langle Af^{k+1} - c, \lambda^k - \lambda^* \rangle - \langle A^\top \lambda^k, f^k - f^* \rangle = \langle A(f^{k+1} - f^k), \lambda^k \rangle + \text{(boundary terms)},
\end{equation}
we obtain after algebraic manipulation:
\begin{equation}
    V^{k+1} - V^k \leq -\delta \left( \|f^{k+1} - f^k\|^2 + \|\lambda^{k+1} - \lambda^k\|^2 \right),
\end{equation}
for some $\delta > 0$ depending on $\alpha, \beta, L_F, L_A$ under condition (A3).

\textit{Step 6: Conclude convergence.} The sequence $\{V^k\}$ is non-increasing and bounded below by 0, hence convergent. The descent inequality implies:
\begin{equation}
    \sum_{k=0}^\infty \left( \|f^{k+1} - f^k\|^2 + \|\lambda^{k+1} - \lambda^k\|^2 \right) < \infty,
\end{equation}
so $(f^k, \lambda^k)$ is a Cauchy sequence converging to some $(f^\infty, \lambda^\infty)$. By continuity of the KKT conditions, $(f^\infty, \lambda^\infty)$ is a saddle point.
\end{proof}

\begin{theorem}[Convergence Rate]\label{thm:rate}
Under the conditions of Theorem \ref{thm:convergence}, the algorithm achieves:
\begin{enumerate}
    \item[(i)] \textbf{Sublinear rate:} $\min_{0 \leq k \leq K} G^k = O(1/K)$;
    \item[(ii)] \textbf{Linear rate (under strong monotonicity):} If $F$ is $\mu$-strongly monotone, then $V^k \leq (1-\rho)^k V^0$ for some $\rho \in (0,1)$.
\end{enumerate}
\end{theorem}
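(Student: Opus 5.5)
The plan is to build both parts on the Lyapunov descent established in the proof of Theorem \ref{thm:convergence}. With $V^k = \frac{1}{2\alpha}\|f^k - f^*\|^2 + \frac{1}{2\beta}\|\lambda^k - \lambda^*\|^2$, Step 5 there gives
\begin{equation}
    V^{k+1} - V^k \leq -\delta \left( \|f^{k+1} - f^k\|^2 + \|\lambda^{k+1} - \lambda^k\|^2 \right) = -\delta\, G^{k+1}
\end{equation}
under (A3), with $\delta>0$ depending only on $\alpha,\beta,L_F,L_A$.

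\textbf{Part (i).} Summing the descent inequality from $k=0$ to $K-1$ telescopes to $\delta\sum_{k=1}^{K} G^k \leq V^0 - V^K \leq V^0$. A minimum is bounded by the average, so
\begin{equation}
    \min_{1\le k\le K} G^k \leq \frac{V^0}{\delta K} = O(1/K).
\end{equation}
The index shift (whether $G^0$ is defined) only changes constants. This part is routine once Step 5 is accepted.

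\textbf{Part (ii).} Here I would redo Step 5 while keeping the strong monotonicity term instead of discarding it. In the primal analysis, replace the monotonicity inequality \eqref{eq:mono} with
\begin{equation}
    \langle F(f^k)-F(f^*), f^k-f^*\rangle \geq \mu\|f^k-f^*\|^2.
\end{equation}
This yields an extra $-\mu\|f^k - f^*\|^2$ on the right-hand side, giving
\begin{equation}
    V^{k+1} \leq V^k - \mu\|f^k-f^*\|^2 - \delta G^{k+1}.
\end{equation}
The primal part of $V^k$ is then directly contracted: $\mu\|f^k-f^*\|^2 = 2\alpha\mu\cdot\frac{1}{2\alpha}\|f^k-f^*\|^2$.

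To contract the dual part as well, I would bound $\|\lambda^{k+1}-\lambda^*\|$ by primal quantities. Nonexpansiveness of $\Pi_{\mathbb{R}^p_+}$, together with the fixed-point relation $\lambda^* = \Pi_{\mathbb{R}^p_+}(\lambda^*+\beta(Af^*-c))$, gives an inequality in terms of $\|\lambda^k-\lambda^*\|$ and $\beta L_A\|f^{k+1}-f^*\|$. I would combine this with a local error-bound or regularity property of the KKT system \eqref{eq:KKT_full}: under a linear-independence condition on the active rows of $A$, the stationarity equation gives $\|A_{\mathcal{I}}^\top(\lambda-\lambda^*)\| \lesssim \|f^{k+1}-f^k\| + L_F\|f^k-f^*\|$. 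This controls the dual error by $G^{k+1}$ and the primal error. The result is $\|\lambda^{k}-\lambda^*\|^2 \le \kappa(\|f^k-f^*\|^2 + G^{k+1})$ for some constant $\kappa$. Hence $\mu\|f^k-f^*\|^2+\delta G^{k+1} \geq \rho V^k$ with $\rho = \min\{2\alpha\mu,\ldots\}/(1+\kappa)\in(0,1)$. This gives $V^{k+1}\le(1-\rho)V^k$, and iterating yields $V^k\le(1-\rho)^kV^0$.

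\textbf{Main obstacle.} The dual contraction in part (ii) is the hard step. Strong monotonicity of $F$ gives no curvature in $\lambda$, since the Lagrangian is linear in $\lambda$. A linear rate for $V^k$ therefore requires a constraint qualification, such as full row rank of the active part of $A$ or a metric-subregularity/error-bound property of the KKT map, which the statement does not list explicitly. I would first try the error-bound route above. If it cannot be justified from (A1)--(A3) alone, I would state (ii) under an added full-row-rank assumption on $A$, or restrict the linear rate to the primal error $\|f^k-f^*\|$, which follows from strong monotonicity plus part (i)-type summability.
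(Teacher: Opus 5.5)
Your part (i) is the paper's argument: the paper also telescopes the Step~5 inequality from the proof of Theorem~\ref{thm:convergence} and bounds the minimum by the average. You also correctly fix its slip of summing $G^k$ instead of $G^{k+1}$. For part (ii), the paper's entire proof is the assertion that strong monotonicity ``strengthens'' the descent to $V^{k+1}\le(1-2\alpha\mu)V^k$. Your remark that $\mu\|f^k-f^*\|^2=2\alpha\mu\cdot\frac{1}{2\alpha}\|f^k-f^*\|^2$ contracts only the primal block of $V^k$ identifies exactly what that sentence skips. The genuine gap, however, is upstream, in the inequality you agree to accept. Step~5 is never actually derived (the ``boundary terms'' are not controlled), and it is false. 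So is your strengthened version $V^{k+1}\le V^k-\mu\|f^k-f^*\|^2-\delta G^{k+1}$.

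Here is a counterexample. Take one class ($d_{ev}=0$), one OD pair with two paths, $\Omega=\{f\in\mathbb{R}^2_+: f_1+f_2=1\}$, a single row $A=(1\;\,0)$, $c=\tfrac12$, and $F(f)=(0,1)^\top+\mu(f-f^*)$ with $\mu\ge 0$. The unique saddle point is $f^*=(\tfrac12,\tfrac12)$, $\lambda^*=1$. The constraint is active, $A$ has full row rank, and complementarity is strict. In $u=f_1-\tfrac12$, $w=\lambda-1$, Algorithm~\ref{alg:primal_dual} near the saddle point is the linear map $(u,w)\mapsto(u',\,w+\beta u')$ with $u'=(1-\alpha\mu)u-\tfrac{\alpha}{2}w$. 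Starting from $w=0$, $u=u_0$ small, one step gives
\[
\frac{V^1}{V^0}=(1-\alpha\mu)^2\Bigl(1+\frac{\alpha\beta}{2}\Bigr),
\]
which exceeds $1$ whenever $\beta>4\mu$ and $\alpha$ is small, as (A3) allows.

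For $\mu=0$ (constant $F$, permitted by (A1)), the map has determinant $1$ and trace $2-\tfrac{\alpha\beta}{2}\in(-2,2)$. The iterates therefore circle the saddle point on an ellipse and $G^k$ stays bounded away from zero, so statement (i) itself fails for this scheme under (A1)--(A3).

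For $\mu>0$, the same computation refutes $V^k\le(1-\rho)^kV^0$ already at $k=1$. This happens even though every constraint qualification you propose holds, so no error bound on $\|\lambda^k-\lambda^*\|$ can rescue an argument built on your descent inequality. Your fallback fails too: summability of $\mu\|f^k-f^*\|^2$ yields $O(1/K)$, not a geometric rate. Note also that in the actual two-class model $F$ is never strongly monotone in path space. Shifting GV flow one way and EV flow the other way between two paths of one OD pair leaves $x$, hence $F$, unchanged.

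What survives is weaker. For $\mu>0$ the map above has determinant $1-\alpha\mu$ and, for small $\alpha$, spectral radius below one. So strong monotonicity plus a constraint qualification can give a \emph{local R-linear} rate, not a Q-linear contraction of $V$ from $k=0$. An $O(1/K)$ residual bound is available for the extragradient scheme. Its Fej\'er inequality in Theorem~\ref{thm:eg_convergence} is a genuine estimate for monotone Lipschitz $T$, and it telescopes to $\min_{k<K}\|z^k-\tilde z^k\|^2\le\|z^0-z^*\|^2/\bigl((1-\tau^2L^2)K\bigr)$.
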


\begin{proof}
\textit{(i)} Summing the descent inequality over $k = 0, \ldots, K-1$:
\begin{equation}
    \delta \sum_{k=0}^{K-1} G^k \leq V^0 - V^K \leq V^0.
\end{equation}
Hence $\min_{k \leq K-1} G^k \leq V^0 / (\delta K) = O(1/K)$.

\textit{(ii)} Under $\mu$-strong monotonicity, the descent inequality strengthens to:
\begin{equation}
    V^{k+1} \leq (1 - 2\alpha\mu) V^k,
\end{equation}
yielding geometric convergence with $\rho = 2\alpha\mu$.
\end{proof}

\subsubsection{Extra-Gradient Variant for Enhanced Stability}

For networks with highly heterogeneous cost structures between GVs and EVs, we propose an extra-gradient variant that provides improved stability.

\begin{algorithm}[ht]
\caption{Extra-Gradient Method for Fixed-Class MUE}
\label{alg:extra_gradient}
\begin{algorithmic}[1]
\REQUIRE Step size $\tau > 0$; initial $(f^0 \in \Omega, \lambda^0 \geq 0)$
\ENSURE Approximate equilibrium $(f^*, \lambda^*)$
\REPEAT
    \STATE \textbf{// Extrapolation Step}
    \STATE $\tilde{f}^k = \Pi_\Omega(f^k - \tau[F(f^k) + A^\top \lambda^k])$
    \STATE $\tilde{\lambda}^k = \Pi_{\mathbb{R}^p_+}(\lambda^k + \tau(Af^k - c))$
    \STATE \textbf{// Main Update Step}
    \STATE $f^{k+1} = \Pi_\Omega(f^k - \tau[F(\tilde{f}^k) + A^\top \tilde{\lambda}^k])$
    \STATE $\lambda^{k+1} = \Pi_{\mathbb{R}^p_+}(\lambda^k + \tau(A\tilde{f}^k - c))$
    \STATE $k \leftarrow k + 1$
\UNTIL{convergence}
\end{algorithmic}
\end{algorithm}

\begin{theorem}[Convergence of Extra-Gradient Method]\label{thm:eg_convergence}
Let $T: (f, \lambda) \mapsto (F(f) + A^\top \lambda, c - Af)$ be the saddle point operator. If $T$ is monotone and $L$-Lipschitz, then Algorithm \ref{alg:extra_gradient} with step size $\tau \in (0, 1/L)$ converges to a saddle point $(f^*, \lambda^*)$.
\end{theorem}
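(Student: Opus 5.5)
The plan is to recast Algorithm \ref{alg:extra_gradient} as Korpelevich's extragradient method for a monotone variational inequality on a product set, and then run the standard Fejér-monotonicity argument. Write $z=(f,\lambda)$, $Z=\Omega\times\mathbb{R}^p_+$ (closed and convex, with $\Omega=\Omega_{gv}\times\Omega_{ev}$), and $T(z)=(F(f)+A^\top\lambda,\; c-Af)$. Projection onto a product set splits componentwise, so the updates read $\tilde z^k=\Pi_Z(z^k-\tau T(z^k))$ and $z^{k+1}=\Pi_Z(z^k-\tau T(\tilde z^k))$. The dual lines take the required form because $\lambda-\tau(c-Af)=\lambda+\tau(Af-c)$. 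By Theorem \ref{thm:saddle} and the KKT system \eqref{eq:KKT_full}, the saddle points of $\mathcal{L}$ are exactly the solutions $z^*\in Z$ of $\mathrm{VI}(T,Z)$, that is, $\langle T(z^*),z-z^*\rangle\ge 0$ for all $z\in Z$. I will assume this set is nonempty (Slater, as in \eqref{eq:minmax}). Note that monotonicity of $T$ holds automatically given Lemma \ref{lem:monotone}, since the bilinear part is skew: $\langle A^\top(\lambda-\lambda'),f-f'\rangle-\langle A(f-f'),\lambda-\lambda'\rangle=0$.

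The key step is the one-iteration estimate. Fix a solution $z^*$. Apply the projection characterization $\langle y-\Pi_Z(y),w-\Pi_Z(y)\rangle\le 0$ twice.
\begin{itemize}
\item For $z^{k+1}$, use $w=z^*$. Expanding gives $\|z^{k+1}-z^*\|^2\le\|z^k-z^*\|^2-\|z^{k+1}-z^k\|^2-2\tau\langle T(\tilde z^k),z^{k+1}-z^*\rangle$.
\item Split $z^{k+1}-z^*=(z^{k+1}-\tilde z^k)+(\tilde z^k-z^*)$. Monotonicity together with the VI gives $\langle T(\tilde z^k),\tilde z^k-z^*\rangle\ge\langle T(z^*),\tilde z^k-z^*\rangle\ge 0$, so that piece can be dropped.
\item For $\tilde z^k$, use $w=z^{k+1}$. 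This controls $-2\tau\langle T(z^k),z^{k+1}-\tilde z^k\rangle$ by the terms $\|z^{k+1}-z^k\|^2-\|\tilde z^k-z^k\|^2-\|z^{k+1}-\tilde z^k\|^2$.
\item The mismatch $2\tau\langle T(z^k)-T(\tilde z^k),z^{k+1}-\tilde z^k\rangle$ is bounded by Cauchy--Schwarz, $L$-Lipschitzness and Young's inequality, giving at most $\tau^2L^2\|z^k-\tilde z^k\|^2+\|z^{k+1}-\tilde z^k\|^2$.
\end{itemize}
Collecting terms yields the target inequality
\begin{equation*}
\|z^{k+1}-z^*\|^2\le\|z^k-z^*\|^2-(1-\tau^2L^2)\|z^k-\tilde z^k\|^2 .
\end{equation*}
I expect this bookkeeping to be the main obstacle. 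The cross terms must cancel exactly, and this is where $\tau<1/L$ is used.

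The conclusion is standard. The sequence $\{z^k\}$ is Fejér monotone with respect to the solution set, so it is bounded. Summing the inequality gives $\sum_k\|z^k-\tilde z^k\|^2<\infty$, so $\|z^k-\tilde z^k\|\to 0$. Take a cluster point $\bar z$ along a subsequence. Continuity of $T$ and of $\Pi_Z$ in the relation $\tilde z^k=\Pi_Z(z^k-\tau T(z^k))$ give $\bar z=\Pi_Z(\bar z-\tau T(\bar z))$. This is equivalent to $\bar z$ solving $\mathrm{VI}(T,Z)$, so $\bar z$ is a saddle point. Finally, rerun the Fejér inequality with $z^*=\bar z$. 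Then $\|z^k-\bar z\|$ is nonincreasing and has a subsequence tending to $0$, so the whole sequence converges to $\bar z$.
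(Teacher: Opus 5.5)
Your proposal is correct and follows the same route as the paper: it casts Algorithm~\ref{alg:extra_gradient} as Korpelevich's extragradient method for the monotone inclusion over $\Omega\times\mathbb{R}^p_+$, and it rests on the Fej\'er inequality $\|z^{k+1}-z^*\|^2\le\|z^k-z^*\|^2-(1-\tau^2L^2)\|z^k-\tilde z^k\|^2$. The paper only cites that inequality and asserts that Fej\'er monotonicity implies convergence. Your derivation of the estimate, and your cluster-point argument (a limit point satisfies $\bar z=\Pi_Z(\bar z-\tau T(\bar z))$, and you then rerun the Fej\'er bound with $z^*=\bar z$), supply exactly the steps the paper omits.
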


\begin{proof}
The extra-gradient method can be viewed as a forward-backward splitting applied to the monotone inclusion $0 \in T(z) + N_{\Omega \times \mathbb{R}^p_+}(z)$ where $z = (f, \lambda)$. Following \citet{korpelevich1976extragradient}, define:
\begin{equation}
    \|z^{k+1} - z^*\|^2 \leq \|z^k - z^*\|^2 - (1 - \tau^2 L^2)\|z^k - \tilde{z}^k\|^2.
\end{equation}
For $\tau < 1/L$, the coefficient $(1 - \tau^2 L^2) > 0$, ensuring monotonic decrease of distance to the solution set. The sequence is Fejér monotone and hence converges.
\end{proof}

\subsubsection{Stability Analysis under Penetration Shift}

We now analyze how the equilibrium evolves as the EV penetration rate changes, providing theoretical foundations for the empirically observed ``plateau phases.''

Let $R_{e}\in [0,1]$ denote the EV penetration rate. The class-specific demands are:
\begin{equation}
    d_{ev}^{rs}(R_{e}) = R_{e} \cdot d^{rs}, \quad d_{gv}^{rs}(R_{e}) = (1-R_{e}) \cdot d^{rs},
\end{equation}
where $d^{rs}$ is the total OD demand.

\begin{definition}[Active Path Set]
For equilibrium $f^*(R_{e})$, the active path set for class $m$ between $(r,s)$ is:
\begin{equation}
    \mathcal{A}_m^{rs}(R_{e}) := \left\{ k \in \psi_{rs} : f_{k,m}^{rs,*}(R_{e}) > 0 \right\}.
\end{equation}
\end{definition}

\begin{theorem}[Equilibrium Sensitivity to Penetration Rate]\label{thm:plateau}
Suppose $F$ is strictly monotone and continuously differentiable. Then:
\begin{enumerate}
    \item[(i)] \textbf{(Local Lipschitz Continuity)} The equilibrium link flow $x^*(R_{e})$ is Lipschitz continuous in $R_{e}$:
    \begin{equation}
        \|x^*(R_e) - x^*(R'_e)\| \leq L \cdot |R_e - R'_e|,
    \end{equation}
    where $L > 0$ depends on network structure and cost parameters.

    \item[(ii)] \textbf{(Reduced Sensitivity Condition)} Define the \textit{path overlap ratio} as:
    \begin{equation}
        \rho(R_e) = \frac{|\mathcal{A}_{gv}(R_e) \cap \mathcal{A}_{ev}(R_e)|}{|\mathcal{A}_{gv}(R_e) \cup \mathcal{A}_{ev}(R_e)|},
    \end{equation}
    where $\mathcal{A}_m(R_e) = \bigcup_{(r,s)} \mathcal{A}_m^{rs}(R_e)$. When $\rho(R_e) \to 1$ (high path overlap), the sensitivity $\|\partial x^*/\partial R_e\|$ diminishes.

    \item[(iii)] \textbf{(Transition Points)} Equilibrium pattern transitions occur when some path switches between active and inactive status, i.e., when $f_{k,m}^{rs,*}(R_{e}) \to 0$ for some $(k, m, r, s)$.
\end{enumerate}
\end{theorem}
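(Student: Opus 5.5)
Part (i) is a parametric variational-inequality stability statement, so I would work at the level of link flows. Since $\gamma_{gv}=\gamma_{ev}=\gamma$, the path cost operator factors as $F(f)=\gamma\,\Delta^\top t(\Delta f)+ (C_{gv}\Delta_{gv}^\top l,\;C_{ev}\Delta_{ev}^\top l)$: a separable congestion part plus a constant class-specific toll. Write $f^*=f^*(R_e)$ and $f'=f^*(R'_e)$ for two equilibria. I would build a feasible comparison point by rescaling: $\hat f_{ev}=(R'_e/R_e)f^*_{ev}$ and $\hat f_{gv}=((1-R'_e)/(1-R_e))f^*_{gv}$, assuming the denominators are nonzero; the endpoints can be handled by continuity or by a convex-combination construction. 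This point lies in $\Omega(R'_e)$, and $\|\hat f-f^*\|\le C_0|R_e-R'_e|\,\|d\|$. Do the same construction in the reverse direction, then add the two VI inequalities from Proposition~\ref{prop:VI}. The result is
\begin{equation*}
\langle F(f^*)-F(f'),\,f^*-f'\rangle\le \langle F(f^*),\hat f-f^*\rangle+\langle F(f'),\hat f'-f'\rangle ,
\end{equation*}
and the right-hand side is $O(|R_e-R'_e|)$ because $F$ is bounded on the compact feasible sets. By the Jacobian identity in Lemma~\ref{lem:monotone}, the left-hand side is $\gamma\langle t(x^*)-t(x'),x^*-x'\rangle$. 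The key step is a strong-monotonicity lower bound $\ge \gamma m\|x^*-x'\|^2$ with $m=\min_a\min_{x\in[0,\bar x]} t_a'$. With that bound the argument only yields a Hölder-$\tfrac12$ estimate. To upgrade it to Lipschitz, I would use a sharper comparison point. Instead of pairing the rescaled point with $f^*$ itself, I would insert $\hat f$ into both VIs and use Lipschitz continuity of $t$ (which holds since $F$ is $C^1$ on a compact set). This gives $m\|x^*-x'\|^2\le L_t\|x^*-x'\|\,\|\hat f-f^*\|+O(|R_e-R'_e|^2)$, and hence $\|x^*-x'\|\le L|R_e-R'_e|$. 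The main obstacle is obtaining $m>0$: for BPR with $\beta>1$, $t_a'(0)=0$. I would either assume link flows stay bounded away from zero on used links, or invoke the strict-monotonicity clause of Lemma~\ref{lem:monotone} together with compactness to get a uniform constant on the relevant subspace. I would state this extra regularity assumption explicitly.

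For (ii), I would differentiate on an interval where the active sets are constant. There the equilibrium solves a linear system obtained by differentiating the equalities $C_{k,m}(f)=\pi_m^{rs}$ for $k\in\mathcal{A}_m^{rs}$ together with $B_mf_m=d_m(R_e)$. The derivative $\partial x^*/\partial R_e$ is then driven only by the right-hand side $(\!-d,\,d)$, i.e., by moving demand from the GV class to the EV class. If every active GV path is also active for EVs, then $\rho=1$. In that case, converting a GV trip into an EV trip along paths in the common active set leaves $x$ unchanged to first order. The reason is that the only difference between the classes is the constant toll $C_m l$, which does not enter the Jacobian. So the derivative is carried only by the non-overlapping paths, and I would bound $\|\partial x^*/\partial R_e\|\le K(1-\rho)$ via the norm of the reduced-system inverse. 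The delicate point is that $\rho$ counts paths rather than flows, so this bound is heuristic. I would phrase the statement as holding under a bounded condition number of the reduced Jacobian.

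For (iii), I would argue that on any open interval with constant active sets and nondegenerate complementarity, the implicit function theorem applied to the reduced system above gives a smooth, unique continuation $f^*(R_e)$. Consequently, the qualitative pattern can only change at values where strict complementarity fails. This happens when a positive path flow hits zero, or when an unused path's cost reaches $\pi_m^{rs}$. By continuity of $x^*$ from (i), the cost of an unused path reaching $\pi_m^{rs}$ coincides with a path about to enter the active set. That is precisely the case $f_{k,m}^{rs,*}\to 0$ (approached from the active side), as claimed.
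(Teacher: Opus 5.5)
\textbf{Part (i): the uniform strong-monotonicity constant is not available.} Your route differs from the paper's. The paper differentiates the KKT system between transition points and bounds $L$ by $\|(\partial G/\partial f)^{-1}\|\,\|\partial G/\partial R_e\|$. A VI comparison argument has a real advantage: when it works, it gives one global constant across transition points, with no differentiability or nondegeneracy assumptions. But your key step fails under the stated hypotheses.

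For BPR with $\beta>1$ (every calibrated city has $\beta\in[1.2,2]$), $t_a'(0)=0$, so $m=\min_a\min_{[0,\bar x]}t_a'=0$. Compactness cannot turn strict monotonicity into a uniform constant. On a single link, $(t_a(2\varepsilon)-t_a(\varepsilon))\varepsilon/\varepsilon^2=O(\varepsilon^{\beta-1})\to 0$. Your alternative hypothesis, that used-link flows stay bounded away from zero, fails as soon as the set of used links changes with $R_e$. By continuity, a link switching on or off passes through arbitrarily small flows, and the paper's road-utilization results show such switching occurs. So at best you get constants on subintervals that blow up at their ends.

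Without $m$, consider what your estimate does give. The correct pairing is $\langle F(f^*)-F(f'),f^*-f'\rangle\le\langle F(f'),\hat f-f^*\rangle+\langle F(f^*),\hat f'-f'\rangle$, and its right-hand side is at most $K|R_e-R'_e|\,\|x^*-x'\|$. You cannot get this by inserting $\hat f$ into both VIs, since $\hat f\notin\Omega(R_e)$. It holds because of three facts:
\begin{enumerate}
\item for your scaling factors $s_m$, the exact cancellation $(1/s_m-1)\,d_m(R'_e)=-(s_m-1)\,d_m(R_e)$ holds;
\item $\langle F_m(f^*),f^*_m\rangle=\pi_m^\top d_m(R_e)$;
\item the analogous identity holds at $R'_e$.
\end{enumerate}
Together these leave only $(s_m-1)(\pi_m-\pi'_m)^\top d_m(R_e)$ plus a term linear in $t(x^*)-t(x')$. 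Combined with $(u^\beta-v^\beta)(u-v)\ge|u-v|^{\beta+1}$, this yields only $\|x^*-x'\|=O(|R_e-R'_e|^{1/\beta})$. So as it stands, (i) is established only for $\beta=1$, or when the set of used links never changes. In general you need something beyond a link-wise energy estimate. One option is a network argument showing that links carrying near-zero flow can absorb only $O(|R_e-R'_e|)$ of rerouted flow. The paper's proof does not close this either: it asserts a bound on $\|(\partial G/\partial f)^{-1}\|$, uniform in $R_e$, without deriving it.

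\textbf{Parts (ii) and (iii).} Here you follow essentially the paper's route: linearize the KKT system on intervals with constant active sets and apply the implicit function theorem. You share one defect with the paper and improve on it in one place.

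The defect is that the reduced KKT matrix in path variables is singular whenever active paths have linearly dependent link-incidence vectors. This is routine in grids. It is also automatic if both classes use two paths $k,k'$ of one OD pair: shifting $\varepsilon$ from $k'$ to $k$ for GVs and from $k$ to $k'$ for EVs leaves every link flow unchanged. So there is no unique continuation of $f^*$. The paper's claim that strict monotonicity makes $\partial G/\partial f$ nonsingular fails for the same reason. Work with link flows instead. On such an interval, $\dot x=\partial x^*/\partial R_e$ is the unique link-flow part of any minimizer of $\tfrac{\gamma}{2}\sum_a t_a'(x^*_a)\dot x_a^2$. The minimization is over $\dot f$ supported on active (class, path) pairs with $\sum_k\dot f^{rs}_{k,ev}=d^{rs}=-\sum_k\dot f^{rs}_{k,gv}$.

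The improvement is that your relabeling mechanism is the correct one. The paper's proof attributes reduced sensitivity to $\partial G/\partial R_e$ being small, but that term is $(-d,d)$ whatever the overlap. In the QP form above, your mechanism proves something sharper: if every OD pair has at least one path active for both classes, then $\dot x=0$ is feasible, so $\partial x^*/\partial R_e=0$. This also shows that a bound of the form $K(1-\rho)$ is not the right statement, because $\rho$ counts paths. If GVs are indifferent among five paths and EVs use only the longest of them, then $\rho=1/5$, yet the sensitivity is zero.
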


\begin{proof}
\textit{Step 1:} Define the equilibrium mapping $f^*: [0,1] \to \Omega$ implicitly by the KKT conditions $G(f^*(R_e), R_e) = 0$.

\textit{Step 2:} By the Implicit Function Theorem, $f^*(R_e)$ is differentiable where the Jacobian $\partial G/\partial f$ is non-singular, which is guaranteed by strict monotonicity of $F$.

\textit{Step 3:} The Jacobian becomes singular precisely when some complementarity condition $f_{k,m}^{rs,*} \cdot (C_{k,m}^{rs} - \pi_m^{rs,*}) = 0$ transitions between binding regimes, corresponding to paths switching between active and inactive.

\textit{Step 4:} Between such transition points, $f^*(R_e)$ is smooth, and we can bound $\|\partial x^*/\partial R_e\|$ using:
\begin{equation}
    \frac{\partial x^*}{\partial R_e} = -\left(\frac{\partial G}{\partial f}\right)^{-1} \frac{\partial G}{\partial R_e}.
\end{equation}
The Lipschitz constant $L$ is bounded by $\|(\partial G/\partial f)^{-1}\| \cdot \|\partial G/\partial R_e\|$.

\textit{Step 5:} When $\mathcal{A}_{gv} \approx \mathcal{A}_{ev}$ (high overlap, $\rho \to 1$), the term $\partial G/\partial R_e$ becomes small because demand redistribution between classes has minimal effect on aggregate flows. Specifically, if both classes use similar paths, shifting demand from GV to EV merely relabels flows without substantially changing link loads, reducing $\|\partial x^*/\partial R_e\|$.
\end{proof}
% === END MODIFIED THEOREM ===

\begin{corollary}[Critical Penetration Thresholds]\label{cor:threshold}
Suppose at $R_{e} = 0$ (pure GV), the active path sets are $\mathcal{P}_{gv}^0$ for each OD pair. Define the critical thresholds:
\begin{equation}
    R_{e,c}^{(i)} := \inf\left\{ R_{e} > R_{e,c}^{(i-1)} : \exists k \in \mathcal{P}_{gv}^{i-1}, \ f_{k,gv}^{rs,*}(R_{e}) = 0 \right\},
\end{equation}
with $R_{e,c}^{(0)} = 0$. Then the system exhibits plateau phases in $[0, R_{e,c}^{(1)}], [R_{e,c}^{(1)}, R_{e,c}^{(2)}], \ldots$, with qualitative equilibrium changes occurring only at the boundaries.
\end{corollary}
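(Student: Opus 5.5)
I would derive Corollary \ref{cor:threshold} directly from parts (i) and (iii) of Theorem \ref{thm:plateau}. The thresholds $R_{e,c}^{(i)}$ are, by construction, exactly the penetration rates at which some previously active path loses its flow. These are the transition points of Theorem \ref{thm:plateau}(iii). So the claim comes down to showing that inside each open interval $(R_{e,c}^{(i-1)}, R_{e,c}^{(i)})$ the active path sets $\mathcal{A}_m^{rs}(R_e)$ stay constant, and that the equilibrium depends smoothly on $R_e$ there.

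\textbf{Step 1 (well-definedness).} I first check that the recursion defining $R_{e,c}^{(i)}$ makes sense. Each infimum is taken over a subset of $[0,1]$. If the set is empty, I set the threshold to $1$ and the sequence stops. Monotonicity $R_{e,c}^{(i)} \ge R_{e,c}^{(i-1)}$ holds by definition. For the sequence to be finite, I plan to argue that only finitely many index tuples $(k,m,r,s)$ exist, and that generically each path changes active status finitely often. I would state this genericity explicitly as an assumption; otherwise the thresholds could accumulate.

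\textbf{Step 2 (constancy of active sets between thresholds).} Fix an open interval $I_i=(R_{e,c}^{(i-1)},R_{e,c}^{(i)})$. Theorem \ref{thm:plateau}(i) gives Lipschitz continuity of $x^*$, hence continuity of the path costs $C_{k,m}^{rs}$. I will use this together with the Wardrop conditions \eqref{eq:wardrop} in two directions.
\begin{itemize}
\item A path with strictly positive flow cannot become inactive inside $I_i$, because by the definition of $R_{e,c}^{(i)}$ no flow hits zero there.
\item A path that is inactive with a strict cost gap $C_{k,m}^{rs} > \pi_m^{rs,*}$ stays inactive while the gap persists, by continuity.
\end{itemize}
Paths with a zero-flow, zero-gap (degenerate) status are exactly where Step 3 of the proof of Theorem \ref{thm:plateau} places the singularity of the Jacobian. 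I would exclude them by a strict complementarity assumption in the interior of $I_i$. With the active sets fixed, the KKT system reduces to a square smooth system, and the Implicit Function Theorem gives a smooth $f^*(R_e)$ on $I_i$.

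\textbf{Step 3 (conclusion).} Since the combinatorial structure of the equilibrium does not change on $I_i$, any qualitative change can only occur at the endpoints $R_{e,c}^{(i)}$. Taking closures yields the stated intervals $[R_{e,c}^{(i-1)},R_{e,c}^{(i)}]$.

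The main obstacle is Step 2. The definition of $R_{e,c}^{(i)}$ only tracks GV paths in $\mathcal{P}_{gv}^{i-1}$ \emph{dropping out}. It does not track new paths \emph{entering} the active set, and it does not track EV paths. To close this gap I would try to enlarge the event set so that it includes activation events, i.e.\ $C_{k,m}^{rs}=\pi_m^{rs,*}$ with zero flow. I would then either fold these events into the thresholds or show, via the sequential displacement mechanism, that each activation of an EV path coincides with a GV path dropping out. A second limitation is that ``plateau'' here has to be read as ``no qualitative change''. The quantitative bound $|\partial T_{MUE}/\partial R_e|<\epsilon$ of Definition \ref{def:plateau_phase} would additionally require the overlap argument of Theorem \ref{thm:plateau}(ii), and I would not claim that part.
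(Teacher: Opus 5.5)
\textbf{There is a genuine gap: Step 2 assumes what it needs to prove, and the statement as written is false.}

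Your route is the one the paper takes. The paper gives no separate proof; it reads the corollary off Theorem \ref{thm:plateau}(iii) and Step 4 of that theorem's proof. Your diagnosis of the obstacle is also correct. But the proposal does not overcome it.

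\textbf{Step 2 is circular.} Assuming strict complementarity on the whole interior of $I_i$ is not a harmless genericity condition. A path that switches on at an interior point $\bar R_e$ has zero flow and zero cost gap there. Such degenerate points are generic in a one-parameter family, since every active-set change passes through one. So the assumption is equivalent to ``no activation inside $I_i$'', which is the conclusion.

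\textbf{Your alternative repair fails, and the corollary itself is false.} Take the paper's dual-route network with $\gamma=0.3$, $C_{gv}=0.6$, $C_{ev}=0.3$, $q=100$, $t_a=12+0.1x_a$, $t_b=11.25+0.05625x_b$. With $x_a+x_b=100$ one has $C^{gv}_b-C^{gv}_a=2.3625-0.046875\,x_a$, and the EV difference is smaller by $(C_{gv}-C_{ev})(l_b-l_a)=0.45$. The equilibrium is:
\begin{itemize}
\item for $R_e\in[0,0.496]$: GVs split, $x_a=50.4$ is fixed, and all EVs use $b$;
\item for $R_e\in[0.496,0.592]$: GVs use only $a$, EVs use only $b$, and $x_a=100(1-R_e)$;
\item for $R_e\in[0.592,1]$: EVs split, $x_a=40.8$ is fixed, and GVs use $a$.
\end{itemize}
The corollary's thresholds are $R_{e,c}^{(1)}=0.496$, where GV path $b$ empties, and then $R_{e,c}^{(2)}=1$, since $\mathcal{P}^{1}_{gv}=\{a\}$. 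The corollary therefore asserts that $[0.496,1]$ is a plateau with no interior qualitative change. Yet at $R_e=0.592$ EV path $a$ switches on while no GV path leaves, so your repair (every EV activation coincides with a GV drop-out) fails. At the same point $x^*(R_e)$ has a kink. On $[0.496,0.592]$ the flow-weighted average travel time falls from $15.55$ to $15.19$ min, which is all of the improvement in this example.

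\textbf{Your first option does not give plateaus either.} Enlarging the event set to all activations and deactivations of either class fixes the ``changes only at the boundaries'' clause. But the middle interval $[0.496,0.592]$ then has constant active sets and $|dT/dR_e|\in[2.25,5.25]$ min per unit of $R_e$. This exceeds $3\epsilon$ for the paper's $\epsilon$, so by Definitions \ref{def:plateau_phase} and \ref{def:transition_zone} it is a transition zone, not a plateau. In this example the flat intervals are those on which one class is split across both routes, so that class's indifference condition pins $x^*$. Constancy of active sets is not what produces them.

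\textbf{What can be proved, and the assumption it still needs.} The honest statement is a reformulation: with thresholds defined as all active-set changes of both classes, $x^*(R_e)$ is smooth between consecutive thresholds. Even this needs a hypothesis you pass over. Theorem \ref{thm:existence} gives uniqueness of $x^*$ only. Path flows and class-specific flows are generally not unique, so $f^{rs,*}_{k,gv}(R_e)$ and the active sets need not be well defined. Moreover, the reduced KKT Jacobian in path space is singular whenever some nonzero direction on the active paths preserves each class's OD demands and leaves link flows unchanged. Examples are routes through two consecutive pairs of parallel links, or two equal-length paths shared by both classes. Strict monotonicity of $F$ does not exclude this. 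The Implicit Function Theorem step therefore needs an explicit nondegeneracy assumption of exactly that form.
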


\begin{remark}[Interpretation of Plateau Phases and Cascading Effects]
This theorem provides the theoretical underpinning for the transition zone (Definition \ref{def:transition_zone}) observed in numerical experiments. Crucially, plateau phases are typically brief because EVs displace GVs route-by-route: even minimal EV introduction can fully capture certain cost-advantaged routes, initiating incremental displacement immediately. At $R_e = 0$, the marginal system benefit is zero. As $R_e$ increases past a critical threshold, a cascading effect emerges---each additional EV simultaneously displaces GV flows across multiple congested links, triggering network-wide flow redistribution and rapidly amplifying system benefits. At medium to high penetration ($R_{e} \gtrsim 0.5$), EVs dominate and marginal returns diminish as fewer GV routes remain available for displacement. 
\end{remark}

\begin{remark}[Policy Implications]
Theorem \ref{thm:plateau} suggests that EV adoption subsidies may exhibit threshold effects: uniform incentives yield limited congestion benefits until penetration crosses critical thresholds, at which point traffic redistribution delivers substantial gains. Targeted policies to push penetration past specific thresholds may be more cost-effective than gradual uniform incentives.
\end{remark}

\subsection{Evaluation Metrics of Congestion Patterns}

To systematically evaluate the characteristics of congestion patterns exhibited by road systems following MUE assignment in the context of widespread EV penetration, we design a series of multi-dimensional evaluation metrics, as detailed below.

\subsubsection{Average Travel Time}

The average travel time of a system is the most fundamental and intuitive metric for measuring the overall operational efficiency of road networks. It directly reflects the time cost incurred by commuters to complete their trips under the prevailing traffic conditions. Specifically, under MUE conditions, the travel time between different OD pairs can be generated by summing the link travel time determined along the shortest path for the corresponding assigned traffic flow, as described in Eq. \eqref{7}. Then, under free flow conditions, the travel time between OD pairs is the travel time associated with the shortest path, disregarding congestion on the route segments. Therefore, the average travel time (in minutes) for all OD pairs under MUE conditions and free flow conditions can be expressed as follows:

\begin{equation}\label{57}
 T_{MUE}\left ( R_{e} \right ) =\frac{ {\textstyle \sum_{r}^{N}} {\textstyle \sum_{s}^{N}}c_{MUE} ^{rs}\left (  R_{e}\right )  q^{rs}  }{{\textstyle \sum_{r}^{N}} {\textstyle \sum_{s}^{N}}q^{rs} }  ,  
\end{equation}
\begin{equation}\label{58}
T_{FF} =\frac{ {\textstyle \sum_{r}^{N}} {\textstyle \sum_{s}^{N}}c_{FF} ^{rs}q^{rs}  }{{\textstyle \sum_{r}^{N}} {\textstyle \sum_{s}^{N}}q^{rs} }   ,  
\end{equation}
where $T_{MUE}\left ( R_{e} \right )$ denotes the average travel time of the system under MUE conditions when EV penetration is $R_{e}$ ; $T_{FF}$ denotes the average travel time of the system under free flow conditions; $c_{MUE} ^{rs}\left ( R_{e}\right )$ denotes the travel time between origin $r$ and destination $s$ under MUE conditions when EV penetration is $R_{e}$; $c_{FF} ^{rs}$ denotes the travel time between origin $r$ and destination $s$ under free flow conditions. 

Additionally, by comparing simulation results under EV penetration of 0 (baseline scenario) and of 1 (full penetration scenario), we devise metrics for the absolute change in travel time and relative change to quantify the impact of EV penetration on traffic congestion. The absolute change directly reflects the actual magnitude of efficiency alterations, while the relative change reveals the proportional relationship between traffic improvement/deterioration levels. Combining these two metrics enables effective identification of differentiated response patterns to EV penetration across cities with distinct congestion characteristics. The absolute change $\Delta T_{abs}$ and relative change $\Delta T_{rel}\left ( \% \right )$ can be expressed as:

\begin{equation}\label{59}
 \Delta T_{abs} =T_{MUE}\left ( 1 \right )- T_{MUE}\left ( 0 \right ),
\end{equation}
\begin{equation}\label{60}
 \Delta T_{rel}\left ( \% \right )  =\frac{T_{MUE}\left ( 1 \right )- T_{MUE}\left ( 0 \right )}{T_{MUE}\left ( 0 \right )}\times 100\% ,
\end{equation}
where the relative change is more suitable for horizontal comparisons between cities of different scales.

\subsubsection{Potential Savings}

To investigate the travel time saving potential that EVs can deliver at different stages of penetration, we develop the metric of potential savings. This metric aims to quantify the proportion of travel time savings already achieved by the system relative to its maximum theoretical savings potential under a given level of EV penetration. The specific calculation can be defined as:
\begin{equation}\label{PS}
PS\left (R_{e}\right ) =\frac{ T_{max}-T\left ( R_{e}\right )}{T_{max}-T_{min}}\times 100\%  ,
\end{equation}
where $PS\left (R_{e}\right )$ denotes the potential savings (\%) at the EV penetration of $R_{e}$; $T_{max}$ and $T_{min}$ denote the maximum and minimum average travel times of the system, respectively; $T\left ( R_{e} \right )$ denotes the average travel time of the system at the EV penetration of $R_{e}$. The metric of potential savings provides a standardized measure that not only tracks the progress of improvements, but more importantly, displays bottlenecks or plateau phases in the effectiveness of those improvements. We then propose the metric of the difference in potential savings to further characterize the patterns and unevenness of benefit shifts during EV penetration. It is defined as the increase or decrease in the potential savings ($PS$) value within the EV penetration variation range, specifically as follows:

\begin{equation}\label{62}
\triangle PS\left ( R_{e_{i} }  \to R_{e_{j} }    \right ) = PS\left ( R_{e_{i} }  \right )-PS\left ( R_{e_{j} }  \right ) ,
\end{equation}
where $R_{e_{i} } $ and $R_{e_{j} }  $ represent two consecutive levels of EV presentation (e.g., from 0 to 0.05); $\triangle PS\left ( R_{e_{i} }  \to R_{e_{j}} \right )$ denotes the difference in potential savings realized within this range. This metric can observe phased increases/decreases in traffic conditions, thereby showing the key aspects of the impact of EV penetration.

\subsubsection{Volume over Capacity (VOC)}

To evaluate the congestion characteristics of road systems at the micro level of network load, we introduce the key metric of VOC. VOC is defined as the ratio of traffic volume over capacity for each link, which is measured as: 
\begin{equation}\label{63}
 VOC_{a} =\frac{v_{a} }{c_{a} } ,
\end{equation}
where $v_{a}$ and $c_{a} $ are the traffic flow and capacity of road $a$. This metric directly reflects the traffic load and saturation degree of the link during the time period. The $VOC_{a}$ value of less than 1 indicates that the traffic flow on link $a$ has not yet reached its theoretical capacity, leaving sufficient room to accommodate more vehicles. Whereas, the $VOC_{a}$ value greater than 1 indicates that link $a$ is already in a state of oversaturated congestion.

To capture the total congestion load across the entire road network at the system macro level, we further define the total VOC metric, denoted as $VOC_{total}$. The calculation method is the sum of the VOC values of all links in the road network, i.e.,

\begin{equation}\label{VOC_{total}}
VOC_{total} = \sum_{a}^{} VOC_{a}.
\end{equation}

$VOC_{total}$ quantifies the relative relationship between the total traffic demand carried by the entire transportation network and its total theoretical capacity. By observing the trend of this metric during the EV penetration increase, it can effectively demonstrate whether traffic flow has undergone an effective spatial redistribution.

\subsubsection{Road Utilization Rate}

To evaluate the breadth and balance of road network capacity utilization, we introduce the road utilization rate metric. It can be defined as the ratio of the number of road segments (links) in active use (i.e., with the VOC value greater than 0) to the total number of road segments under specific traffic conditions. The formal definition of this metric is as follows:

\begin{equation}\label{65}
RUR=\frac{N_{voc> 0} }{N_{total}},
\end{equation}
where $RUR$ denotes road utilization rate; $N_{voc> 0}$ denotes the number of road segments in active use; $N_{total}$ denotes the total number of all road segments in the network. This metric can show the effective utilization of road network resources in terms of spatial breadth. It clearly reveals whether traffic flows are concentrated and congested on a few roads or evenly distributed across most available paths. 

\subsubsection{Link Congested Time}

Another key metric is link congested time. This refers to the actual travel time a vehicle spends passing through a specific link (road) under congested conditions, which can be obtained using the BPR function in formula Eq. \eqref{eq:BPR}. This metric provides a clear indication of the link's operational performance under traffic load and its congestion status.

\subsubsection{Difference in Delay Factor}

To explore the micro-level impact of EV penetration on traffic congestion patterns from the dimensions of time reliability and fairness, we propose the difference in delay factor as a metric. This metric aims to visualize the relative change in congestion levels on specific links or roads before and after the introduction of EVs. We first define the delay factor of link $a$ as the ratio of its actual travel time $t_{a}$ under congested conditions to its baseline travel time $t_{a}^{0}$ under free flow conditions, which can be expressed as:

\begin{equation}\label{DF_{a}}
DF_{a} =\frac{t_{a} }{t_{a}^{0} }.
\end{equation}

This metric directly reflects the impact of traffic congestion on travel time deterioration, with values closer to 1 indicating that the link's operational state is approaching ideal free flow status. Based on this metric, the difference in delay factor can be further derived. It represents the difference in delay factor for the same link between two scenarios: full EV penetration  and a baseline scenario without EVs :

\begin{equation}\label{67}
\bigtriangleup DF_{a} =DF_{a}^{1} -DF_{a}^{0},
\end{equation}
where $\bigtriangleup DF_{a}$ represents the difference in delay factor for link $a$; $DF_{a}^{1}$ and $DF_{a}^{1}$ represent the delay factor for link a when EV penetration is 1 and 0, respectively. This metric attributes macroscopic changes in system efficiency to microscopic roads, which can expose how EV penetration influences road network operations by reshaping the spatial distribution of traffic flow.

\section{Case Study}\label{Case_Study}

In this section, using 10 representative U.S. cities as examples, the mixed traffic flow of GVs and EVs is assigned to the actual road network based on peak-hour OD (travel demand) data. This research explores the immediate impact of EV penetration on traffic conditions and the evolutionary patterns of traffic congestion.

\subsection{Study Area}
To validate the effectiveness and universality of the method, we consider factors such as geographic location, urban scale, topographical characteristics, and traffic conditions during commuting hours to select 10 representative cities in the United States. Specifically, these cities include San Francisco, Portland, Las Vegas, New Orleans, Dallas, Milwaukee, Boston, Philadelphia, Denver, and Honolulu, whose geographical locations are shown in Fig. \ref{fig:3}. These include several large cities such as Dallas, Philadelphia, and San Francisco, mid-sized cities like Milwaukee and Las Vegas, as well as smaller cities like Honolulu. The topographical characteristics of these cities also differ. For example, San Francisco is divided by the bay terrain, with traffic relying on key bridges and tunnels to maintain connectivity along commuter corridors. While Dallas and Las Vegas feature relatively flat and continuous terrain, forming grid-like transportation systems dominated by surface roads. Basic information for the selected 10 representative U.S. cities is shown in Table \ref{tab:city_stats}, with population and land area data from the year 2020\footnote{\url{https://www.census.gov/}} and congestion ranking from the year 2022\footnote{\url{https://www.tomtom.com/traffic-index/ranking/}}. It can be observed that among these 10 cities, Philadelphia and Dallas have the largest populations, Dallas and New Orleans have the largest land areas, San Francisco and Boston have the highest population density, while the congestion rankings of San Francisco, Boston, and Philadelphia are the highest.

\begin{figure}[ht]
    \centering
    \includegraphics[width=\textwidth]{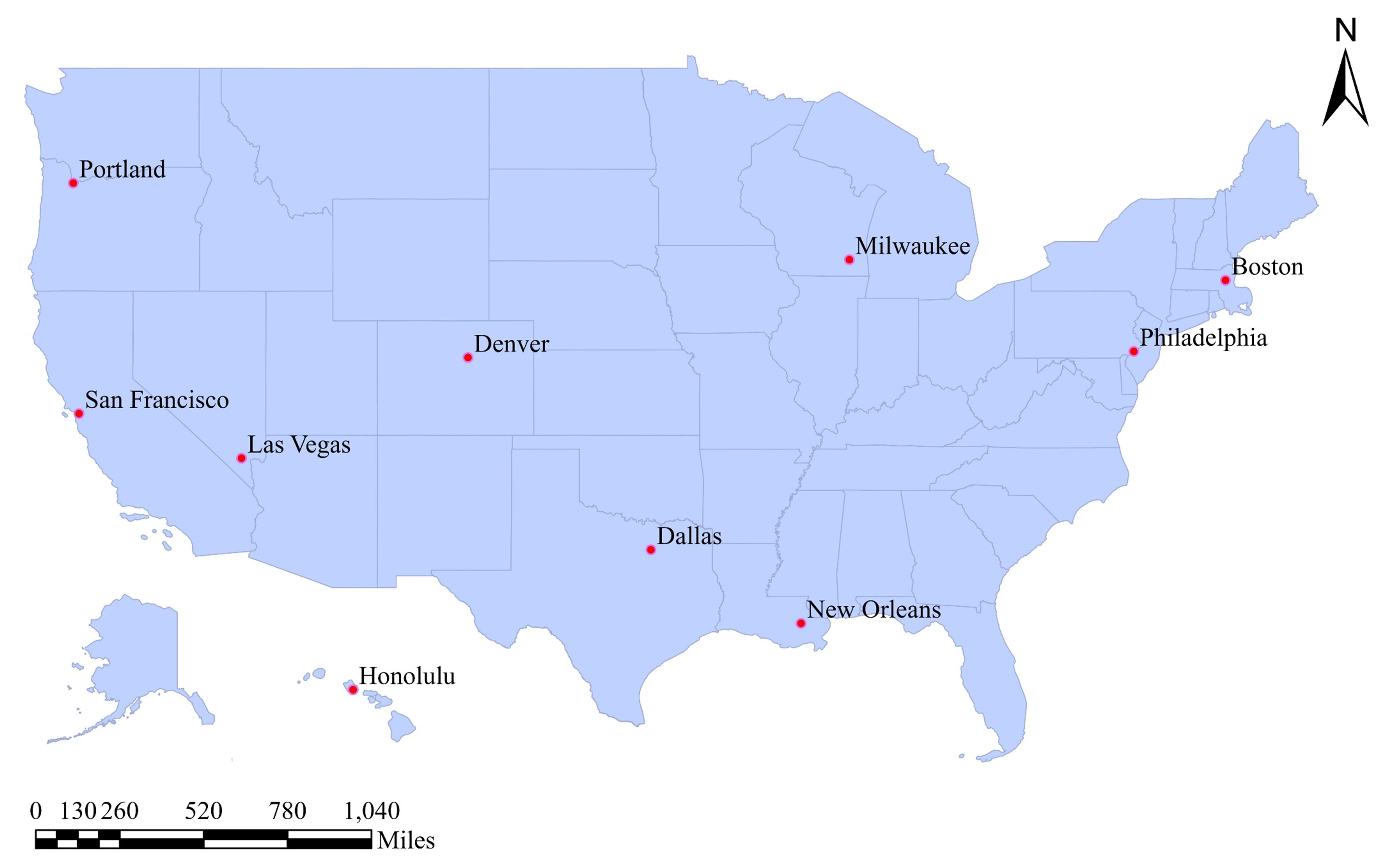}
    \caption{The geographic locations of the selected 10 representative U.S. cities}
    \label{fig:3}
\end{figure}

\begin{table}[!ht]
\centering
\caption{Basic information for the selected 10 representative U.S. cities.}
\fontsize{9}{14}\selectfont 
\begin{tabularx}{\textwidth}{XXXXXXp{2.5cm}p{2.5cm}} 
\toprule
\textbf{No.} & \textbf{City} & \textbf{State} & \textbf{Census 2020} & \textbf{Land Area ($km^{2}$)} & \textbf{Population Density} & \textbf{Congestion Ranking} \\
\midrule
1 & San Francisco & California & 873,965 & 121.5 & 7,195 & 3 \\
2 & Portland & Oregon & 652,503 & 345.8 & 1,887 & 16 \\
3 & Las Vegas & Nevada & 641,903 & 367.3 & 1,748 & 25 \\
4 & New Orleans & Louisiana & 383,997 & 439.0 & 875 & 19 \\
5 & Dallas & Texas & 1,304,379 & 879.6 & 1,483 & 33 \\
6 & Milwaukee & Wisconsin & 577,222 & 249.2 & 2,300 & 28 \\
7 & Boston & Massachusetts & 675,647 & 125.1 & 5,401 & 4 \\
8 & Philadelphia & Pennsylvania & 1,603,797 & 348.1 & 4,607 & 8 \\
9 & Denver & Colorado & 715,522 & 396.5 & 1,805 & 14 \\
10 & Honolulu & Hawaii & 350,964 & 156.7 & 2,240 & 13 \\
\bottomrule
\end{tabularx}

\label{tab:city_stats}
\end{table}

\subsection{Data}

\subsubsection{Data Acquisition}
In this study, we obtain the road network and travel demand data for these 10 cities. The urban road network data are derived from the OpenStreetMap (OSM) database\footnote{\url{https://www.openstreetmap.org/}}. This collaborative open-source mapping platform provides comprehensive geospatial information covering road infrastructure, including network topology structure, road attributes, and connectivity relationships. After data cleansing and integration, the processed network data can serve as the primary input for the traffic assignment model. We extract the nodes and links from the road networks of 10 cities in the OSM data, as shown in Fig. \ref{fig:4}(a). Each node represents the intersection of two links and contains a unique identifier along with the node's latitude and longitude information. By leveraging the correspondence between nodes and links, the network topology can be effectively constructed, and relevant road attributes can be extracted. System comparisons reveal that Dallas possesses the most complex road network (number of nodes: 21,389; number of links: 77,818), followed by Philadelphia (number of nodes: 10,410; number of links: 38,641), while Honolulu has the smallest network scale (number of nodes: 2,982; number of links: 1,120).

Additionally, travel demand data are estimated based on the LODES dataset\footnote{\url{https://lehd.ces.census.gov/data/lodes/}} provided by the U.S. Census Bureau. The LODES dataset has been widely used in related research \citep{mckinney2021total}, encompassing commuting data for the workforce across all U.S. states over multiple years. The collection process for LODES data involves employers reporting employee information to state workforce agencies, including work location and residence location. The U.S. Census Bureau collaborates with state agencies to process and anonymize these data, ultimately generating OD pairs. This dataset provides census block codes for workplaces and residences at the finest block level, accompanied by corresponding total employment figures. In practice, the LODES dataset offers an excellent representation of the commuting distribution of the U.S. workforce, suitable for constructing OD matrices. Therefore, this study collects commuting OD demand data at the block level for 10 U.S. cities in 2019. The specific total demand data for 10 U.S. cities are presented in Fig. \ref{fig:4}(b). The comparative analysis of aggregate travel demand demonstrates significant disparities across the 10 cities, with variations extending multiple orders of magnitude. In particular, Philadelphia records the highest trip volume at 399,290, followed closely by Dallas with 345,369 trips, while New Orleans (104,261 trips) and Las Vegas (132,511 trips) demonstrate the lowest and second-lowest demand, respectively.  

\begin{figure}[ht]
    \centering
    \includegraphics[width=\textwidth]{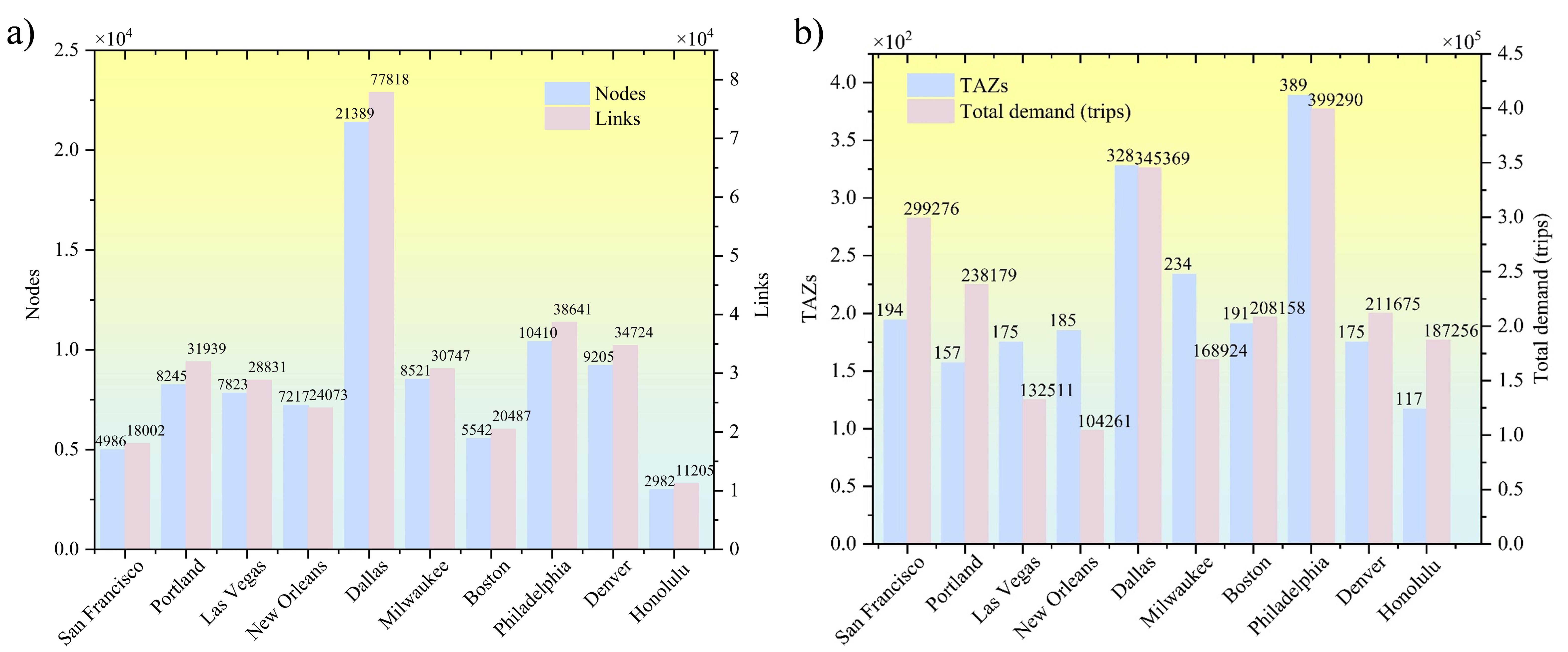}
    \caption{Road network and travel demand data for 10 U.S. cities. a) Number of nodes and links. b) Number of TAZs and total demand (trips).}
    \label{fig:4}
\end{figure}

\subsubsection{Data Integration}
This section integrates road network and travel demand data into a computable dataset. OD pairs in the travel demand matrix lack direct correspondence with nodes in the road network. Therefore, we then establish a geospatial association process to map each block to its nearest network node for subsequent traffic flow assignment. We first aggregate the OD travel demand data from the fundamental block-level granularity to the tract level. As defined by the United State Bureau \citep{smith2021census}, blocks represent the smallest statistical units, typically containing 600-3,000 residents, while tracts comprise multiple blocks with populations ranging from 1,200 to 8,000 individuals. This hierarchical aggregation achieves an optimal balance between computational tractability and spatial resolution accuracy, consistent with established practices in transportation research \citep{abdel2013geographical}. Consequently, we adopt tracts as Traffic Analysis Zones (TAZs) for subsequent traffic assignment procedures. Then, we establish network connectivity by computing TAZ centroids, i.e., the average coordinates of all the blocks within a tract, and generating specialized connector links that bridge these centroids to nearby road nodes. This engineered linkage enables complete trip routing: vehicles originate at origin TAZs, access the main road network through connectors, traverse suitable paths, and finally reach destination TAZs via terminal connectors. The number of TAZs across 10 U.S. cities is displayed in Fig. \ref{fig:4}(b). Substantial variation is observed in TAZ quantities across cities, with Philadelphia (389) and Dallas (328) containing the highest counts, while Honolulu (117) and Portland (157) maintain the lowest volumes.

\subsection{Parameter Settings}

In this section, we implement a systematic parameter calibration process to account for critical factors influencing traffic assignment results. Through iterative refinement of road attributes and link performance functions—particularly BPR function parameters—we achieve convergence to UE conditions. This calibration ensures the model can capture the supply-demand interactions within the urban road network, with the specific parameter settings described below. 

\subsubsection{Road Attributes}

We classify the urban road network into three hierarchical categories: expressways, arterial highways, and local roads. For each road category, we calibrate two fundamental parameters—road capacity and free flow speed—based on the experimental results. Based on previous research \citep{xu2024unified}, the parameter settings for each type of road in 10 U.S. cities are presented in Table \ref{tab:road_capacity} of \ref{Parameter_Settings}.

\subsubsection{MUE Model Parameters}

In the MUE model, the parameter settings for GV and EV cost functions, as well as gasoline and electricity prices, are shown in Tables \ref{tab:cost_params} and \ref{tab:bpr_price_params}. Vehicle operating cost components (maintenance, depreciation, insurance) are derived from AAA's Your Driving Costs study (2023) and U.S. DOE Alternative Fuels Data Center. Environmental externality costs follow \citet{parry2007should} and EPA social cost of carbon estimates. Gasoline prices ($P_{gas}$) are sourced from AAA Gas Prices and GasBuddy for 2023 city-level averages. Electricity prices ($P_{ele}$) are derived from U.S. Energy Information Administration (EIA) state-level residential rates. Additionally, We set the parameters $\alpha$ and $\beta $ in the the BPR function  based on research findings \citep{xu2024unified}. We find that satisfactory results are obtained when parameter $\alpha$ varies between 0.15 and 0.6 and parameter $\beta $ varies between 1.2 and 3. The specific parameters for the BPR function in 10 U.S. cities are shown in Table \ref{tab:bpr_price_params}.

\subsection{Results}

\subsubsection{Probability Distributions of Commuting Distances}

We first tabulate the probability distributions of commuting distances across 10 U.S. cities, as shown in Fig. \ref{fig:5}. Probability distributions and fitted curves demonstrate that the straight-line (Euclidean) commuting distances follow a lognormal distribution. This phenomenon is consistent with findings in urban travel behavior literature \citep{horner2004spatial, yang2012patterns}, where this distribution emerges from the multiplicative nature of spatial accessibility and residential location choices. Unlike the symmetrical normal distribution, the lognormal distribution is right-skewed. This implies that most commuting distances are concentrated within a shorter range, yet a significant “long tail” persists—meaning a considerable portion of residents still require long-distance commutes. In reality, most individuals choose to work near their residences (short commutes); however, due to factors such as housing costs and the distribution of job opportunities, some inhabitants are compelled to undertake long commutes, forming a long-tail distribution. Additionally, the distributions of commuting distances for these cities generally peak between 2.5 and 7.5 $km$, indicating that medium-to-short commutes are the predominant choice for urban residents, consistent with fundamental characteristics of urban travel. Meanwhile, long-distance commuters residing in outlying areas—such as those living in suburbs and working downtown—though constituting a relatively small proportion, exert a significant impact on the capacity of road transport systems, peak-hour congestion, and the quality of life for commuters.

\begin{figure}[ht]
    \centering
    \includegraphics[width=\textwidth]{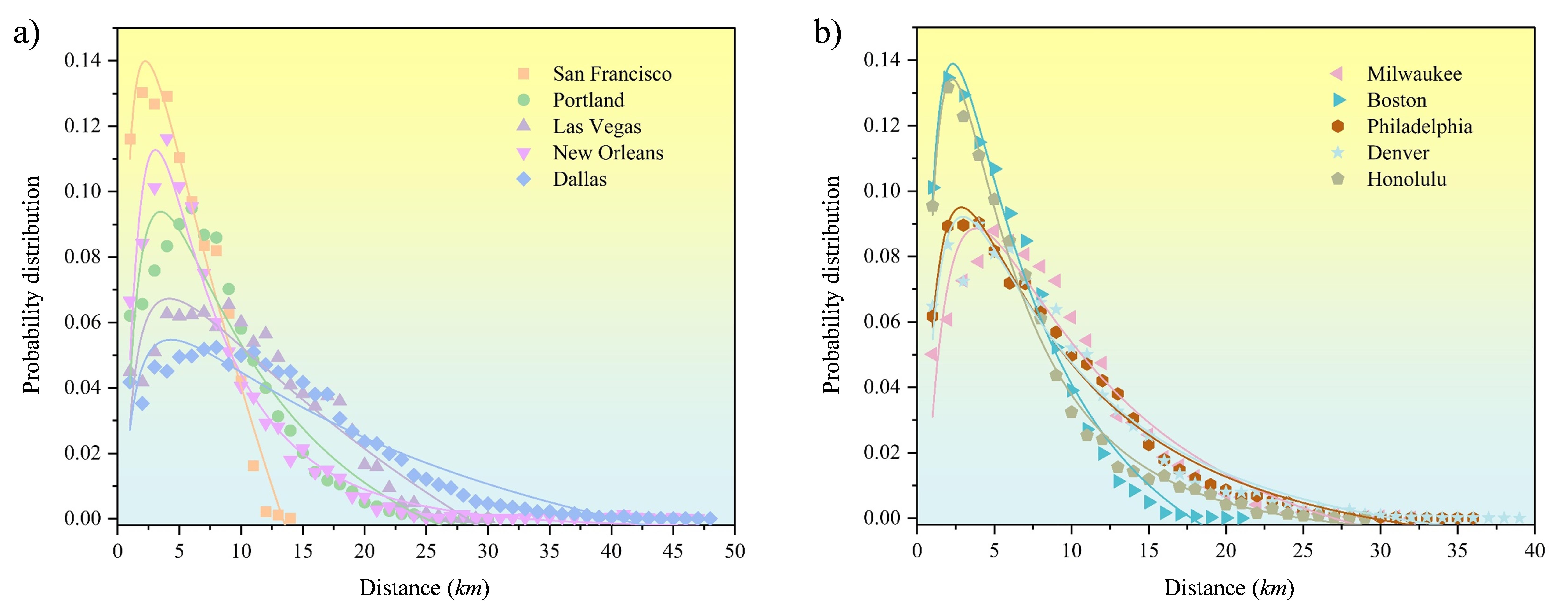}
    \caption{Probability distributions of commuting distances for 10 U.S. cities. a) Probability distributions of commuting distances for San Francisco, Portland, Las Vegas, New Orleans, and Dallas. b) Probability distributions of commuting distances for Milwaukee, Boston, Philadelphia, Denver, and Honolulu.}
    \label{fig:5}
\end{figure}

\subsubsection{Average Travel Time}\label{sec:ave_time}
As the EV penetration increases, the average travel time variations across 10 U.S. cities based on the developed MUE model are illustrated in Fig. \ref{fig:6}. In the baseline scenario where EVs have not been introduced (i.e., EV penetration is 0), the average travel time across these 10 cities exhibits significant heterogeneity in traffic conditions, clearly delineating distinct levels of congestion patterns. Dallas and Philadelphia are prime examples of “congestion hotspots,” with the average travel time reaching 40.53 and 39.18 $min$, respectively—far exceeding other cities. This indicates that commuters in these large cities experienced extremely low commuting efficiency during this period, likely enduring substantial traffic delays. Portland, New Orleans, and Honolulu form the second tier, with the average travel time ranging from 25 to 33 $min$. These cities face notable traffic pressure, though congestion levels are generally less severe than in the first tier. Milwaukee and Denver boast the most efficient traffic conditions, with the average travel time clocking in at just 16.75 and 16.15 $min$, respectively. This indicates that their road transport systems maintain high commuting efficiency during peak hours with relatively few delays. Moderately congested cities exhibit an “efficiency optimization” pattern. For instance, New Orleans achieved the highest relative improvement (10.80\%), realizing the greatest proportion of efficiency gains. Portland and Las Vegas similarly benefit from this pattern, with relative improvement rates approaching 10\%.

\begin{figure}[ht]
    \centering
    \includegraphics[width=\textwidth]{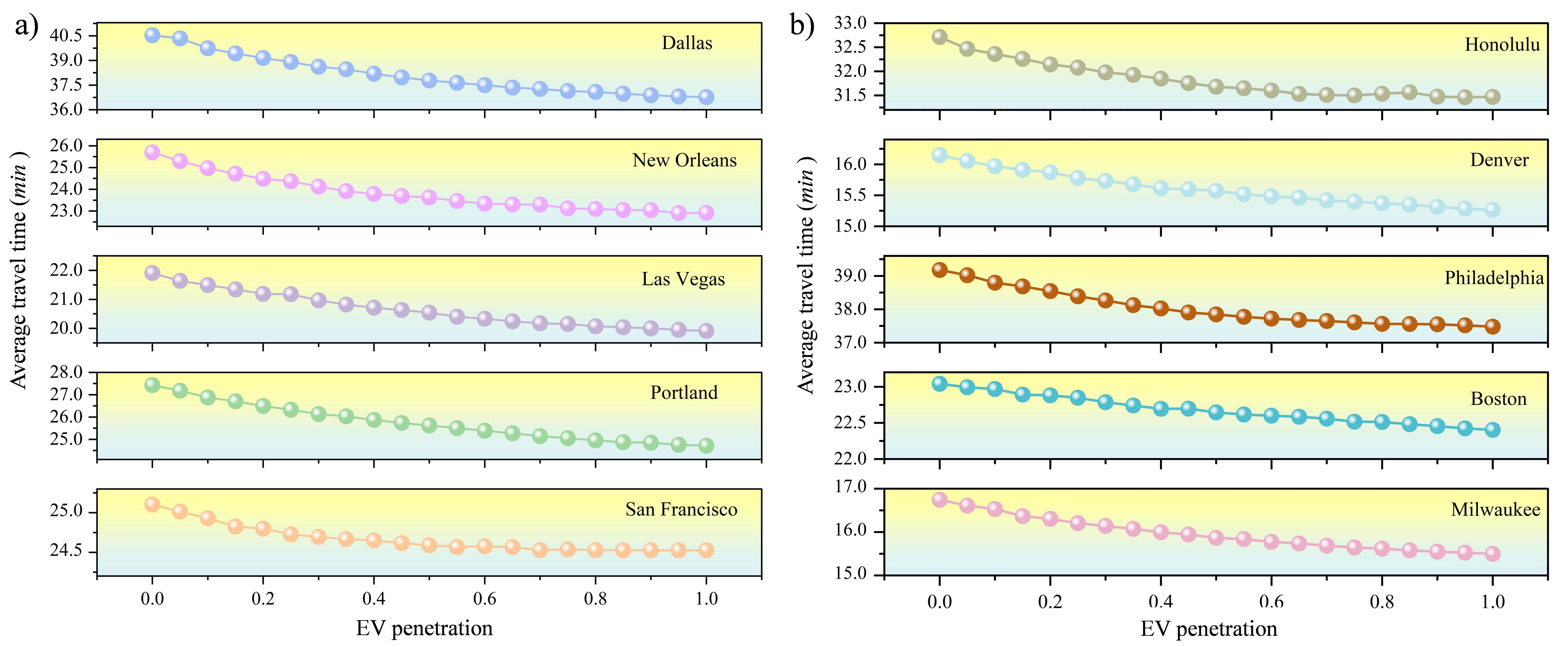}
    \caption{Average travel time for 10 U.S. cities. a) Average travel time for San Francisco, Portland, Las Vegas, New Orleans, and Dallas. b) Average travel time for Milwaukee, Boston, Philadelphia, Denver, and Honolulu.}
    \label{fig:6}
\end{figure}

As the EV penetration increases from 0 to 1, the average travel time decreases across all cities; however, the extent of improvement is closely linked to their initial congestion patterns, revealing distinct optimization mechanisms. Table \ref{tab:travel_time_changes} shows the average travel time, absolute change ($min$), and relative change (\%) for 10 U.S. cities as EV penetration varies. Cities with high congestion intensity typically exhibit an “absolute benefit” pattern, with Dallas serving as a prime example of this pattern. As the city with the longest initial average travel time, it achieve the largest absolute reduction (3.76 $min$), with the introduction of EVs saving significant total travel time for the entire commuting population. Although its relative improvement rate (9.28\%) is not the highest, the actual benefits it delivers are the most substantial. Cities with relatively smooth traffic exhibit a “marginal benefit” pattern. For cities like San Francisco, Boston,  and Denver, where the initial travel time is already relatively short, the savings from EV penetration are limited (absolute reductions are all less than 1 $min$, with relative reductions under 5.6\%). This may stem from their road systems approaching their fundamental capacity, where congestion problems are less pronounced, or their congestion patterns are more entrenched (e.g., constrained by road planning), resulting in smaller marginal benefits from EV technology. The above phenomena indicate that with the introduction of EVs, drivers have become less sensitive to travel distance and less concerned about the economic implications of taking longer routes. Some travelers no longer prioritize finding the shortest spatial route but instead begin to choose routes that are shorter in time and smoother in driving, even if this means covering greater mileage.

Although the increasing of EV penetration can continue to reduce the average travel time, its improvement efficiency is not linear. Analysis reveals that in the later stages when penetration approaches saturation (e.g., exceeding 0.6), progress across all cities slows considerably, exhibiting the characteristic features of a plateau phase (Definition \ref{def:plateau_phase}). Consistent with our theoretical framework, these plateau phases are brief: because EVs displace GVs route-by-route, even minimal EV introduction can fully capture certain cost-advantaged routes, initiating the cascading effect (Definition \ref{def:transition_zone}) almost immediately. In contrast, during the initial phase of penetration from 0, the overall decline is relatively obvious, with the plateau phase being less pronounced and only slightly evident in cities such as Dallas, Philadelphia, and Boston. This phenomenon can be explained by the sequential route displacement mechanism: in the pure GV scenario, due to high marginal travel costs, commuters heavily rely on the shortest routes. This reliance may lead to excessive traffic flow concentrating in core areas such as city centers, thereby exhibiting severe central congestion patterns. The introduction of EVs enables travelers to shift from “congested shortcuts” to “uncongested longer routes,” such as utilizing urban fringes or ring expressways. Once EV penetration crosses a critical threshold, a cascading effect emerges: each additional EV simultaneously displaces GV flows across multiple congested links, triggering network-wide flow redistribution. Therefore, in the initial phase, the system can achieve relatively significant overall congestion relief with minimal EV penetration. This manifests in macro data as a noticeable reduction in travel time during the early stages. As EV penetration reaches a high level, the most easily optimized traffic flow have been addressed, and the marginal benefits of further progress begin to diminish---fewer GV routes remain available for displacement. At this stage, congestion improvements are far less immediate than the initial passenger flow redistribution, leading to a widespread and pronounced plateau phase in the later stages.

\begin{table}[!ht]
\centering
\caption{The average travel time, absolute change ($min$), and relative change (\%) for 10 U.S. cities as EV penetration varies.}
\fontsize{9}{14}\selectfont
\begin{tabularx}{\textwidth}{>{\raggedright\arraybackslash}X 
                             >{\raggedright\arraybackslash}X 
                             >{\raggedright\arraybackslash}X 
                             >{\raggedright\arraybackslash}X 
                             >{\raggedright\arraybackslash}X} 
\toprule
\textbf{Cities} & \multicolumn{2}{l}{\textbf{Average travel time ($min$)}} & \textbf{Absolute change ($min$)} & \textbf{Relative change (\%)}  \\
\cmidrule(lr){2-3}
 & \textbf{EV penetration=0} & \textbf{EV penetration=1} &  &  \\
\midrule
San Francisco & 25.1 & 24.53 & -0.57 & -2.27 \\
Portland & 27.43 & 24.71 & -2.72 & -9.92 \\
Las Vegas & 21.91 & 19.91 & -2.00 & -9.13 \\
New Orleans & 25.69 & 22.92 & -2.77 & -10.78 \\
Dallas & 40.53 & 36.77 & -3.76 & -9.28 \\
Milwaukee & 16.75 & 15.5 & -1.25 & -7.46 \\
Boston & 23.04 & 22.4 & -0.64 & -2.78 \\
Philadelphia & 39.18 & 37.48 & -1.70 & -4.34 \\
Denver & 16.15 & 15.26 & -0.89 & -5.51 \\
Honolulu & 32.71 & 31.47 & -1.24 & -3.79 \\
\bottomrule
\end{tabularx}
\label{tab:travel_time_changes}
\end{table}

\textbf{Empirical Validation of Theorem \ref{thm:plateau}.} The travel time curves in Fig. \ref{fig:6} provide direct empirical validation of the Equilibrium Sensitivity to Penetration Rate (Theorem \ref{thm:plateau}). According to the theorem, equilibrium flows are piecewise smooth in $R_e$, with qualitative transitions occurring at critical thresholds where active path sets change. We classify the 10 cities into three categories based on curve morphology:

\begin{itemize}
    \item \textbf{Type I (Pronounced Transition $\to$ Plateau):} Dallas, New Orleans, Portland, and Philadelphia exhibit steep initial descent ($R_e \in [0, 0.3]$) followed by marked flattening. Estimated critical thresholds: $R_{e,c}^{(1)} \approx 0.25$--$0.35$. This pattern indicates that the first few critical thresholds are crossed early, triggering rapid cascading displacement, after which active sets stabilize.

    \item \textbf{Type II (Gradual Transition):} Las Vegas, Denver, Milwaukee, and Honolulu display approximately linear decline throughout. This suggests either (i) critical thresholds are evenly distributed across $[0,1]$, or (ii) urban topology limits the scope of each cascading event, resulting in continuous but moderate redistribution.

    \item \textbf{Type III (Persistent Plateau):} San Francisco and Boston show minimal variation ($<$3\% reduction), with curves nearly flat from $R_e = 0$. Type III cities exhibit high path overlap ($\rho \to 1$) due to topological constraints (peninsula, historic grid), which per Theorem \ref{thm:plateau}(ii) implies reduced sensitivity $\|\partial x^*/\partial R_e\|$. Thus, `plateau' behavior in these cities reflects the network structure producing high overlap, consistent with both theory and observation.
\end{itemize}

For Type I/II cities, during the initial phase of penetration from 0, the overall decline is relatively obvious. However, Type III cities exhibit nearly linear, gradual decline without pronounced initial benefits, reflecting their topological constraints. These empirical patterns directly correspond to Theorem \ref{thm:plateau}'s predictions: Type I cities exhibit clear ``transition points'' (part iii) with subsequent ``plateau intervals''; Type III cities remain in extended plateau due to shared active paths. The threshold $\epsilon = 0.07$ min per 10\% penetration in Definition \ref{def:plateau_phase} is calibrated to distinguish these categories: Type I cities exceed this rate during $R_e \in [0, 0.3]$, while Type III cities remain below it throughout.

Importantly, the city-type classification is not determined by the GV-to-EV cost ratio $R$ alone. Using the cost parameters from Tables \ref{tab:cost_params}--\ref{tab:bpr_price_params}, we compute $R$ values ranging from 2.33 (Honolulu) to 2.85 (Portland). Yet cities with similar $R$ exhibit different response patterns: San Francisco ($R = 2.82$) shows persistent plateau (Type III), while Portland ($R = 2.86$) exhibits pronounced transition (Type I). This indicates that network topology---not cost differential---appears to play a significant mediating role in congestion response patterns. While our cross-sectional comparison of 10 cities suggests topology matters, we cannot rule out confounding factors (transit mode share, parking policy). Establishing definitive causality would require controlled natural experiments or instrumental variable approaches beyond this study's scope. Specifically, topological constraints (peninsula geography, limited river crossings, dense historic grids) restrict the set of alternative routes available for EV-induced flow redistribution, thereby suppressing the cascading displacement mechanism regardless of cost advantage.

\subsubsection{Potential Savings}

We statistically analyze the potential savings and the difference in potential savings for 10 U.S. cities, as shown in Fig. \ref{fig:7}. The values of potential savings ($PS$) across all cities exhibit an overall upward trend as EV penetration increases, reaching or approaching 100\% when near full penetration (penetration$=1$). This indicates that the comprehensive introduction of EVs can effectively unlock the optimization potential for alleviating traffic congestion. However, there are relatively remarkable differences in the improvement trajectories across different cities. In cities like San Francisco and Honolulu, $PS$ values grow rapidly during the early periods of EV penetration (penetration$<  0.3$), demonstrating that their road networks can swiftly respond to the optimization effects brought by the introduction of EVs, i.e., early investments could yield high returns. Cities such as Portland, Denver, and Milwaukee have experienced more linear and steady increases in the $PS$ values, suggesting that the benefits of improvement are being consistently and stably realized throughout the entire implementation process. Represented by Dallas and Boston, these cities are generally less effective during the early stages of penetration but subsequently show a more pronounced increase in $PS$ values. This pattern aligns with the cascading effect described in Definition \ref{def:transition_zone}: these cities require reaching a critical threshold before EVs can simultaneously displace GV flows across multiple congested links, triggering network-wide flow redistribution. It is worth noting that the growth curves of some cities did not follow a monotonically increasing trend, but instead exhibited slight fluctuations. For example, in San Francisco, the PS value declines slightly from 0.93 to 0.91 at a penetration of 0.6. Honolulu also experience a slight dip when penetration reached 0.8 and 0.85. The fluctuation may stem from the following: when EV penetration reaches a certain threshold, a large number of travelers simultaneously shift to new “optimal routes” that may rapidly form new congestion points on these alternative routes.  Meanwhile, congestion on the original routes has not yet been sufficiently alleviated, triggering a temporary, minor decline in the overall system efficiency. This phenomenon is also consistent with Theorem \ref{thm:plateau}(iii): this penetration level corresponds to a transition point where path active sets reconfigure. The temporary efficiency loss occurs as flows “search” for new equilibrium routes before settling into improved configurations at higher $R_e$. Despite localized fluctuations, the overall upward trend across all cities is clear and dominant. Furthermore, these results hold significant value for transportation planners. Planners should recognize that the benefits of EV penetration are not released uniformly, requiring a focus on critical threshold periods (such as the early stages in San Francisco and Honolulu). While promoting EVs, dynamic traffic management strategies (such as adaptive signal control and congestion pricing) should be implemented to smooth fluctuations during the transition period and accelerate the system's convergence toward a more optimal new equilibrium state.

\begin{figure}[!ht]
    \centering
    \includegraphics[width=\textwidth]{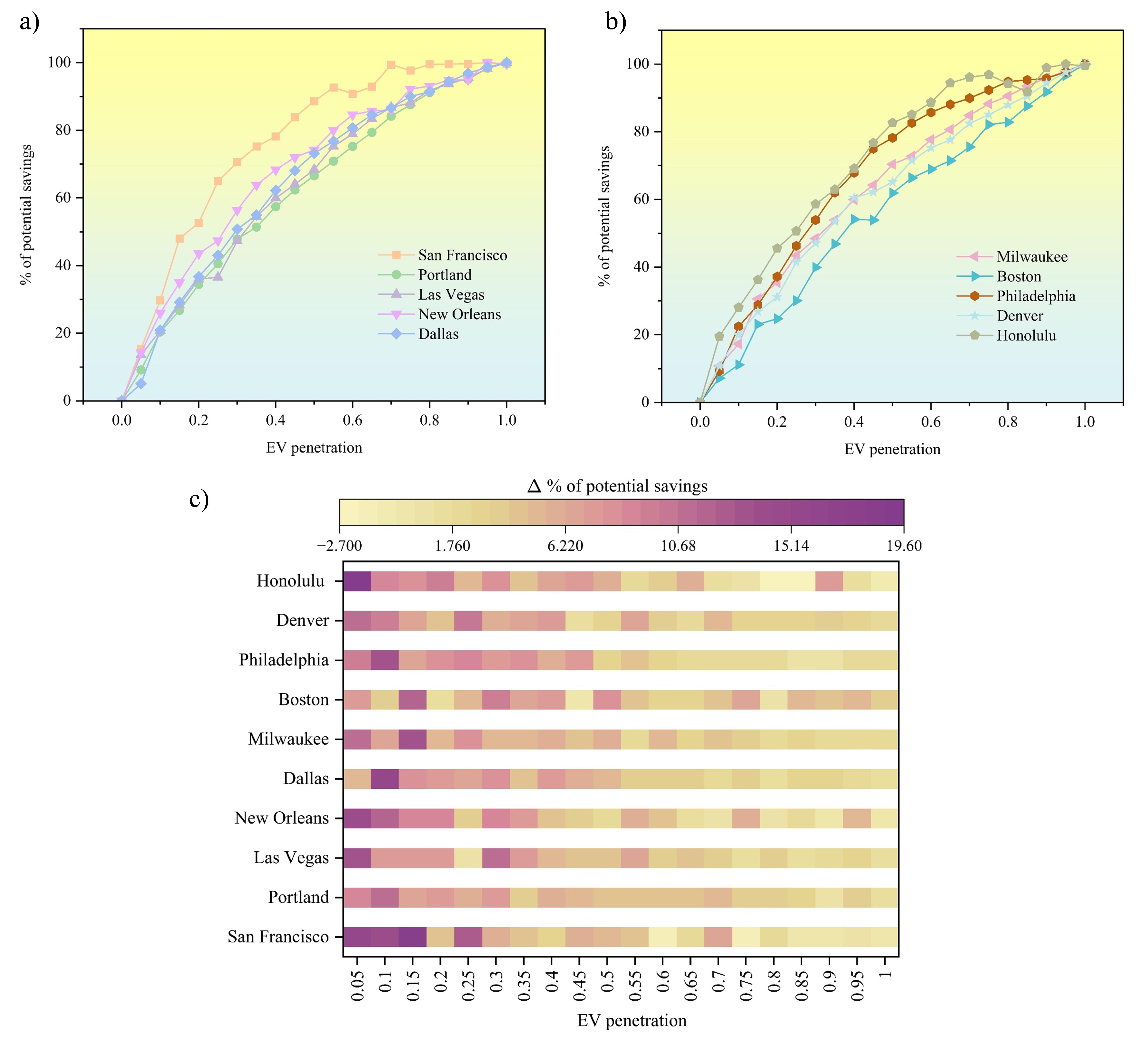}
    \caption{Potential savings and the difference in potential savings for 10 U.S. cities. a) Potential savings for San Francisco, Portland, Las Vegas, New Orleans, and Dallas. b) Potential savings for Milwaukee, Boston, Philadelphia, Denver, and Honolulu. c) Difference in potential savings for 10 U.S. cities. (Note that Fig. \ref{fig:7}(c) uses per-city normalization to highlight each city's internal variation pattern. Cross-city comparisons should reference Table \ref{tab:travel_time_changes} (absolute values) rather than color intensity.)}
    \label{fig:7}
\end{figure}

Fig. \ref{fig:7}(c) displays the difference in potential savings for 10 U.S. cities, exhibiting a similar pattern of variation to previous results. Overall, for most cities, the early to medium stages (penetration$< 0.5$) corresponds to the transition zone (Definition \ref{def:transition_zone}), where cascading displacement effects maximize system benefits. Consistent with Definition \ref{def:plateau_phase}, plateau phases are brief because EVs displace GVs route-by-route---even minimal EV introduction can fully capture certain cost-advantaged routes. In the medium to late stages (penetration$> 0.5$), nearly all cities show a marked slowdown in growth, as the network approaches EV-dominated equilibrium and fewer GV routes remain available for displacement (Definition \ref{def:plateau_phase}). Multiple cities exhibit high growth peaks at specific stages, such as San Francisco reaching a difference in $PS$ of 18.32\% when EV penetration reaches 0.15, and Portland achieving a difference of 11.19\% at 0.1---these peaks reflect the cascading effect where each additional EV simultaneously displaces GV flows across multiple congested links. This empirically validates the “diminishing returns” principle in transportation system optimization: as penetration increases, optimization potential gradually saturates and may even experience temporary efficiency declines due to traffic rebalancing. This suggests planners should target the transition zone ($R_e < 0.5$) for maximum policy impact, where cascading effects amplify returns on EV promotion investments.

\subsubsection{VOC}

Fig. \ref{Total VOC} illustrates the change in the total VOC with increasing EV penetration for 10 U.S. cities. It is noteworthy that as EV penetration increases, the overall VOC values across the networks of the 10 cities show an upward trend, though the magnitude of this varies significantly. Portland, Dallas, New Orleans, and Philadelphia show the largest increases (approximately 8\%–14\%), indicating that in these cities with robust expressway networks and traffic redundancy, the introduction of EVs has drawn more traffic toward a few long-distance, high-grade corridors (expressways and highways), resulting in stronger traffic concentration on the backbone network. The overall VOC increases in San Francisco, Las Vegas, Boston, and Honolulu are moderate (about 5\%–7\%), indicating that while main roads also face rising pressure, the migration from short routes to longer ones is limited due to terrain constraints or the overall compactness of the road network. Denver and Milwaukee exhibit the smallest increase (about 2\%–3\%), indicating that their relatively regular, low-heterogeneity grid networks undergo only minor rebalancing under increasing EV penetration. It should be emphasized that an increase in total VOC does not necessarily imply more severe congestion. In our MUE framework, the equilibrium condition minimizes travelers’ generalized travel cost rather than the system-wide total VOC. As the operating cost of EVs decreases, the strategy of “travelling slightly farther in exchange for higher speeds” becomes more attractive, prompting travelers to shift from numerous short, intersection-dense urban streets that are highly sensitive to congestion toward a limited set of long-distance expressways and highways with large capacity and lower congestion sensitivity. This redistribution of flows raises the VOC on these backbone corridors and consequently increases the total VOC. However, because these corridors can still provide substantially lower travel times than local roads even at relatively high saturation levels, the system-wide average travel time continues to decline.

\begin{figure}[!ht]
    \centering
    \includegraphics[width=\textwidth]{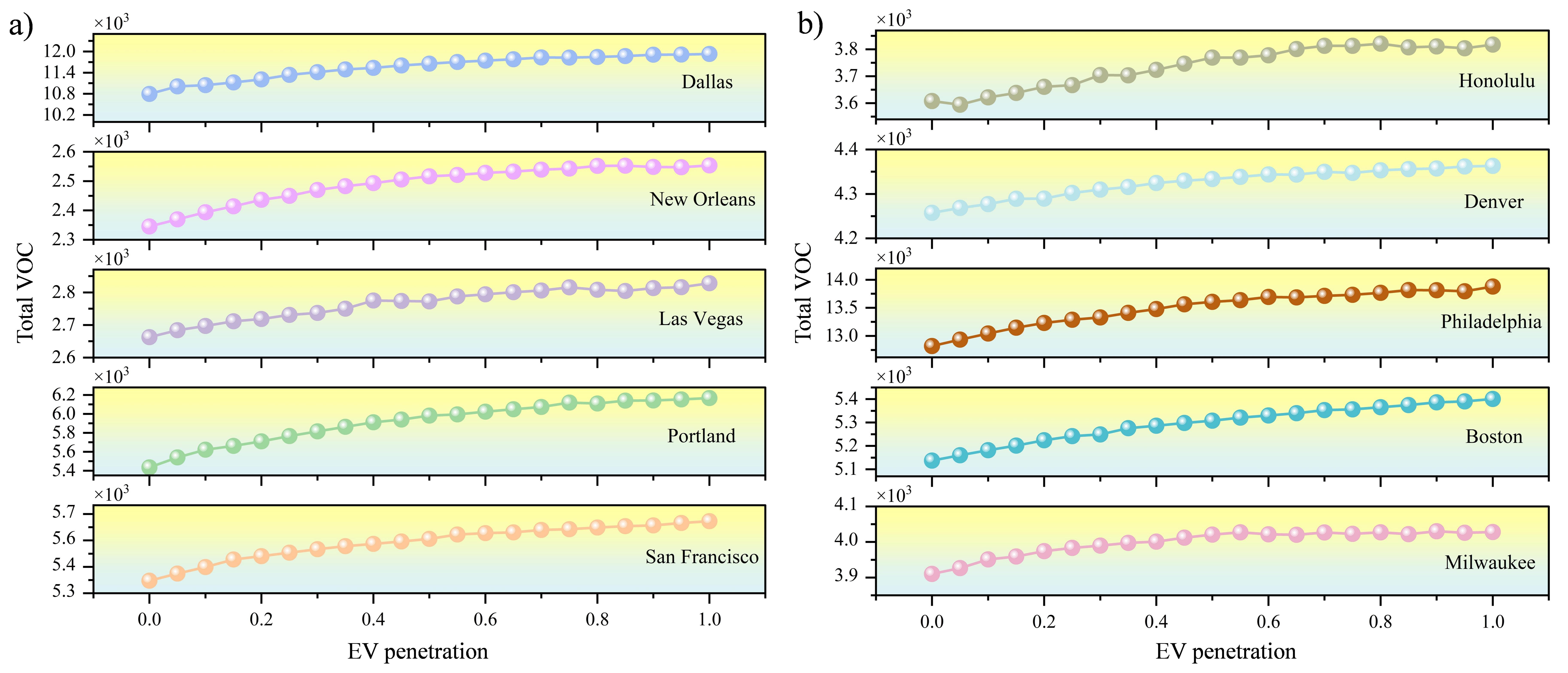}
    \caption{Total VOC for 10 U.S. cities. a) Total VOC for San Francisco, Portland, Las Vegas, New Orleans, and Dallas. b) Total VOC for Milwaukee, Boston, Philadelphia, Denver, and Honolulu. (Note: Y-axes differ across subplots to show each city's full range. For cross-city comparisons, refer to percentage change values rather than visual slopes.) }
    \label{Total VOC}
\end{figure}

To more clearly illustrate changes in VOC distributions, Fig. \ref{fig:VOC_10cities} in \ref{VOC_Difference_Distributions} plots link-level VOC difference distributions across four scenarios with EV penetration $R_{e}$ of 0.25, 0.5, 0.75, and 1, compared to the baseline scenario ($R_{e}$=0, all-full GVs). The labels indicate VOC difference values, with colors ranging from blue to red (purple) representing a clear decrease to a significant increase in VOCs, respectively.

Building upon the Type I/II/III classification established in Section \ref{sec:ave_time}, we identify three distinct spatial redistribution patterns that provide mechanistic explanations for why certain cities exhibit pronounced travel time improvements while others show persistent plateaus.

\textbf{Pattern A: Backbone Reinforcement} (high network redundancy $\to$ Type I response).

Cities with well-developed highway frameworks and multiple radial-loop corridors exhibit clear ``backbone reinforcement,'' corresponding to the steep travel time descent observed in Type I cities. Dallas provides the clearest example: as $R_{e}$ increases, multiple radial highways progressively ``light up,'' forming continuous warm-colored corridors extending from city center to suburbs, while surrounding grid roads shift to blue (VOC reduction). EV penetration redirects cross-district trips from dense local networks onto the highway backbone; total VOC increases substantially, yet higher highway speeds yield pronounced system-wide travel time reduction. Portland and Philadelphia exhibit similar patterns: Portland's north–south highway corridors absorb cross-river trips (total VOC increase 8\%–14\%), while Philadelphia's central radial corridor strengthens as riverfront highways are relieved. The common mechanism: high network redundancy enables the cascading displacement effect (Definition \ref{def:transition_zone})---EVs simultaneously capture multiple alternative corridors, triggering network-wide redistribution that manifests as steep travel time improvements at $R_e \in [0, 0.3]$.

\textbf{Pattern B: Bottleneck Concentration} (topological constraints $\to$ Type I/III mixed response).

Cities constrained by geography or infrastructure bottlenecks exhibit spatially concentrated VOC changes confined to limited critical links. New Orleans exemplifies this pattern: VOC changes concentrate around Mississippi River crossings and CBD, with bridge approaches showing pronounced warming while interior grids vary minimally. Limited crossings constrain redistribution to a small link set, reorganizing local bottlenecks while the overall pattern remains river–CBD focused. San Francisco (peninsula restricts cross-bay alternatives), Honolulu (narrow coastal belt confines flow to single east–west corridor), Boston (historic grid with weak hierarchy), and Las Vegas (simple radial layout) show similar concentration effects. These constraints suppress cascading: even with favorable cost ratios ($R \approx 2.6$–$2.8$), limited alternatives prevent network-wide redistribution. This explains why San Francisco ($R = 2.82$) exhibits Type III plateau despite cost advantage similar to Portland ($R = 2.85$, Type I).

\textbf{Pattern C: Marginal Rebalancing } (homogeneous grid $\to$ Type II response).

Cities with regular grid layouts and minimal road hierarchy heterogeneity exhibit only marginal VOC changes. Milwaukee's network shows VOC differences within $[-0.3, 0.3]$ regardless of $R_e$, with only slight warming on highway connectors. EV penetration triggers no substantial path reconfiguration; the network undergoes minor rebalancing only. Denver exhibits similar behavior: moderate freeway-local performance gap and abundant substitution possibilities prevent traffic concentration. In homogeneous grids, absence of clear ``backbone vs. feeder'' differentiation means EVs cannot selectively exploit high-capacity corridors, resulting in gradual distributed improvement (Type II linear decline) rather than concentrated cascading.

\textbf{Synthesis.} The correspondence between spatial patterns and travel time response types confirms that network topology governs EV-induced congestion reshaping: Pattern A (high redundancy) enables cascading $\to$ Type I steep improvement; Pattern B (bottleneck constraints) confines redistribution $\to$ mixed Type I/III depending on constraint severity; Pattern C (homogeneous grids) distributes changes uniformly $\to$ Type II gradual improvement. This has direct policy implications: cities should assess network topology before projecting EV benefits, as cost-based incentives alone cannot overcome structural constraints.

\subsubsection{Road Utilization Rate}
As the EV penetration increases, the road utilization rates in the 10 U.S. cities exhibit discernible and consistent patterns, as illustrated in the Fig. \ref{Road Utilization Rate}. Road utilization rates in various cities rise rapidly in the initial phase, enter a period of high-level narrow fluctuations with a slight upward trend in the middle phase, and ultimately experience a certain degree of decline in the later phase. The introduction of EVs immediately triggers a spatial redistribution of traffic flow, as travelers shift from congested "shortcuts" to underutilized alternative routes, rapidly "activating" road network resources within a short timeframe---this reflects the sequential route displacement mechanism where EVs capture cost-advantaged routes one by one. In the medium stage, road network resource utilization approaches saturation, with traffic flow stabilizing at a new equilibrium state. Road utilization rates show a slight increase, while the marginal benefits of further optimization diminish as fewer GV routes remain available for displacement. The later appearing decline indicates that road utilization peaks when traditional GVs and EVs coexist, confirming the travel behavior where commuters explore and utilize the maximum available routes under mixed traffic conditions. When all vehicles transition to EVs, decision-making patterns may become more homogeneous, converging instead toward a more concentrated yet efficient traffic state. This overall change pattern demonstrates that the widespread adoption of EVs can effectively enhance the utilization efficiency of road network resources. However, with fixed OD demand, road network optimization faces natural upper limits. In later stages, complementary management measures (e.g., congestion pricing, lane management) should be implemented to further improve road utilization rates. Additionally, cities like San Francisco and Philadelphia exhibit the highest road utilization rates (consistently$>$ 0.34); cities such as Portland and Boston show moderate utilization rates (between 0.27 and 0.33); while cities including Las Vegas, New Orleans, and Denver have the lowest road utilization rates ($<$ 0.27).

\begin{figure}[!ht]
    \centering
    \includegraphics[width=\textwidth]{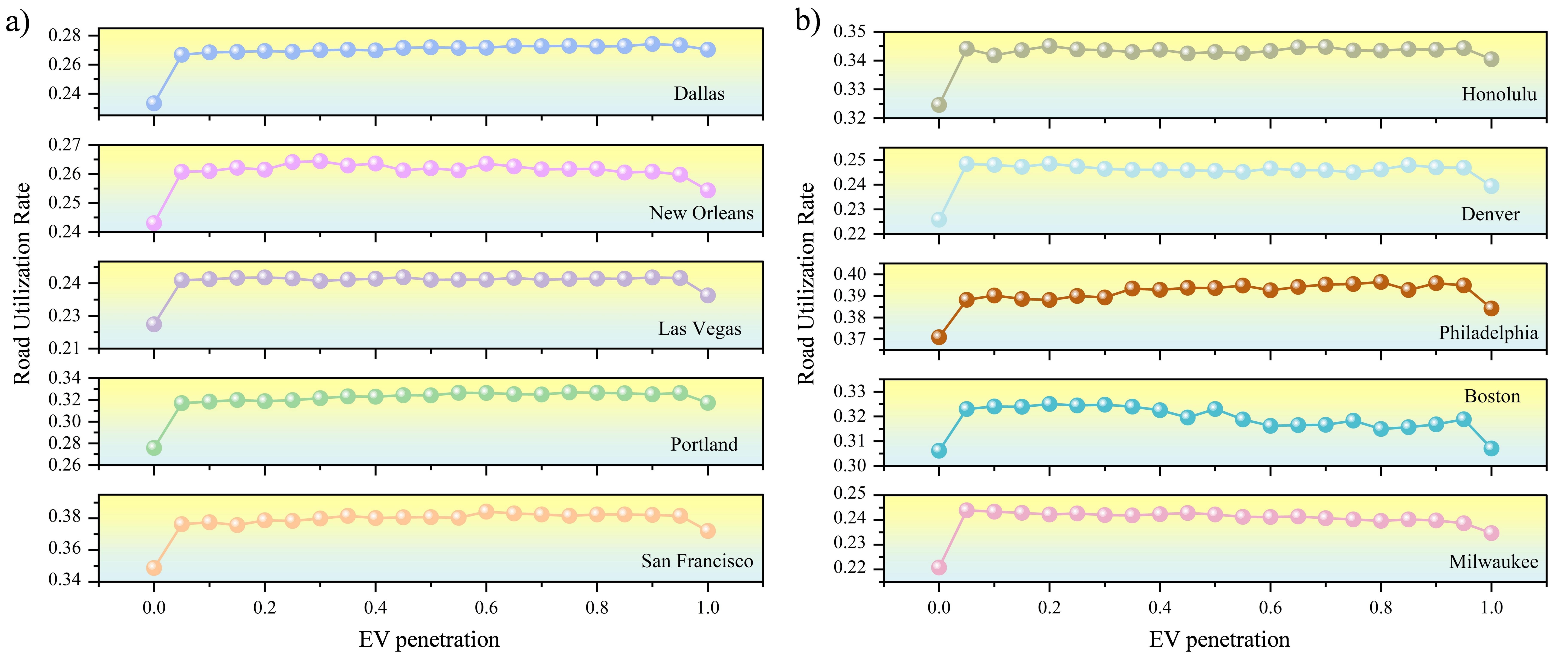}
    \caption{Road utilization rates for 10 U.S. cities. a) Road utilization rates for San Francisco, Portland, Las Vegas, New Orleans, and Dallas. b) Road utilization rates for Milwaukee, Boston, Philadelphia, Denver, and Honolulu.}
    \label{Road Utilization Rate}
\end{figure}

\subsubsection{Link Congested Time}
Fig. \ref{Link Congested Time 1} and \ref{Link Congested Time 2} illustrate the distribution characteristics of link congested time versus link length under different EV penetration scenarios for U.S. cities. Each subgraph corresponds to a city, with scatter points of different shapes and colors representing five typical scenarios of EV penetration: $R_{e}$ =0, 0.25, 0.5, 0.75, and 1. For comparative purposes, we calculate the average congested time for each city's links across different scenarios after binning by length. These results are then overlaid on the same coordinate system, enabling a visual assessment of how EV penetration affects congestion levels across links of varying lengths. From an overall perspective, the curves for all cities exhibit a distinct upward-right trend, meaning that longer links correlate with higher average congested times: Short links ($<0.5 km$) generally have average congested times below 1 minute, medium-length links (approximately $0.5\text{-}2 km$) cluster within the 1–3 minute range, while long links ($>2 km$, reaching over 8 km in some cities) can see average congested times rise to 5–10 minutes. Compared to differences across varying lengths, scatter plots within the same length interval show higher overlap across different $R_{e}$ scenarios, exhibiting only slight upward or downward shifts. This indicates that under given OD demand and road network structures, changes in EV penetration do not substantially alter congested time classes at the link level. The effect of EV penetration is smaller than the role played by link length and urban structure itself.

\begin{figure}[!ht]
    \centering
    \includegraphics[width=\linewidth,
    trim=1mm 0.7mm 1.5mm 1mm]{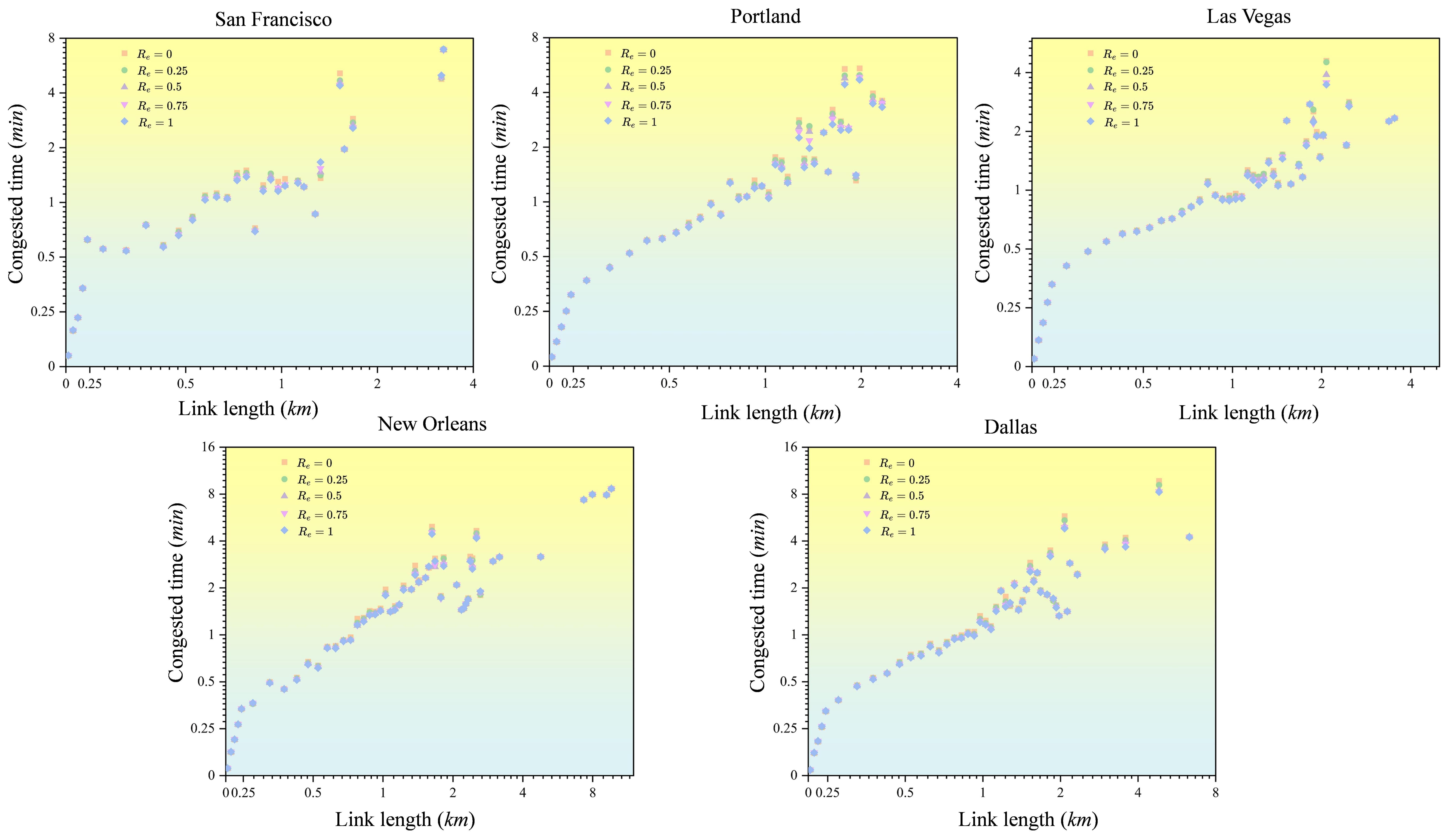}
    \caption{Link congested time for San Francisco, Portland, Las Vegas, New Orleans, and Dallas.}
    \label{Link Congested Time 1}
\end{figure}

\begin{figure}[!ht]
    \centering
    \includegraphics[width=\linewidth,
    trim=1mm 0.7mm 1.5mm 1mm]{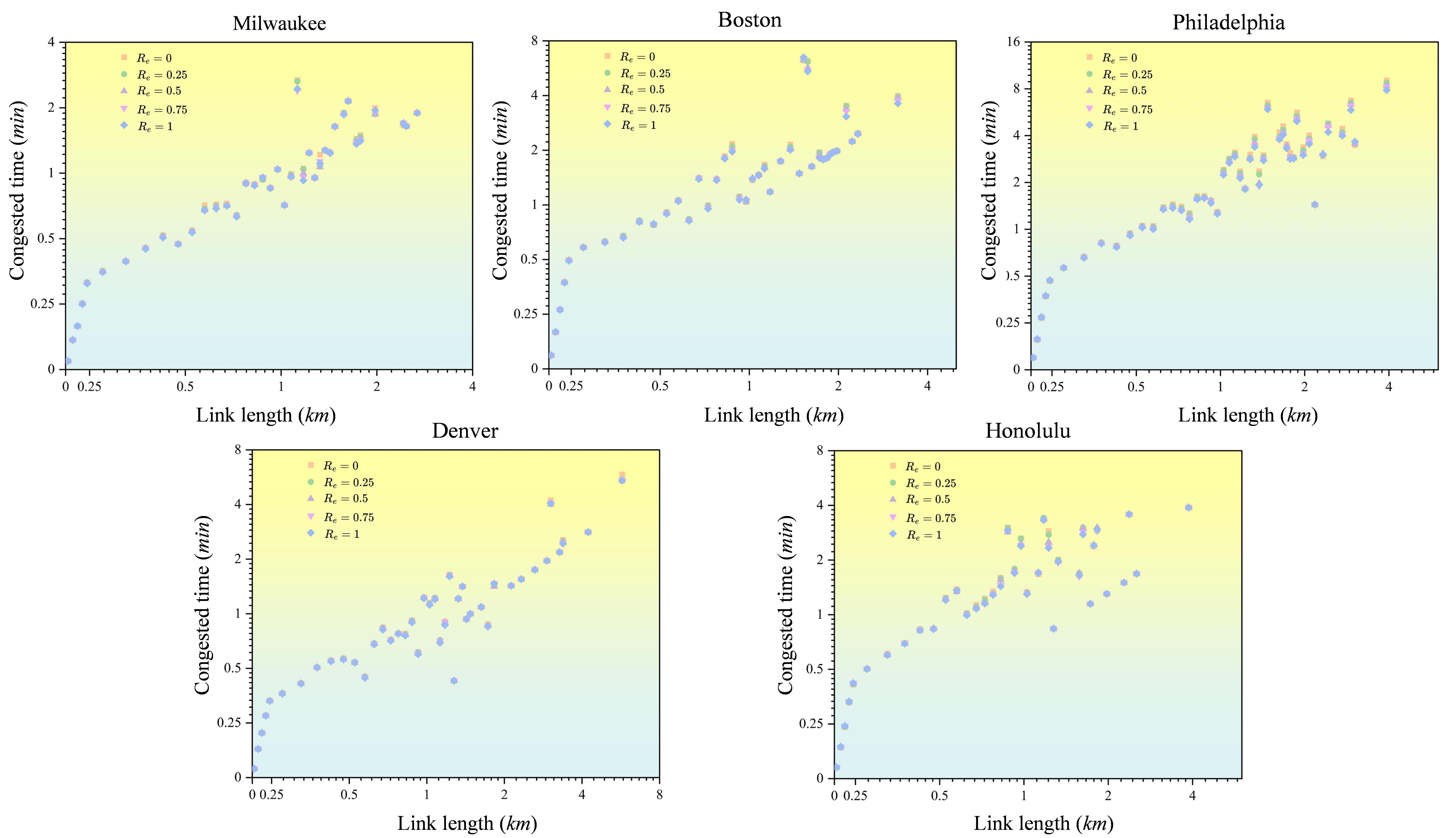}
    \caption{Link congested time for Milwaukee, Boston, Philadelphia, Denver, and Honolulu.}
    \label{Link Congested Time 2}
\end{figure}

Then, by further comparing different EV penetration scenarios, we observe that for short-length links, data points across all five $R_{e}$  values nearly overlap completely, indicating that congested time shows little sensitivity to EV penetration rates. For medium-length links, as $R_{e}$  increases from 0 to 1, average congested time slightly decreases in some cities, but the reduction is typically less than 0.2–0.3 minutes. For long-length links, differences driven by EV popularity become relatively more pronounced. This is particularly evident in cities like Dallas and New Orleans, with numerous long-distance arterial roads and higher congestion improvement potential, where points with $R_{e}=1$  are generally slightly lower than those with $R_{e}=0$ for the same distance. This indicates that the operational cost advantage of EVs partially redistributes demand from highly congested long links to relatively suboptimal but smoother routes through path selection, thereby producing a moderate “peak-shaving” effect on heavily loaded long links. 

Furthermore, examining inter-city heterogeneity reveals that cities with more traditional high-density, central-radiating structures (e.g., San Francisco, Boston, Milwaukee) exhibit nearly identical $R_{e}$ scenario curves across all length intervals. This suggests that within such networks, limited choices for main roads and minimal detour redundancy mean that even electric vehicles with significant cost advantages struggle to substantially alleviate link-level congestion through route restructuring. Conversely, cities with higher network redundancy and stronger functional mixing (such as Dallas and New Orleans) exhibit a clearer trend of congestion shifting to later hours on long-distance corridors, where EV penetration delivers more pronounced congestion-alleviating effects. This finding aligns with previous  conclusions based on the average travel time and VOC distributions: the improvement in urban congestion effects shows pronounced city-specific heterogeneity, influenced both by the degree of EV penetration and the redundancy and spatial structure of the urban road network.

\subsubsection{Difference in Delay Factor}
The results of the difference in delay factors for 10 U.S. cities are shown in Fig. \ref{Difference in Delay Factor}. This metric is calculated by determining the difference in delay factors of each link between two scenarios: full EV penetration (penetration$= 1$) and a baseline without EV (penetration$= 0$). It can be observed that the difference in delay factors of links exhibits widespread instability, with all cities showing both positive and negative values simultaneously. In particular, positive values indicate that the delay factor of the links during full GVs exceeds the baseline (full GVs), i.e., performance becomes worse (travel delays are more severe), while negative values indicate the opposite. This indicates that no city's road network consistently exhibits link performance that is either better or worse than the benchmark level. While links for different cities display distinct performance tendencies (i.e., overall shifts in the difference in delay factors). Certain cities (such as Portland and New Orleans) exhibit relatively more or larger positive values; whereas other cities (such as Milwaukee) show negative or larger negative values with greater relative frequency. Although the difference in delay factors for all cities fluctuates around 0, there are significant variations in the magnitude of fluctuations and “tail risk” (the likelihood of extreme values) across different cities. Based on this, we can broadly categorize cities into three types: (i) Cities with moderate fluctuations and relatively stable delay factors. The values for representative cities such as Portland, Las Vegas, and Milwaukee are overwhelmingly concentrated within the narrow range between -1 and 1. Even when deviations occur, extreme values are rare. This indicates that on most links in these cities, the travel time delays for fully GVs and fully EVs are relatively comparable. (ii) Cities with high volatility and links exhibiting significantly higher EV delays. These cities possess a large number of near-zero values, yet also show remarkable deviations with high positive values. For example, New Orleans record an extreme positive value of 4.28, while Boston's highest value reach 2.17. (iii) Cities with high volatility and links exhibiting significantly lower EV delays. These cities experience notably low negative values. For example, San Francisco appears with a minimum value of -2.75; Dallas's minimum value also amounts to -2.13. These extremely low negative values reflect the great potential for optimizing these links through the introduction of EVs.

\begin{figure}[!ht]
    \centering
    \includegraphics[width=\textwidth]{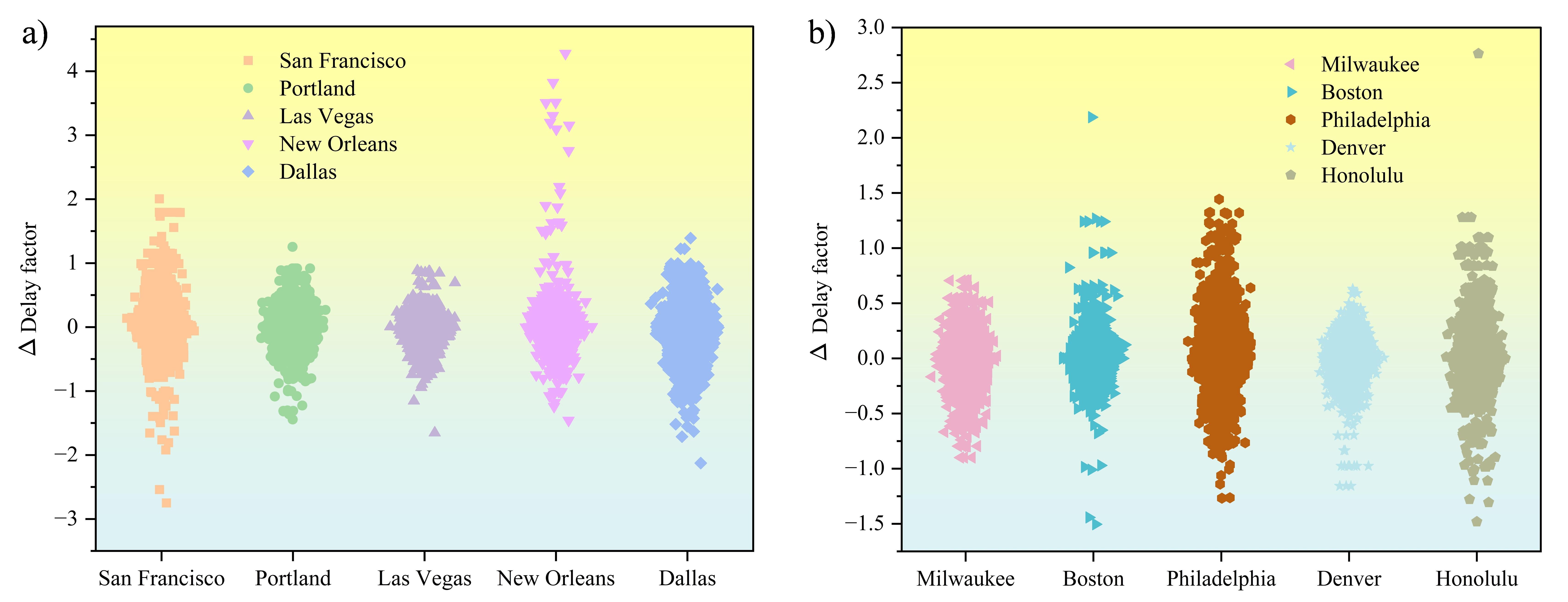}
    \caption{Difference in delay factors for 10 U.S. cities. a) Difference in delay factors for San Francisco, Portland, Las Vegas, New Orleans, and Dallas. b) Difference in delay factors for Milwaukee, Boston, Philadelphia, Denver, and Honolulu.}
    \label{Difference in Delay Factor}
\end{figure}

\section{Conclusion and Future Work}

This study constructs a modeling framework integrating MUE assignment with urban heterogeneity analysis for mixed-traffic scenarios involving GVs and EVs. We apply the bi-conjugate Frank-Wolfe algorithm to the GV-EV MUE problem and prove convergence under our specific formulation. Using actual road networks and commuting demand data from 10 representative U.S. cities, it reveals the congestion reshaping mechanisms across different cities under varying EV penetration levels. The research results show that: (i) EV penetration reduces congestion in most cities, with benefits ranging from 2.27\% (San Francisco) to 10.78\% (New Orleans). Cities with constrained topologies (Type III) show limited response regardless of penetration level, suggesting that congestion benefits are network-structure dependent. (ii) EVs displace GVs through a sequential route displacement mechanism: benefits emerge immediately as EVs capture cost-advantaged routes, accelerate through the transition zone ($R_e \in [0, 0.5]$) where cascading effects trigger network-wide flow redistribution, and eventually diminish in the plateau phase ($R_e > 0.5$) as fewer GV routes remain available for displacement. (iii) The expansion of EV penetration does not alleviate traffic congestion in a uniform manner. Instead, it typically reshapes congestion patterns (as defined in Definition \ref{def:congestion_pattern}) by concentrating traffic flow onto a few expressway/highway corridors, resulting in distinct traffic evolution trends across different cities: some cities experience increased congestion on main corridors but significantly reduce overall travel time, while others undergo localized bottleneck restructuring, and still others see only minor changes. 

This paper can also provide policy recommendations for sustainable urban transportation planning, design, and management. Specifically: (i) Implement EV promotion and planning strategies aligned with urban structure. For Type I/II cities with network redundancy, targeting the transition zone ($R_e < 0.5$) yields maximum policy impact. For Type III cities with topological constraints, EV promotion alone offers limited congestion relief; complementary infrastructure investments (new crossings, capacity expansion at bottlenecks) may be necessary prerequisites. (ii) Advance infrastructure upgrades centered on capacity optimization and intelligent guidance. On high-traffic roads, measures such as dynamic lane management, reduction of roadside parking, and optimized ramp design can be deployed. Concurrently, users should be provided with real-time, reliable “smooth-flow route” guidance. (iii) Establish a data-driven dynamic congestion management and collaborative governance mechanism. The platform for short-term congestion forecasting and early warning should be established, with dynamic congestion pricing or incentive measures applied by time slot and road segment. Moreover, EV penetration policies should be coordinated with public transportation upgrades, slow-moving system improvements, and work-residence balance planning.

Although this work has made some progress in characterizing the reshaping of urban congestion patterns against the backdrop of EV penetration, certain limitations remain. Specifically: (i) While existing models lack explanatory power for the specific GV-EV cost differential mechanism, we acknowledge that our model also abstracts away certain factors such as charging behavior and range anxiety. (ii) Our city selection prioritizes geographic and topological diversity within data availability constraints. However, the sample excludes Midwest agricultural regions, small cities ($<$100k population). (iii) Our static peak-hour analysis abstracts away temporal dynamics including intraday demand fluctuation, weekday/weekend variation, and seasonal effects. Additionally, en-route EV charging delays are not modeled, an important factor for longer commutes. (iv) Our analysis treats $R_e$ as exogenous. In reality, EV adoption is endogenous: congested cities may adopt stronger EV incentives. This simultaneity does not affect our comparative static equilibrium analysis, but does mean our results should not be interpreted as forecasting dynamic adoption paths. Overall, these limitations will be explored in depth in future work.

\section*{Declaration of Interest}
The authors declare no conflict of interest.

\section*{Data Availability }
The processed network data, OD matrices, and implementation code will be made available upon publication. The Frank-Wolfe algorithm converges to a unique equilibrium due to strict convexity (Theorem \ref{thm:existence}), ensuring reproducibility regardless of initialization.

\section*{Author Contributions}
Xiaohan Xu: Conceptualization, Data Curation, Formal Analysis, Methodology, Validation, Visualization, Writing - Original Draft, Revision; Wei Ma: Data Curation, Formal Analysis, Methodology, Validation, Visualization, Revision;
Zhiheng Shi: Formal Analysis, Methodology, Writing - Review \& Editing, Revision ; Xiaotong Xu: Methodology, Writing - Review \& Editing, Revision; Bin He: Supervision, Writing - Original Draft; Kairui Feng: Methodology, Resources, Supervision, Writing - Original Draft, Revision.

\section*{Acknowledgements} 
 Kairui Feng is supported by Shanghai Municipal Science and Technology Commission Explorers Program (24TS1401600) and Xiaomi Foundation. Kairui Feng and Bin He are supported by the National Natural Science Foundation of China (Grant No. 62088101). This support is gratefully acknowledged.

\appendix

\section{Parameter Settings}\label{Parameter_Settings}
Model parameter settings are shown in Tables \ref{tab:road_capacity}, \ref{tab:cost_params}, and \ref{tab:bpr_price_params}.

\section{VOC Difference Distributions for 10 U.S. Cities}\label{VOC_Difference_Distributions}

VOC difference distributions for 10 U.S. cities are shown in Fig. \ref{fig:VOC_10cities}.

\renewcommand{\thetable}{A\arabic{table}}
\setcounter{table}{0}
\newpage
\begin{table}[!ht]
\centering
\caption{Capacity ($veh/h$) and free flow speed ($km/h$) for each type of roads in 10 U.S. cities.}
\fontsize{9}{14}\selectfont
\begin{tabularx}{\textwidth}{l l *{6}{>{\raggedright\arraybackslash}X}} 
\toprule
\multirow{2}{*}{\textbf{No.}} & \multirow{2}{*}{\textbf{City}} & \multicolumn{3}{l}{\textbf{Capacity}} & \multicolumn{3}{l}{\textbf{Free Flow Speed}} \\
\cmidrule(lr){3-5} \cmidrule(lr){6-8}
 & & \textbf{Expressways} & \textbf{Highways} & \textbf{Local Roads} & \textbf{Expressways} & \textbf{Highways} & \textbf{Local Roads} \\
\midrule
1 & San Francisco & 2,200 & 2,000 & 1,400 & 90 & 60 & 40 \\
2 & Portland & 2,200 & 2,000 & 1,400 & 90 & 65 & 45 \\
3 & Las Vegas & 2,200 & 2,000 & 1,400 & 90 & 60 & 40 \\
4 & New Orleans & 2,200 & 2,000 & 1,400 & 90 & 60 & 40 \\
5 & Dallas & 2,200 & 2,000 & 1,400 & 90 & 65 & 45 \\
6 & Milwaukee & 2,200 & 2,000 & 1,400 & 90 & 65 & 45 \\
7 & Boston & 2,200 & 2,000 & 1,300 & 60 & 45 & 30 \\
8 & Philadelphia & 2,000 & 1,800 & 1,200 & 90 & 60 & 30 \\
9 & Denver & 2,000 & 1,800 & 1,300 & 90 & 60 & 35 \\
10 & Honolulu & 2,200 & 2,000 & 1,400 & 90 & 60 & 40 \\
\bottomrule
\end{tabularx}
\label{tab:road_capacity}
\end{table}

\begin{table}[!ht]
    \centering
    \caption{Parameter settings for GV and EV cost functions. (All cost components are in \$/mile.)}
    \label{tab:cost_params}
    \fontsize{9}{12}\selectfont
    \renewcommand{\arraystretch}{1.3} 
    \begin{tabular}{@{} l l @{\hspace{8pt}} l l @{\hspace{8pt}} l l @{\hspace{8pt}} l l @{\hspace{8pt}} l l @{}}
        \toprule
        \textbf{Parameter} & \textbf{Value} & 
        \textbf{Parameter} & \textbf{Value} & 
        \textbf{Parameter} & \textbf{Value} & 
        \textbf{Parameter} & \textbf{Value} & 
        \textbf{Parameter} & \textbf{Value} \\
        \midrule
        $C_{gv}^{maint}$ & 0.101 & $C_{gv}^{fix}$ & 0.18 & $C_{gv}^{dep}$ & 0.25 & 
        $C_{gv}^{ins}$ & 0.08 & $C_{gv}^{add}$ & 0.02 \\
        $C_{gv}^{env}$ & 0.055 & $r_{dis}$ & 1.609 & $MPG_{gv}$ & 25 & 
        $C_{ev}^{maint}$ & 0.064 & $C_{ev}^{fix}$ & 0.08 \\
        $C_{ev}^{dep}$ & 0.10 & $C_{ev}^{ins}$ & 0.08 & $C_{ev}^{add}$ & 0.01 & 
        $C_{ev}^{env}$ & 0.01 & $C_{ev}^{sub}$ & -0.075 \\
        $\kappa_{gal}$ & 33.7 & $MPGe_{ev}$ & 110 & & & & & & \\
        \bottomrule
    \end{tabular}
\end{table}

\begin{table}[!ht]
    \centering
    \caption{Parameters for energy prices and the BPR function in 10 U.S. cities.}
    \fontsize{9}{14}\selectfont
    \begin{tabular}{p{1cm}p{2.2cm}p{1.2cm}p{1.2cm}p{1.2cm}p{1.2cm}}
        \toprule
        \textbf{No.} & \textbf{City} & \textbf{$\alpha$} & \textbf{$\beta$} & \textbf{$P_{gas}$} & \textbf{$P_{ele}$} \\
        \midrule
        1 & San Francisco & 0.5 & 1.8 & 5.100 & 0.1534 \\
        2 & Portland & 0.5 & 1.2 & 4.057 & 0.0918 \\
        3 & Las Vegas & 0.5 & 1.3 & 4.354 & 0.1215 \\
        4 & New Orleans & 0.6 & 1.8 & 3.103 & 0.0776 \\
        5 & Dallas & 0.6 & 1.3 & 3.132 & 0.1098 \\
        6 & Milwaukee & 0.5 & 1.5 & 3.246 & 0.1395 \\
        7 & Boston & 0.25 & 2 & 3.415 & 0.1491 \\
        8 & Philadelphia & 0.5 & 1.2 & 3.487 & 0.1290 \\
        9 & Denver & 0.5 & 1.5 & 3.111 & 0.1161 \\
        10 & Honolulu & 0.5 & 1.5 & 4.774 & 0.3510 \\
        \bottomrule
    \end{tabular}
    \label{tab:bpr_price_params}
\end{table}

\clearpage
\setcounter{figure}{0}
\renewcommand{\thefigure}{A\arabic{figure}}

\begin{figure}[H]
    \centering

    % First subfigure (top)
    \begin{subfigure}{\textwidth}
        \centering
        \includegraphics[height=0.47\textheight,keepaspectratio]{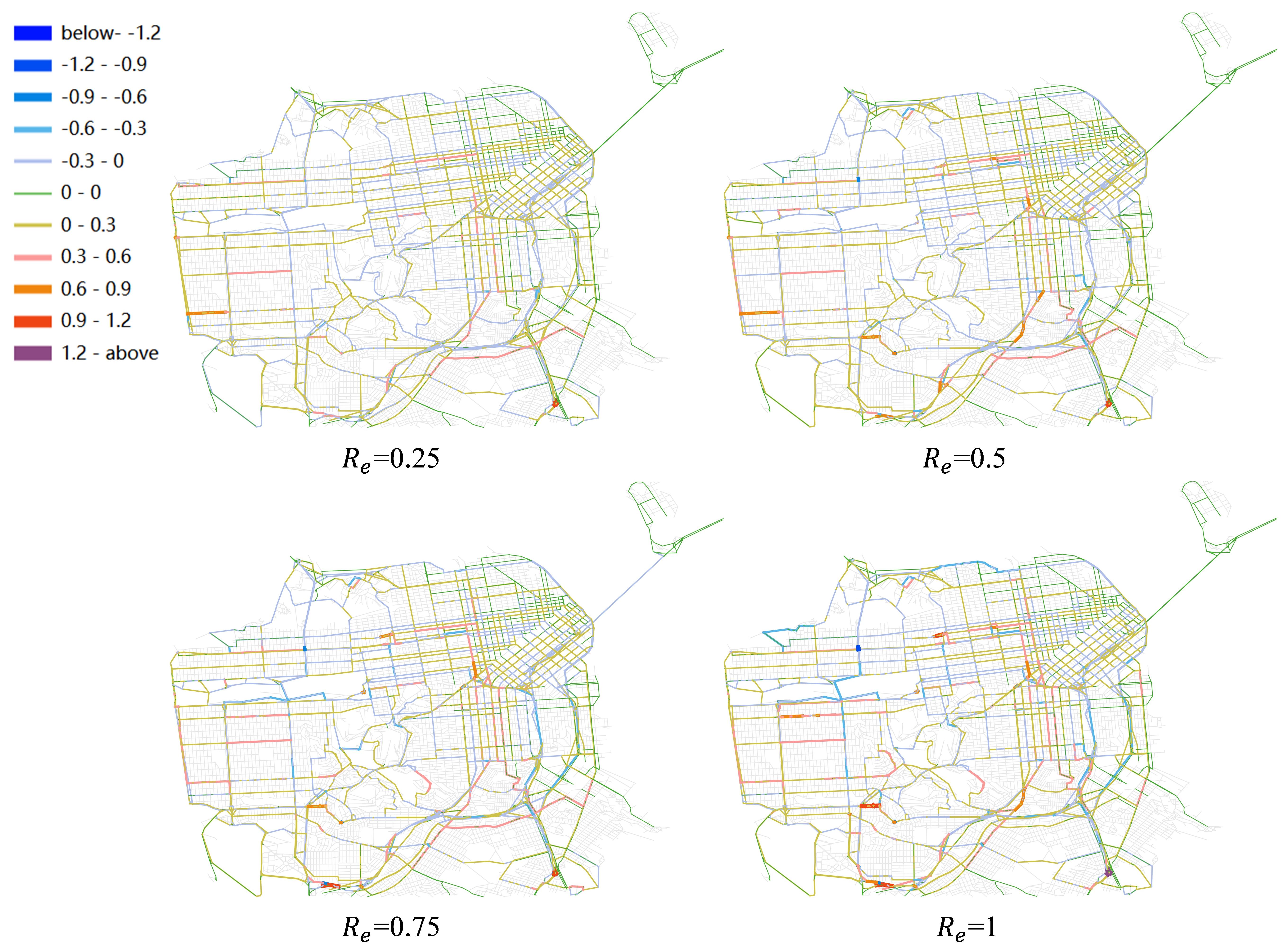}
        \caption{VOC difference distributions for San Francisco.}
        \label{fig:d_SanFrancisco}
    \end{subfigure}

    \vfill

    % Second subfigure (bottom)
    \begin{subfigure}{\textwidth}
        \centering
        \includegraphics[height=0.43\textheight,keepaspectratio]{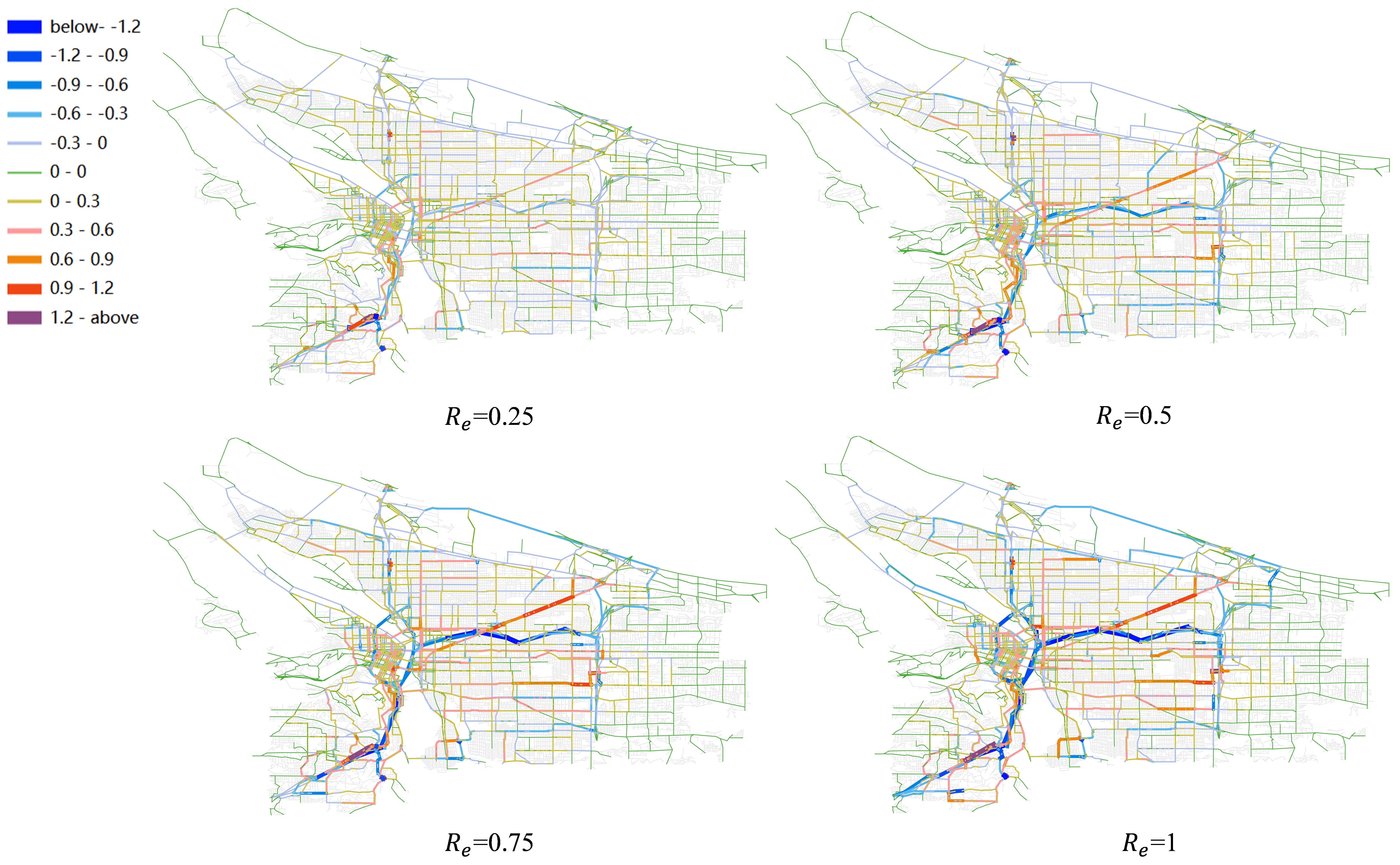}
        \caption{VOC difference distributions for Portland.}
        \label{fig:d_Portland}
    \end{subfigure}

    \caption{VOC difference distributions for 10 U.S. cities.}
    \label{fig:VOC_10cities}
\end{figure}

\begin{figure}[H]\ContinuedFloat
    \centering

    % Third subfigure (top)
    \begin{subfigure}{\textwidth}
        \centering
        \includegraphics[height=0.44\textheight,keepaspectratio]{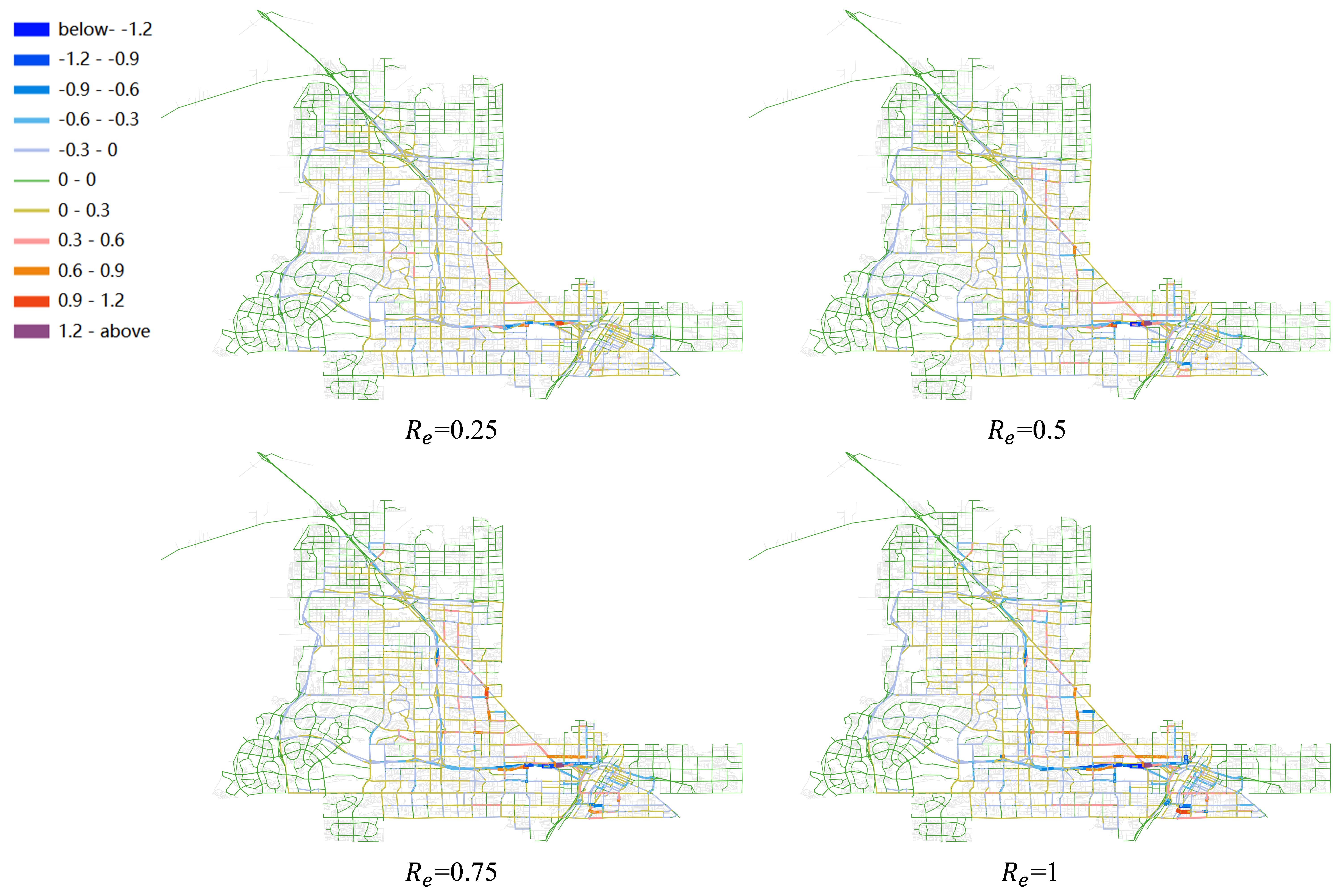}
        \caption{VOC difference distributions for Las Vegas.}
        \label{fig:d_LasVegas}
    \end{subfigure}

    \vfill

    % Fourth subfigure (bottom)
    \begin{subfigure}{\textwidth}
        \centering
        \includegraphics[height=0.46\textheight,keepaspectratio]{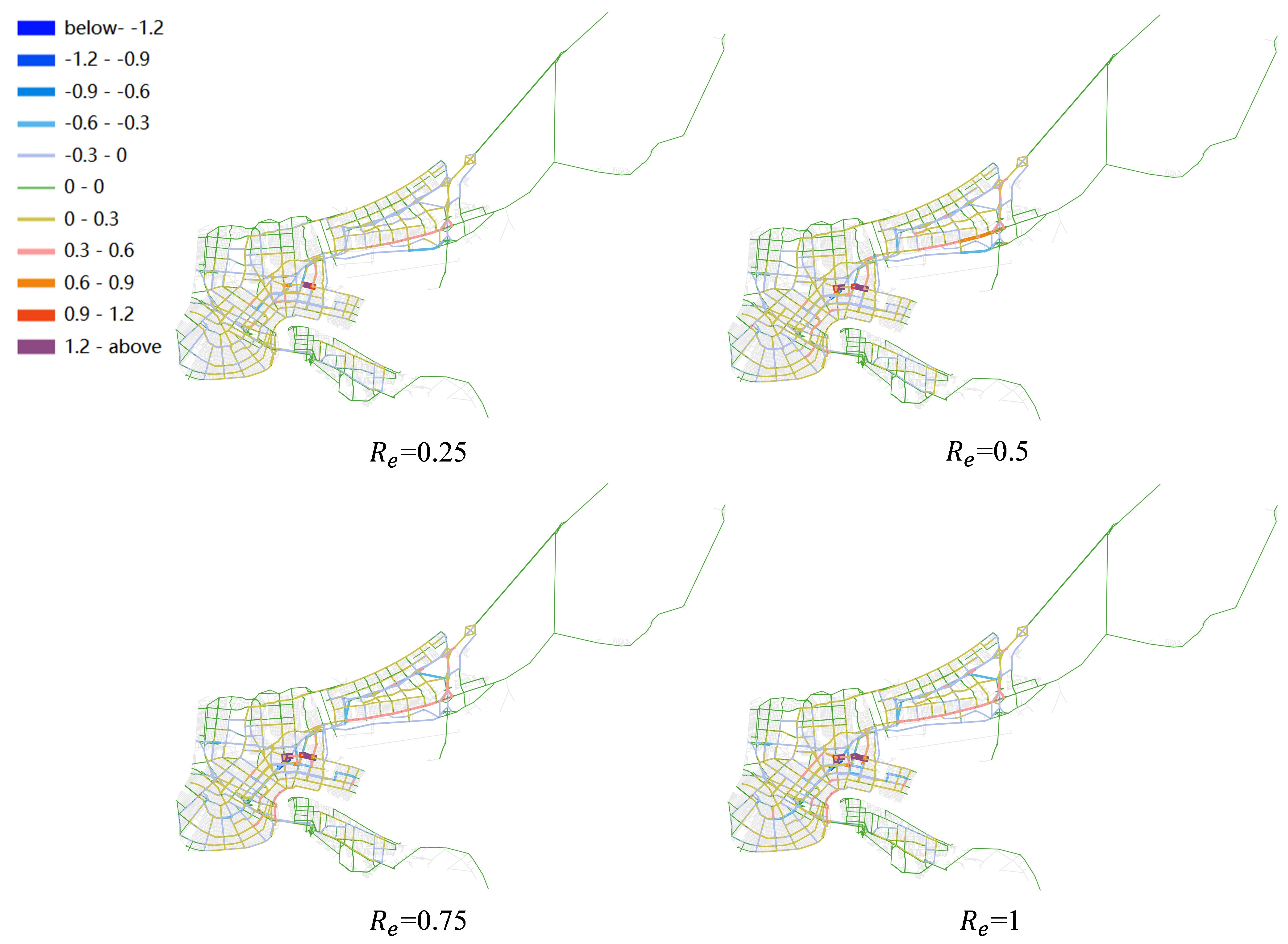}
        \caption{VOC difference distributions for New Orleans.}
        \label{fig:d_NewOrleans}
    \end{subfigure}

    \caption{VOC difference distributions for 10 U.S. cities (continued).}
\end{figure}

\begin{figure}[H]\ContinuedFloat
    \centering

    % Fifth subfigure (top)
    \begin{subfigure}{\textwidth}
        \centering
        \includegraphics[height=0.44\textheight,keepaspectratio]{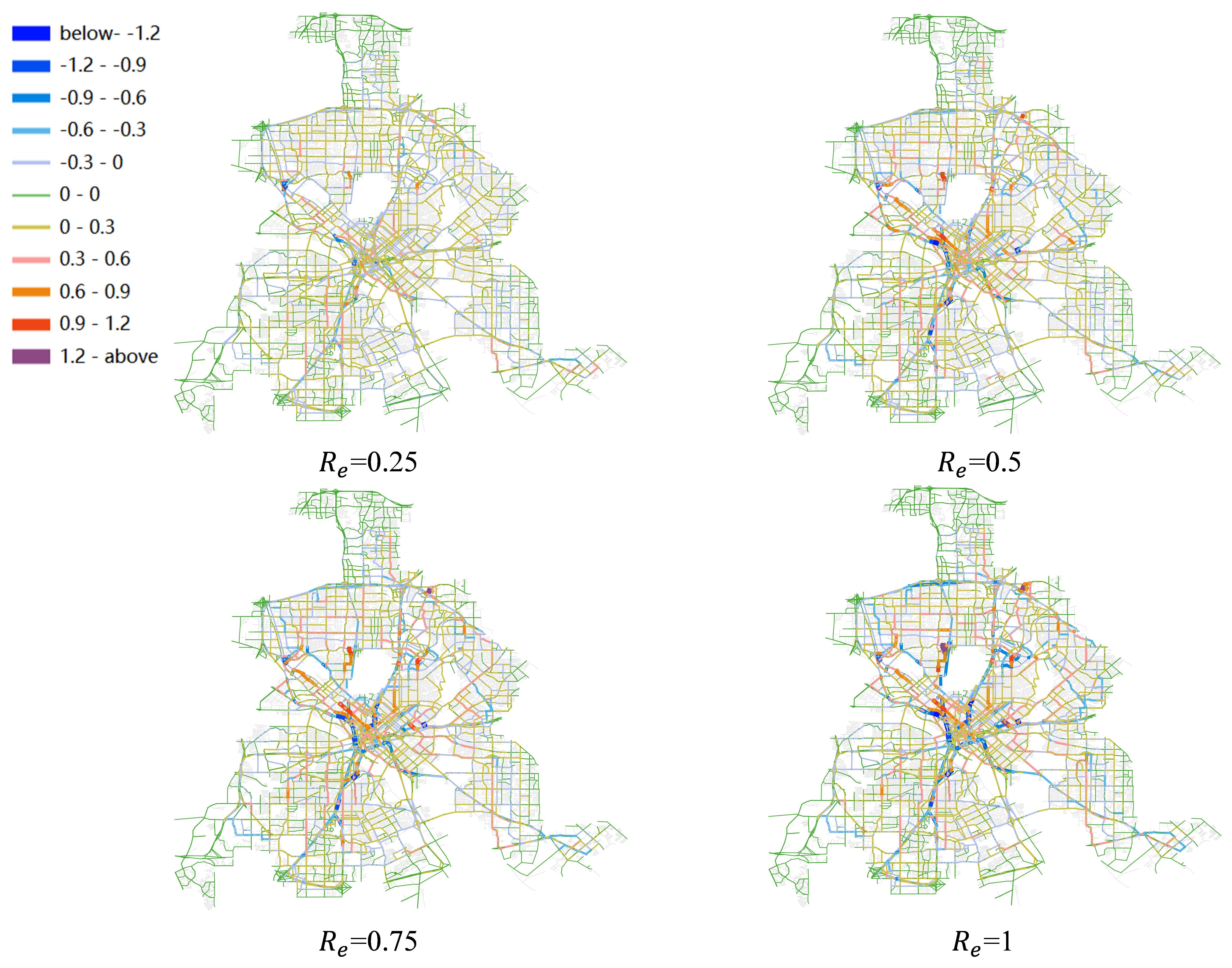}
        \caption{VOC difference distributions for Dallas.}
        \label{fig:d_Dallas}
    \end{subfigure}

    \vfill

    % Sixth subfigure (bottom)
    \begin{subfigure}{\textwidth}
        \centering
        \includegraphics[height=0.47\textheight,keepaspectratio]{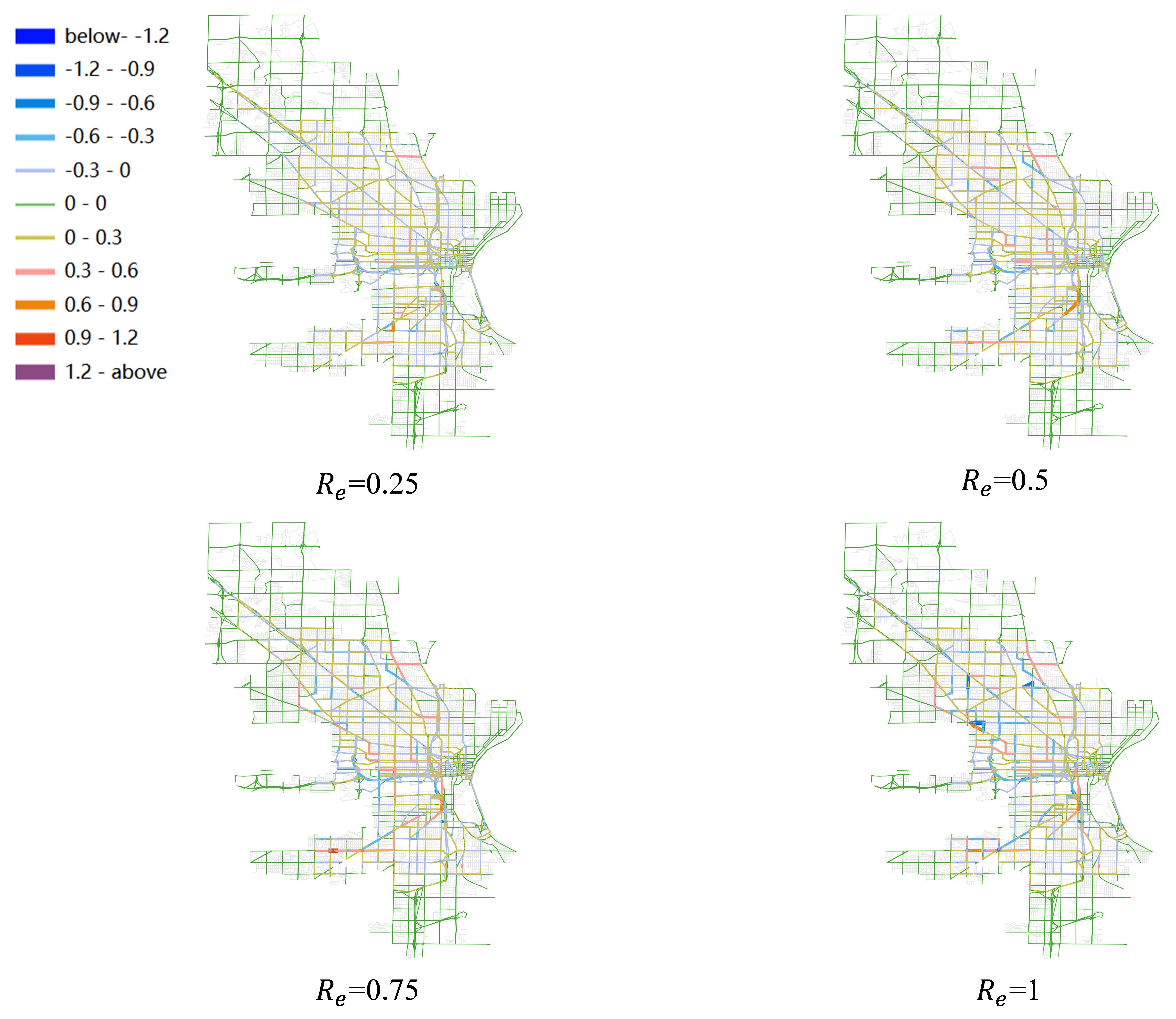}
        \caption{VOC difference distributions for Milwaukee.}
        \label{fig:d_Milwaukee}
    \end{subfigure}

    \caption{VOC difference distributions for 10 U.S. cities (continued).}
\end{figure}

\begin{figure}[H]\ContinuedFloat
    \centering

    % Seventh subfigure (top)
    \begin{subfigure}{\textwidth}
        \centering
        \includegraphics[height=0.45\textheight,keepaspectratio]{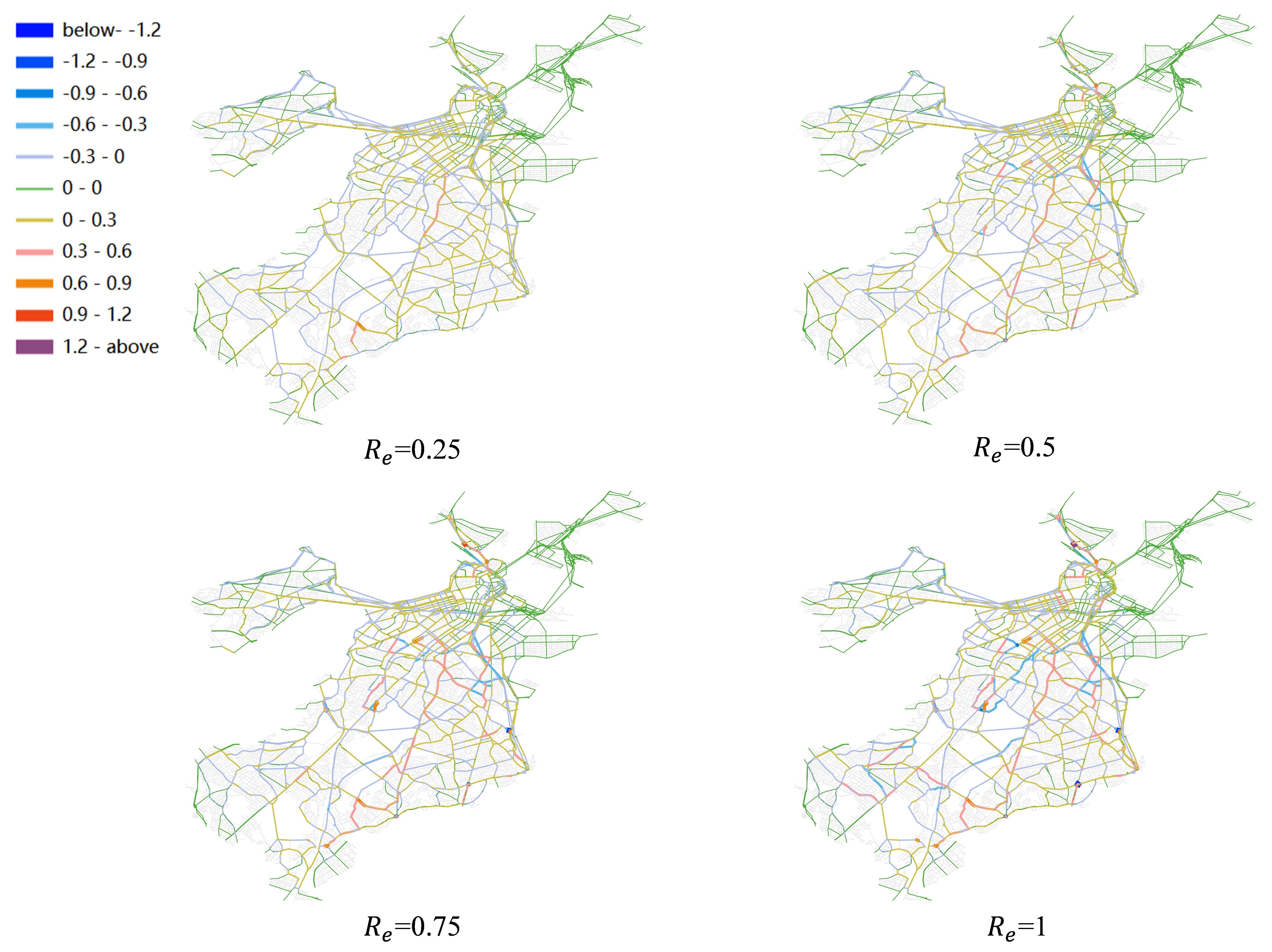}
        \caption{VOC difference distributions for Boston.}
        \label{fig:d_Boston}
    \end{subfigure}

    \vfill

    % Eighth subfigure (bottom)
    \begin{subfigure}{\textwidth}
        \centering
        \includegraphics[height=0.45\textheight,keepaspectratio]{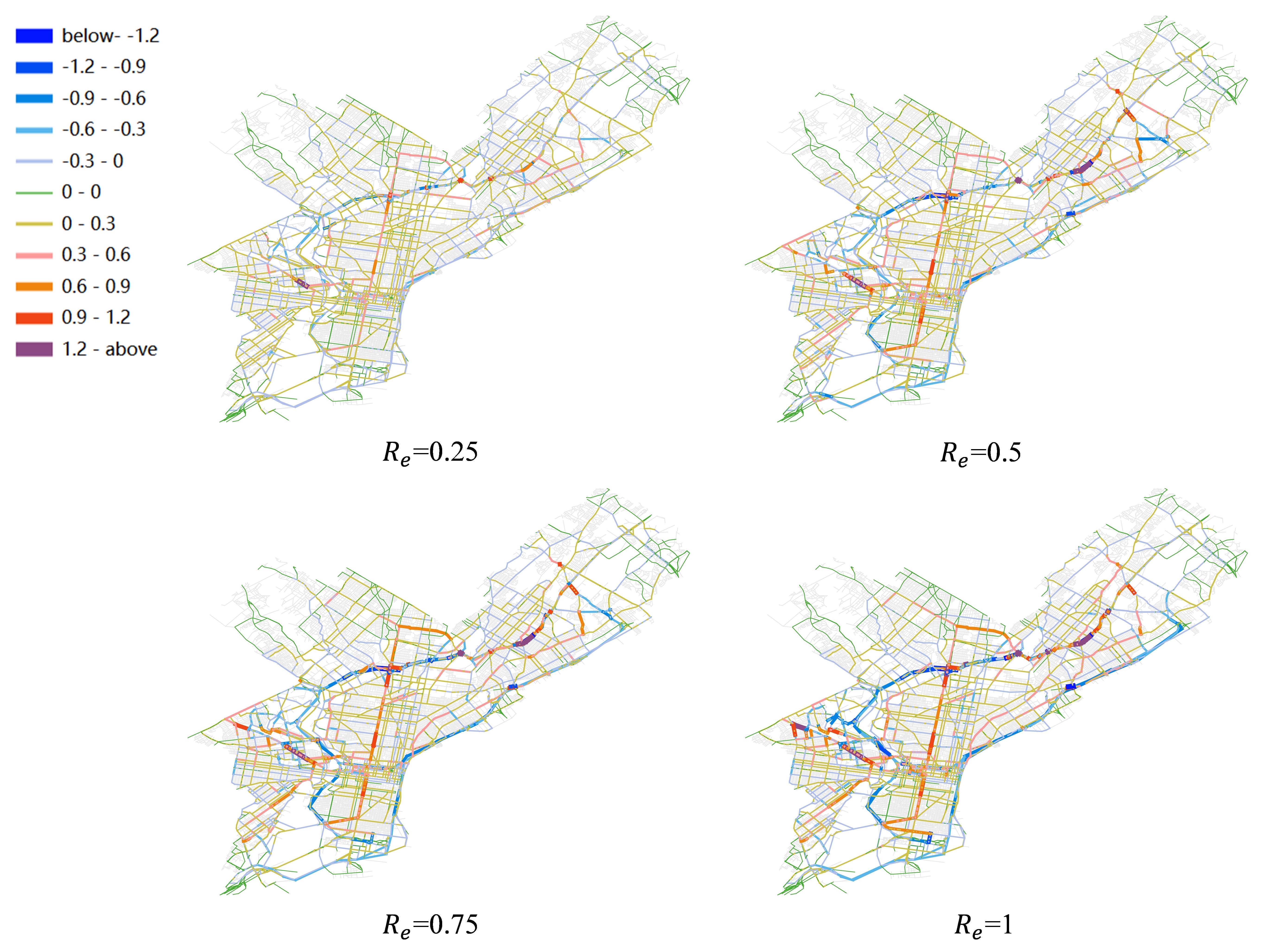}
        \caption{VOC difference distributions for Philadelphia.}
        \label{fig:d_Philadelphia}
    \end{subfigure}

    \caption{VOC difference distributions for 10 U.S. cities (continued).}
\end{figure}

\begin{figure}[H]\ContinuedFloat
    \centering

    % Ninth subfigure (top)
    \begin{subfigure}{\textwidth}
        \centering
        \includegraphics[height=0.44\textheight,keepaspectratio]{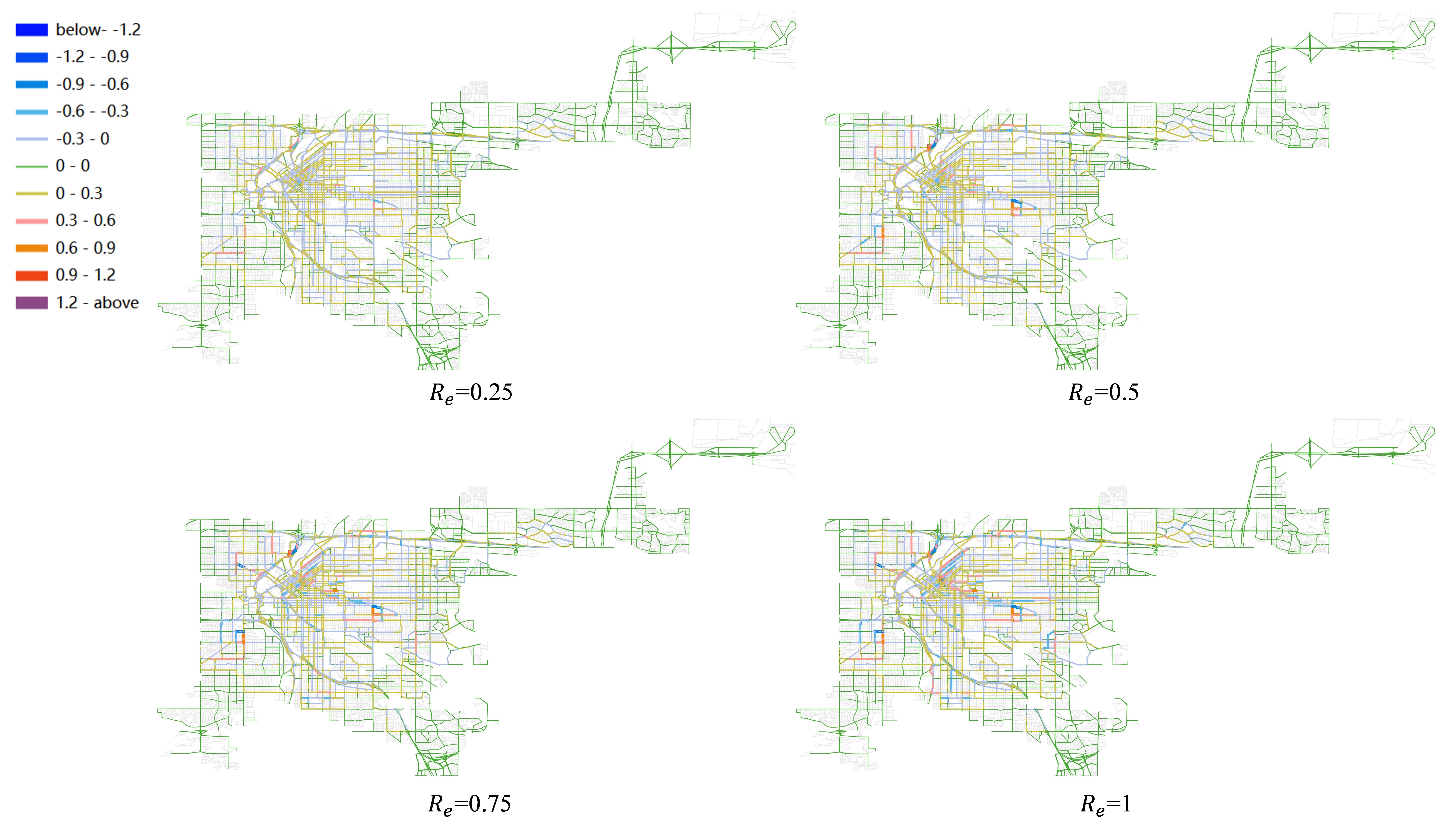}
        \caption{VOC difference distributions for Denver.}
        \label{fig:d_Denver}
    \end{subfigure}

    \vfill

    % Tenth subfigure (bottom)
    \begin{subfigure}{\textwidth}
        \centering
         \includegraphics[width=\textwidth]{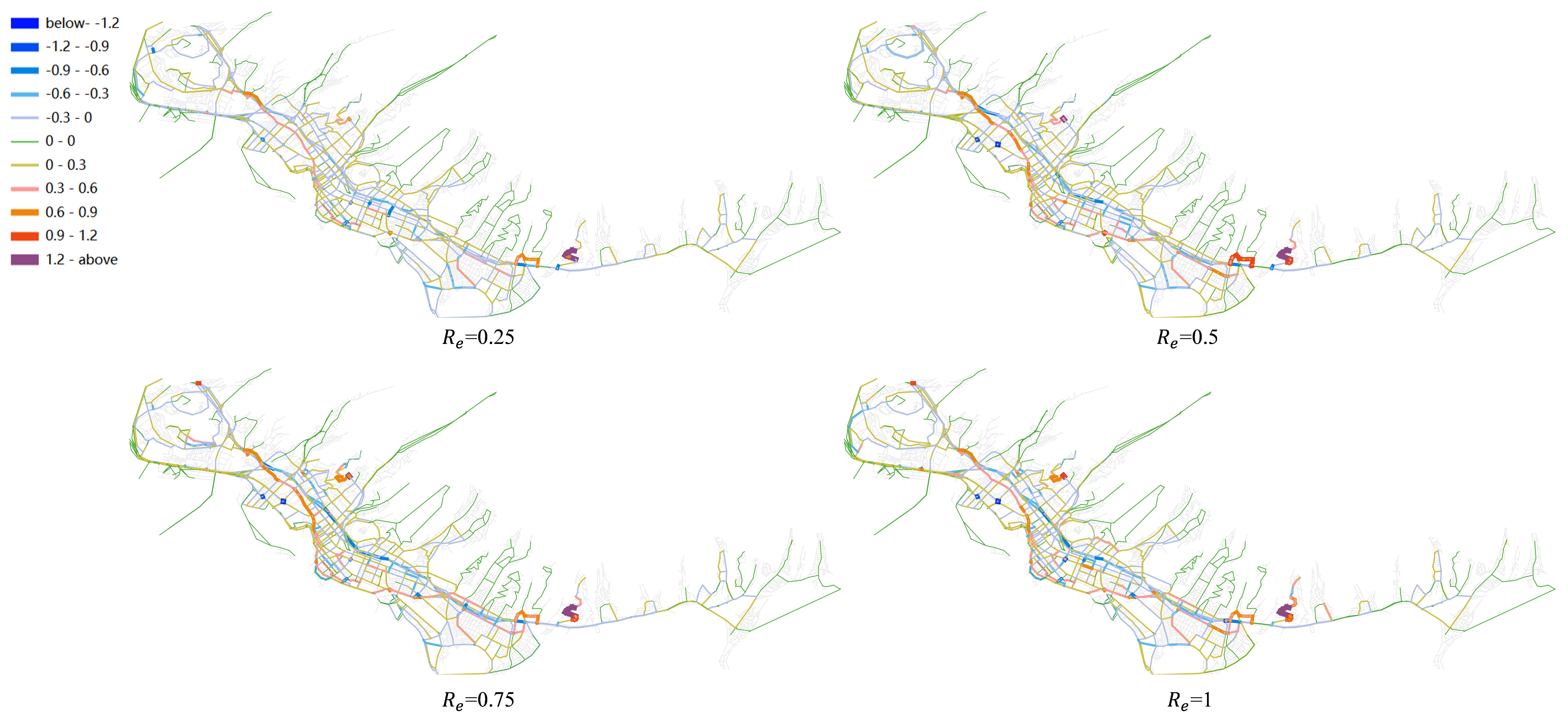}
        \caption{VOC difference distributions for Honolulu.}
        \label{fig:d_Honolulu}
    \end{subfigure}

    \caption{VOC difference distributions for 10 U.S. cities (continued).}
    
\end{figure}

 Note: (a) Color scale is unified across cities for direct comparison. Cities with smaller VOC changes (e.g., Denver) show muted colors, accurately reflecting smaller magnitudes. (b) Spatial patterns in sparse networks (Honolulu) should be interpreted cautiously due to limited link coverage. (c) Edge links may show artifacts from OD truncation at study area boundaries; core network patterns are more reliable.

%% The Appendices part is started with the command \appendix;
%% appendix sections are then done as normal sections
%\appendix

%\section{Section in Appendix}

%Sample text. Sample text. Sample text. Sample text. Sample text. Sample text. 
%Sample text. Sample text. Sample text. Sample text. Sample text. Sample text. 
%Sample text. 

%% References
%%
%% Following citation commands can be used in the body text:
%% Usage of \cite is as follows:
%%   \cite{key}         ==>>  [#]
%%   \cite[chap. 2]{key} ==>> [#, chap. 2]
%%

%% References with bibTeX database:

%\bibliographystyle{elsarticle-num}
% \bibliographystyle{elsarticle-harv}
% \bibliographystyle{elsarticle-num-names}
% \bibliographystyle{model1a-num-names}
% \bibliographystyle{model1b-num-names}
% \bibliographystyle{model1c-num-names}
% \bibliographystyle{model1-num-names}
% \bibliographystyle{model2-names}
% \bibliographystyle{model3a-num-names}
% \bibliographystyle{model3-num-names}
% \bibliographystyle{model4-names}
% \bibliographystyle{model5-names}
% \bibliographystyle{model6-num-names}

\bibliography{sample}

\end{document}